\documentclass[a4paper,12pt]{article}

\usepackage{amsmath,amsthm,amssymb,indentfirst,graphicx,caption,subcaption,setspace}
\usepackage[english]{babel}
\usepackage[big]{layaureo}
\usepackage{mathtools}
\usepackage{booktabs}
\theoremstyle{plain}
\usepackage{bbm}
\usepackage{appendix}
\newtheorem{proposition}{Proposition}
\newtheorem{remark}{Remark}
\newtheorem{lemma}{Lemma}

\newtheorem{definition}{Definition}

\usepackage[usenames]{color}
\usepackage{tikz}
\definecolor{red}{rgb}{1.0,0.0,0.0}

\definecolor{blu}{rgb}{0.0,0.0,1.0}

\definecolor{gre}{rgb}{0.03,0.50,0.03}

\definecolor{darkviolet}{rgb}{0.58, 0.0, 0.83}

\title{Habits and demand changes after COVID-19}
\author{Mauro Bambi\thanks{Corresponding author. Economics and Finance Department, Durham University, Durham, UK. email: mauro.bambi@durham.ac.uk}, \and Daria Ghilli,\thanks{Economics and Finance Department, LUISS Guido Carli University, Rome, IT.} \and Fausto Gozzi,\footnotemark[2] \and Marta Leocata\footnotemark[2]}
\date{February 2022}

\begin{document}
	\maketitle
	\begin{abstract}
In this paper, we investigate how a transitory lockdown of a sector of the economy may have changed our habits and, therefore, altered the goods' demand permanently. In a two-sector infinite horizon economy, we show that the demand of the goods produced by the sector closed during the lockdown could shrink or expand with respect to their pre-pandemic level depending on the lockdown's duration and the habits' strength. We also show that the end of a lockdown may be characterized by a price surge due to a combination of strong demand of both goods and rigidities in production.

%In this paper, we investigate how the lockdown of a sector of the economy may have changed our habits and, therefore, altered the goods demand even after its end. In a two-sector infinite horizon economy, we show that the demand of the goods produced by the sector closed during the lockdown could shrink or expand with respect to their pre-pandemic level depending on  the lockdown's duration and the relative strength of the satiation effect and the substitutability effect. We also investigate the price dynamics and provide conditions under which the prices will be higher due to a combination of changes in goods demand and rigidities in production.

\medskip

	\emph{JEL Code: E21, E30, E60, I18.}
	
	\emph{Keywords: Lockdown, Habits, Pent-up Demand, 2-Sector Economy.}
\end{abstract}

\newpage

\section{Introduction}

Habits have been largely recognized by the psychology and economics literature to influence significantly our consumption behavior. The way habits form and change over time depend among other things on the environment. For example, it is well-documented that people addicted to alcohol or other substances receive cues that trigger further abuse of these substances from the location where they consumed them in the past. Therefore, a change in the environment or context may alter significantly the habits either reinforcing or weakening them (e.g. Danner et al. \cite{Danneretal}).

The COVID-19 epidemics and, specifically, the social distancing and lockdowns have represented a drastic change of context for everybody. Being forced for long periods of time to stay at home and limit the physical interactions with other people have often been accompanied by changes in our consumption behavior. A large literature has already emerged about the effect of these restrictive measures on specific habits. For example, there are contributions on changes in eating/dietary habits and lifestyle during the lockdown (Dixit et al. \cite{Dixit}, Di Renzo et al. \cite{DiRenzoetal} and Sidor and Rzymski \cite{SidorRzymski}, among others). More generally, there is evidence that some habits have been reinforced, for example online shopping or using streaming services, while others weakened. Therefore, an open question is whether consumers will go back to their old habits such as shopping in the store or going to a cinema or the new habits will somehow replace the old ones (Sheth \cite{Sheth}).

From an economic perspective this would mean that a lockdown of a sector of production could change the consumers' habits and, therefore, alter their demand of goods so much so that the firms in the sector affected by the lockdown could find no longer profitable to remain active even after the end of the pandemic or, alternatively, they could have an incentive to expand their production to respond to a strong demand.

The existing literature on the macroeconomic consequences of a lockdown has investigated several interesting issues (e.g. Alvarez et al. \cite{Lippi}, Caulkins et al. \cite{Caukinsetal},  Giannitsarou et al. \cite{Giannitsaru}, Guerrieri et al. \cite{Guerrieri}).  Among these, Guerrieri et al. \cite{Guerrieri} is probably the closest in scope to our contribution as their objective is to show how and under which conditions a supply shock may lead to a demand-deficient recession. Similar to their investigation we focus on the effects of a lockdown on goods' demand in a multi-sector economy.

However, as far as we know, there is no contribution in the literature investigating how a change in the habits due to a lockdown may alter the consumption behavior after the pandemic. Could it be that the change in habits from old ones to new ones may lead an entire sector of the economy to disappear? If yes, how so? Could a government intervention avoid it? Could it be, instead, that the demand for goods not produced during the lockdown will expand after the pandemics? Could a change in the consumption composition push the good prices upward? Could prices rise above their pre-lockdown levels once the lockdown is over? The objective of this paper is to fill this gap in the literature and give an answer to these questions.

In order to do so, we study the demand and supply dynamics of a 2-sector infinite horizon economy. Each sector produces a differentiated good using a decreasing returns to scale technology with labor being the only input of production.  Sector 2 can be inactive either because a lockdown is imposed or because it is not profitable to produce. The latter may happen when the prevailing price of good 2 is too low to cover a fixed cost of production. Households inelastically supply an exogenously  given share of their labor endowment to each sector. Their labor income is then used to buy the two goods and to save through a foreign-asset in positive net-supply paying an exogenously given return. The foreign asset is denominated in term of good 1 and we also assume that good 2 is not tradeable
otherwise it could be bought from abroad even during a lockdown. In addition, sector 2 represents, in our mind, activities such as cinemas, restaurants  or stores whose services are hardly internationally tradeable.

As usual in habit formation models, households' utility is affected not only by consumption but also by habits. For reasons of clarity, we focus on a case where habits are formed only on good 1 consumption, although, as it will be clarified later, similar results can be obtained with habits formed on both goods, provided that some assumptions on the initial habits and on their dynamics are respected.\footnote{In the benchmark model, we also assume that there is no Inada condition on the marginal utility of the good not produced during the lockdown. However, the Inada condition can be easily restored as shown in Section \ref{Sec:provision} where we propose a variation of the benchmark model with a minimum provision of good 2 during the lockdown and zero fixed costs of production.}

Both sectors of production find profitable to be active before the pandemics. Then a temporary lockdown is imposed. As a result, Sector 2 shuts down and, consistently with the empirical evidence (e.g. Barrero et al. \cite{Barreroetal}), an exogenously given share of labor allocated to this sector is re-allocated to the other sector.  In the first model we are presenting, the length of the lockdown is unknown to the households; as such, the lockdown's end is perceived by them as an unanticipated shock.
An insight about our results can be obtained by looking at the effect of a lockdown on the good 2 market and, specifically, how this market looks like just before (BL) and just after the lockdown (AL or ALL, the latter meaning long lockdown). Figure \ref{figINTRO} illustrates the mechanisms driving our results.

\begin{center}
	\begin{figure}[h!]
		\centering %\hspace{-1cm}
		\begin{subfigure}[b]{0.7\textwidth}
			\includegraphics[width=\textwidth]{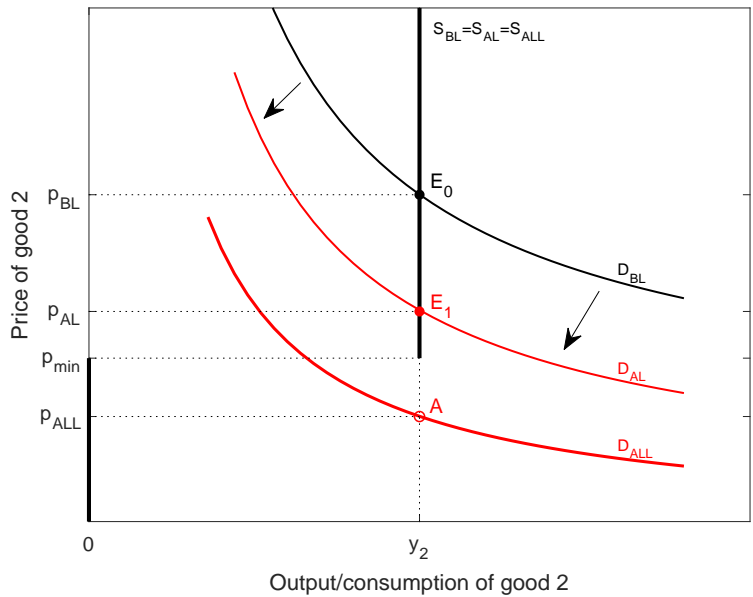}\caption{}
		\end{subfigure}
		\hfill	
		\begin{subfigure}[b]{0.7\textwidth}
			\includegraphics[width=\textwidth]{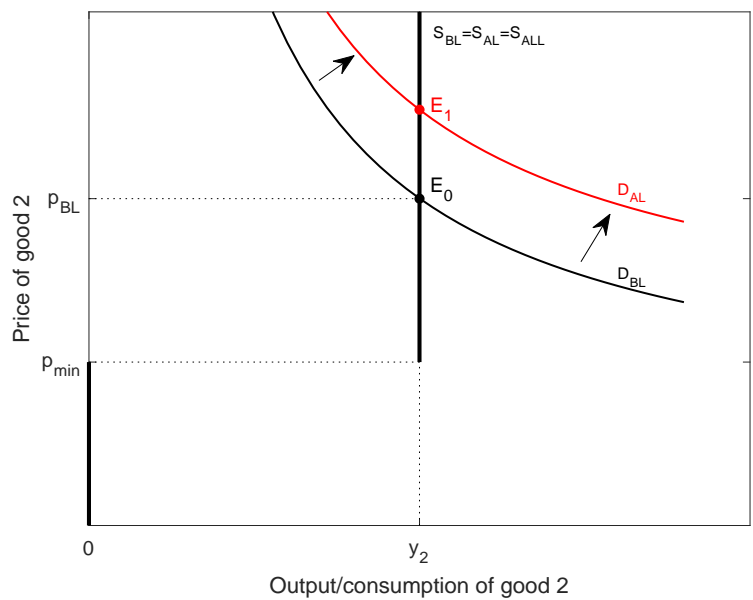}\caption{   }
		\end{subfigure}
		\caption{Lockdown effect on sector 2 depending on the relative strength of the satiation and substitability effect. (a) Satiation $>$ Substitutability. (b) Substitutability $>$ Satiation}\label{figINTRO}
	\end{figure}
\end{center}

In our benchmark model, the good 2 supply curve is vertical, since households inelastically supply an exogenously given share of their labor endowment to the sector, and it is the same before and after the lockdown.\footnote{This assumption implies a full labor readjustment immediately after the lockdown. In Section \ref{Sec:LaborChange} we propose a variation of the model to account for the change in the labor composition that we are seeing right now across developed countries.}  Given this simple pattern of labor allocation and the presence of a fixed cost in production, the supply curve is the truncated vertical thick black line $S_{BL}=S_{AL}=S_{ALL}$ drawn in both graphs of Figure \ref{figINTRO}.

%\vspace{-0.5cm}

On the other hand, the good 2's inverse demand is a downward sloping curve with the relative price of good 2 depending on the consumption of the two goods and the habits. Considering the case of substitutes goods (e.g. cinema and streaming services) any increase in the consumption of good 1 increases the price of good 2 shifting the inverse demand curve up (substitutability effect).

On the other hand, an increase in the habits formed on good 1 can have an ambivalent effect on the price of good 2. In particular, it may induce satiation or addiction on good 2 consumption. In the first case an increase in the habits shifts down the inverse demand curve (satiation effect) while in the second case the opposite happens (addiction effect).
Consider now the case that habits formed over good 1 consumption induce a satiation effect on good 2 consumption. The remaining question is which between the satiation effect and the substitutability effect prevails. Depending on the answer, the inverse demand curve after the lockdown, $D_{AL}$, will be higher or lower than before the lockdown, $D_{BL}$.

We prove that the magnitude of these effects depends on several factors. The substitutability effect may dominate or be dominated by the satiation effect depending on i) the habits speed of adjustment to changes in consumption, ii) the habits speed of convergence to their steady state value, iii) the cross-derivatives of utility between good 2 and good 1 consumption  and between good 2 consumption and the habits. The lockdown duration plays also an important role as the longer the lockdown will be, the larger will be the accumulation of the habits on good 1, and the stronger will be the satiation effect of the habits on good 2.

Consider now the following example to better understand the case when the satiation prevails. Suppose that during the lockdown we have reinforced an habit of binge-watching streaming movies/series (good 1); at the end of the lockdown we may be so satiated of watching movies/series that we will have a disincentive to go to the cinema (good 2). In particular, if the satiation effect dominates the substitutability effect then the inverse demand curve shifts down and the relative price of good 2, $p_{AL}$ or $p_{ALL}$ in Figure \ref{figINTRO}a, will be lower than before the lockdown, $p_{BL}$. Intuitively, this simply means that given the satiation effect we will accept to go to the cinema only if the ticket price is sufficiently low. Moreover, we prove that the price of good 2 will be lower the longer the length of the lockdown will be.

 If the lockdown is sufficiently long, the inverse demand curve could shift down so much that the relative price, $p_{ALL}$, of producing the quantity $y_2$ is below the threshold, $p_{min}$, for the firms in sector 2 to cover their fixed cost of production; for this reason sector 2 may remain inactive even after the lockdown since it is not profitable to do otherwise. Suppose, instead, that the lockdown is sufficiently short, then the inverse demand curve will shift down, for example from $D_{BL}$ to $D_{AL}$ in Figure \ref{figINTRO}a, but the relative price of good 2, $p_{AL}$, will be now higher then the price threshold, $p_{min}$, for firms in sector 2 to cover their fixed cost of production. In this case sector 2 will reopen after the lockdown with $E_1$ representing the new equilibrium.

On the other hand, if the substitutability effect dominates the satiation effect then the inverse demand curve shifts up and the new equilibrium at the end of the lockdown, $E_1$, will be characterized by a higher relative price of good 2. This case is described in Figure \ref{figINTRO}b. In this second case, we show that the presence of a sufficiently strong substitutability effect creates a pent-up demand of good 2 during the lockdown. This demand is higher than before the lockdown because the habits has accumulated faster during the lockdown with a positive effects on the demand of both goods. Once the lockdown is over, the pent-up demand will push the good 2 price up because now sector 2 is again open but rigidities in the supply side (i.e. $S_{AL}=S_{BL}$) prevent an expansion in production.\footnote{An even  stronger pent-up demand emerges if habits on good 1 induce addiction instead of satiation on good 2 consumption.} In addition, we show that also good 1 demand and consumption can be higher just after the lockdown implying a deterioration of the trade balance in the short run. As such, our model provides an explanation of the observed strong goods' demand observed across developed economies after the lockdown accompanied by an increase in goods prices.\footnote{This is also documented in the media. For example, Greg Ip has recently written in The Wall Street Journal that ``neither supply or demand by itself is increasing prices; it is an unusual combination of both''.}
	
	This mechanism provides a microfoundation for the emergence of a pent-up demand and its role in affecting the price dynamics after a lockdown induced recession. This explanation of the emergence of a pent-up demand during a lockdown clearly differs from other mechanisms found in the literature such as the one recently proposed by Beraja and Wolf \cite{BerajaWolf}, where the pent-up demand is related to the intertemporal substitution of different types of consumption (durables vs services) and, most importantly, it does not study the effect of a lockdown.

Figure \ref{figINTRO} shows a snapshot of the good 2 market just before and just after the lockdown. How will it change over time? If sector 2 will remain active just after the lockdown then over time the economy will converge back to its pre-pandemic configuration. On the other hand, if sector 2 will stay close even after the lockdown then it will remain close forever since over time the habits will increase even more and the satiation effect will become stronger and stronger. This outcome can be avoided with a policy intervention in favour of Sector 2 after the lockdown is lifted.\footnote{As it will be discussed later, the closure of Sector 2 is not necessarily Pareto optimal in our framework.} In particular, we show that any policy designed to provide incentives to the households to consume again good 2, for example through a system of vouchers as it was done by the UK government to convince people to go back to the restaurants after the first lockdown, and/or of subsidises in favour of the firms in sector 2 will help them to become profitable again. Interestingly, we show that such a policy should not be permanent but rather transitory. Intuitively, as the households will start to consume again good 2 they will reduce more and more their habits on good 1 till when the substitutability effect will dominate the satiation effect. From that point on, the government intervention becomes unnecessary.

Therefore, our model sheds light on the crucial role played by the habits in determining how the economy looks like after a lockdown with the habits characteristics affecting the direction of goods' demand changes. In fact, the same economy without habits would have not experienced any consequence after a lockdown with all the aggregate variables returning immediately to their pre-lockdown levels.\footnote{This economy without lockdown is described in the Supplementary Material.}

Several extensions are proposed. For example, in the benchmark model, labor readjusts immediately when the lockdown is over. This is contrary to what is now happening across developed economies which are experiencing large changes in the composition of labor forces, with many workers not coming back to their former jobs in the sectors closed during the lockdown. To account for this evidence, we consider a variation of the model where the change in labor composition during the lockdown is no more transitory but permanent. As a consequence, the supply curve in Figure 1 shifts to the left inducing a higher good 2's relative price. However, a change in the labor composition also affects the demand curve in interesting ways as production of good 1 will now expand. These effects on the demand curve adds to the the previously discussed substitutability-satiation effect. Conditions for an increase (decrease) of the relative price both in the short run and in the long run are investigated and compared with the results of the benchmark model. Interestingly, a change in labor composition may lead to a permanent higher (lower) good 2 price and net demand.

An insight about the consequence of announcing the duration of the lockdown is discussed in another extension of the main model. In this new framework, the households know with certainty the lockdown duration and, therefore, they can adjust their consumption behavior in advance. As a result, we find numerically that a long lockdown is less likely to lead after its end to the shutdown of a sector of production.

Moreover, we look at the case with the lockdown duration  being an exponential random variable and we show that
this case is somehow intermediate between the two above (unanticipated and anticipated end of the lockdown).

%\dvio{An intermediate and more realistic extension is the one where the length of the lockdown is
%a random variable which we chose, as usual in this type of problems to be of exponential law.
%In this case we show that
%this case is somehow intermediate between the two above (totally unexpected end of the lockdown and
%well known end of it from the beginning).}

The last two extensions of the main model shows that the results discussed previously still hold when the economy is populated by two types of workers and when a minimum provision of good 2 is provided during the lockdown.

Finally the last section of the paper provides empirical evidence in support of the role of habits formed on one good in affecting the relative price of the other good. This is done by constructing a relative price index using the ratio of the CPIs of the admissions to movies, theatres and concertos, and of the cable and satellite TV services. Then we have used data from the American Time Use Survey (ATUS) on the time spent in TV watching to build a proxy for the habits. Various regressions show that habits are always significant and actually the main driver of the relative price movements.  

The paper is organised as it follows. In Section \ref{Sec:ModelSetup} we present the model setup. In Section \ref{Sec:2sectors} we describe the economy when two sectors of production are active while Section \ref{Sec:1sector} focuses on the case with only sector 1 being active. Section \ref{Section:LockdownQ}, uses the previous results to describe how the economy is affected by a lockdown with a focus on the habits and good 2 price dynamics. The main result is summarized in Proposition \ref{Prop:keyresult}. In this first model the end of the lockdown is unknown to the households and it is a unanticipated shock. A numerical example concludes this section and it helps to understand how a government policy could be designed to prevent the permanent shutdown of sector 2. In Section \ref{Sec:Extensions} we present several extensions to the benchmark model. Section \ref{CH:Empirical} provides empirical evidence in support of our theory. Finally, Section \ref{Sec:Conclusion} concludes the paper.  All the proofs are in Appendix.

\section{Model setup } \label{Sec:ModelSetup}

Consider an economy with two sectors producing two different final goods. Each sector has a unit mass of identical firms. Both goods are produced using a decreasing returns to scale production function $y=\ell^\alpha$ with labor the only input.

In both cases there is a fixed operating cost, $\tau$, denominated in units of the numeraire's output, which the firms pay to the households in each period to remain in the market. For example, $\tau$ could be a per period lump sum tax paid by each firm to the government for the permit to remain open and used by the government to pay a lump sum subsidy to the households.\footnote{As it will result clear later, the denomination of the fixed cost in unit of the numeraire's output is useful in this simple framework to have the threshold for the firm exiting the market in term of the relative price.} The profit maximization problem faced by a firm producing the final good $i$ writes:
$$\max_{\ell_i} \ \pi_i\equiv p_i \ell_i^\alpha-w_i \ell_i -\tau$$
with $i=1,2$. The final good 1 is the numeraire whose price, $p_1$, is normalized to one, while $p_2=p$ is endogenous and its value will be determined in equilibrium. As usual with decreasing returns to scale in production and a fixed cost, there is a threshold below which producing is no more profitable and the firm shuts down. In addition, we assume that sector 2 becomes inactive during a lockdown even if it would find profitable to produce. On the other hand, sector 1 remains active even during a lockdown if it is profitable to do so.

The economy admits also an infinitely-lived representative household whose preferences are represented by the utility function
$$\int_0^\infty e^{-\rho t} u(c_1,h,c_2) dt$$
where $h$ indicates the habits formed over the consumption of good 1 according to the equation
$$\dot {h}=\phi( c_1 - h)$$
with $\phi>0$ and $h(0)=h_0$ exogenously given.\footnote{The assumption of habits formed over only one good is introduced to make the analysis less cumbersome. Intuitively similar results should be obtained assuming that habits are formed over both goods with state equations having same functional form and the ratio of initial habits, $\frac{h_{10}}{h_{20}}$, and $\frac{\phi_1}{\phi_2}$  sufficiently large.} Concavity in the consumption of good $i$ implies that $u_{c_i}>0$ and $u_{c_ic_i}<0$. Habits can be harmful, $u_{h}<0$, or beneficial, $u_{h}>0$. We also assume joint concavity of $u(.)$ in the three variables $(c_1,c_2,h)$:
$$u_{hh}<0,  \qquad u_{c_1c_1}u_{hh}-(u_{ c_1h})^2>0, \qquad and \ \ |D^2u(c_1,c_2,h)|<0$$
where $D^2u(c_1,c_2,h)$ is the Hessian matrix of $u(.)$.\footnote{Observe that Bambi and Gozzi \cite{BambiGozzi} have shown that even without concavity the maximum principle may lead to an optimal and unique solution.}
In this first version of the model, we also assume that there is no Inada condition on the marginal utility of the two goods.\footnote{Although not a standard assumption, it is worth remembering that it has been used before, for example, in the structural change literature (e.g. Kongsamut et al. \cite{KRX}).} This allows the household to enjoy zero consumption and prevents scarcity from driving the relative price to infinity. Clearly, this is a useful assumption to model a lockdown with production of good 2 being zero. However, we will present in Section \ref{Sec:provision} a variation of the model where the Inada conditions are re-introduced and show that the main result of the model still holds.
%We also assume the following Inada conditions to hold:
%$$
%\lim_{c_1\to 0^+} u_{c_1}= +\infty, \qquad
%\lim_{c_1\to +\infty} u_{c_1}= 0, \qquad
%\lim_{h\to +\infty} u_{h}=0.
%$$
%The Inada conditions will be assumed to hold in the general part up to the end of Section \ref{Sec:1sector} and will be used as sufficient conditions to prove
%our results up to this point. Later, from Section \ref{Section:LockdownQ} on, we will assume that the utility is quadratic. In this case, since we can compute explicitly the solutions, we do not need to assume them to perform our analysis.

\medskip

Each household inelastically supplies her labor to the firms; similar to Guerrieri et al. \cite{Guerrieri}, we assume that an exogenously given fraction $\xi$ of work time, $\bar\ell$, is supplied to sector 1 and the remaining $1-\xi$ to sector 2 when both the sectors of production are active; on the other hand, if one sector is inactive (e.g. lockdown) then an exogenously given constant share $a\in(0,1)$ of work allocated in the inactive sector is re-allocated to the active sector.\footnote{In Section 6, we describe a similar economy populated with two types of agents and we will show that similar results can be obtained in the case of a linear-quadratic utility function.} Households have also the ownership of the firms and, therefore, any profit is distributed back to them. They also receive a subsidy which in equilibrium is equal to the lump sum tax imposed to the firms. Finally, the households' budget constraint when both sectors are active is
$$\dot b+c_1+p c_2=rb+w_1\xi\bar\ell+w_2(1-\xi)\bar\ell +2 \tau +\pi_1+\pi_2 $$
where $b$ indicates the amount of foreign assets in positive-net supply and $r$ the constant (exogenous) world interest rate. On the other hand, if only sector 1 is active than the intertemporal budget constraint rewrites:
$$\dot b+c_1=rb+w_1[\xi+a(1-\xi)]\bar\ell + \tau +\pi_1 $$
Notice that in both cases we have indexed the bonds to good 1 consumption. As it will result clear later, this is done because  we will focus on the  case where good 1 is tradeable (e.g. streaming service) while good 2 is not (e.g. cinema). If you have a subscription to a streaming service such as Netflix then you can access it independently on your location. On the other hand, you cannot use a ticket of a cinema to see a movie in another cinema in another country.

Before investigating the economic effects of a lockdown, it is convenient to study two different problems. The first is an economy where both sectors are active and the other where only sector one is active.

\section{Economy with two active sectors of production}  \label{Sec:2sectors}

In this section, we consider the case of an economy with both sectors of production being active. This is, for example, the case for a sufficiently small fixed cost of production, $\tau$. In this framework, the households' optimization problem writes
$$\max_{c_1,c_2,b,h}\ \ \int_0^\infty e^{-\rho t} u(c_1,h,c_2) dt$$
subject to the following constraints
\begin{eqnarray}
	&& \dot b+c_1+p c_2=rb+w_1\xi\bar\ell+w_2(1-\xi)\bar\ell +2 \tau +\pi_1+\pi_2
	\label{b_state}\\
	%&& \ell_1+\ell_2=1 \\
	&& \dot {h}=\phi( c_1 - h) \label{habits_state1} \\
	%&& \dot {h}_2=\rho_2( c_2 - h_2) \label{habits_state2} \\
	&& b(0)=b_0, \ \ \ h(0)=h_{0},  \ \ given
\end{eqnarray}
The inequality constraints, $h,c>0$, also hold. A given state-control quadruple $(c_1,c_2,h_1,b)$ is optimal if there exists absolutely continuous co-state functions $\mu$ and $\lambda$ such that
\begin{eqnarray}
	&& u_{c_1}+\mu\phi-\lambda=0 \label{foc:c1} \\
	&& u_{c_2}-p\lambda=0 \label{foc:c2}\\
	&& \dot\mu =(\phi+\rho)\mu-u_{h} \label{foc:h}\\
	&& \dot \lambda =(\rho-r)\lambda \label{foc:b} \\
	&& \lim_{t\rightarrow\infty}h\mu e^{-\rho t}=0 \label{tvc:1} \\
	&& \lim_{t\rightarrow\infty} b \lambda e^{-\rho t}=0 \label{tvc:2}
\end{eqnarray}
Observe also that $\lambda>0$ while the sign of $\mu$ depends on the habits being harmful or beneficial. In the first case, it is negative while in the latter it is positive.

On the other hand, the profit-maximization problem of firm $i$ leads to the following labor demand:
\begin{equation} \ell_i=\left\{\begin{array}{ll}
		\left(\frac{p_i\alpha }{w_i}\right)^{\frac1{1-\alpha}} & if \ \  p_i(1-\alpha)y_i\geq \tau \\ \\
		0 & \ \ otherwise
\end{array}\right. \label{Ffoc1} \end{equation}
therefore, there are two active sectors of production provided that the two sectors find profitable to produce. This happens when the price inequalities in (\ref{Ffoc1}) are respected and no lockdown is imposed.

%
%Before proceeding, it is worth noticing that the budget constraint (32) implicitly assumes that both sectors of production are active. Suppose instead that sector 2 is inactive than the utility function will depend only on $c_1$ and $h$ while the budget constraint rewrites
%$$\dot b+c_1=rb+w_1\xi\bar\ell +\tau+\pi_1$$
%Notice that, labor immobility prevents the representative households to supply $(1-\xi)\bar\ell$ to the other sector. As a consequence the representative household will be unemployed for the fraction $(1-\xi)$ of her work time $\bar\ell$. Similarly, part of capital is wasted when sector 2 remains inactive.

The labor market clearing condition of sector 1 and sector 2 are respectively
\begin{equation}\begin{array}{ll}
	 \ell_1=\xi\bar\ell, & if \ \ \tau\leq (1-\alpha)y_1, \ \ and  \label{lmrktclear}\\
	 \ell_2=(1-\xi)\bar\ell, & if \ \ \tau \leq p(1-\alpha)y_2
\end{array}\end{equation}
On the other hand, the goods market clearing conditions are
\begin{equation}
	\begin{array}{ll}
	\dot b+c_1=rb+ y_1, & if \ \ \tau\leq (1-\alpha)y_1, \ \ and \label{cmrktclear}\\
	c_2=y_2, & if \ \ \tau \leq p(1-\alpha)y_2
\end{array}
\end{equation}
where $y_1=(\xi\bar\ell)^\alpha$ and $y_2=[(1-\xi)\bar\ell]^\alpha$.
Observe that if there is a positive-net supply of the bond, $b>0$, then the representative household will lend $b$ to expand her future consumption of good 1 from the amount produced within the country, $y_1$, to a maximum $y_1+rb$  by importing it. The opposite happens if there is a negative-net supply of the bond. In this case, the representative household may expand current consumption over the amount produced internally by borrowing from abroad but then she will repay this by a contraction of future consumption below $y_1$.

On the other hand, the final good 2 is assumed to be not tradable and, therefore, its consumption is always equal to the amount produced within the country, $y_2$.

\begin{definition}[Decentralized Equilibrium]
	A decentralized equilibrium of the economy is an allocation $(c_1,c_2,h_1,b,\ell_1,\ell_2)_{t\geq 0}$ and a price path $(w_1,w_2,p)_{t\geq 0}$ such that
	\begin{itemize}
		\item[i)] Given $(w_1,w_2,p)_{t\geq 0}$, the representative household chooses a quadruple $(c_1,c_2,h_1,b)_{t\geq 0}$ to maximize her intertemporal utility subject to (1)-(3).
		\item[ii)] Given $(w_i,p_i)_{t\geq 0}$, the representative firm in sector $i$ chooses $\ell_i$ to maximizes its profit subject to its production function, for $i=1,2$.
		\item[iii)] All markets clear in every period, i.e. (\ref{lmrktclear}) and (\ref{cmrktclear}) hold.
	\end{itemize}
\end{definition}

%\subsection{Market Equilibrium} Sector 2 labor market clearing condition is
%\begin{equation} L_2=\left\{\begin{array}{ll}
%		(1-\xi)\bar\ell & if \ \  p\geq \frac\tau{(1-\alpha)[(1-\xi)\bar\ell]^\alpha} \\ \\
%		0 & \ \ otherwise
%	\end{array}\right. \label{Ffoc1} \end{equation}
%observe that in the case of a sufficiently low price (to be determined) the representative household is unemployed for the fraction $(1-\xi)$ of her work time $\bar\ell$.
%
%Assume that sector one remains always open, i.e. $\tau\leq (1-\alpha)[\xi\bar\ell]^\alpha$. Since good 2 is not tradable internationally (e.g. cinema) while good 1 is (e.g. movies on CD or even better in netflix), then the market clearing conditions for these two commodities are respectively
%\begin{equation} c_2=\left\{\begin{array}{ll}
%		\underbrace{[(1-\xi)\bar\ell]^\alpha}_{y_2} & if \ \  p\geq \frac\tau{(1-\alpha)[(1-\xi)\bar\ell]^\alpha} \\ \\
%		0 & \ \ otherwise
%	\end{array}\right. \label{Ffoc1} \end{equation}
%and
%$$\dot b+c_1=rb+\underbrace{\xi\bar\ell^\alpha}_{\equiv y_1}$$
%where $y_1$ indicates the amount of good 1 produced by the country.

Observe that, at the decentralized equilibrium, we have that $u_{c_1}=u_{c_1}(c_1,h_1;y_2)$, $u_{c_2}=u_{c_2}(c_1,h_1;y_2)$, and $ u_{h}=u_{h}(c_1,h_1;y_2)$. This dimension reduction simplifies considerably our analysis and several results can be derived without choosing a specific utility function. To make the model even more tractable we will also assume from now on that $r=\rho$ and therefore $\lambda$ is constant. The only consequence of this assumption is that the economy will not grow over time.

\begin{proposition}\label{Prop:SS}
	A unique steady state exists with all the stationary variables function of the costate variable $\lambda$.
\end{proposition}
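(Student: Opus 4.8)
The plan is to impose stationarity on the full first-order and state/co-state system and reduce the steady state to a single scalar equation in $c_1$ parametrised by $\lambda$. Because $r=\rho$, equation (\ref{foc:b}) gives $\dot\lambda=0$, so $\lambda$ is a free constant and I treat it as the parameter. Setting $\dot h=0$ in (\ref{habits_state1}) forces $h=c_1$, and setting $\dot\mu=0$ in (\ref{foc:h}) gives $\mu=u_h/(\phi+\rho)$. Substituting both into (\ref{foc:c1}) and using the dimension reduction $c_2=y_2$ (so that all marginal utilities are functions of $(c_1,h)$ alone) yields the single equation
\begin{equation*}
g(c_1):=u_{c_1}(c_1,c_1;y_2)+\frac{\phi}{\phi+\rho}\,u_h(c_1,c_1;y_2)=\lambda .
\end{equation*}
Every remaining stationary quantity is then an explicit function of $c_1$ and $\lambda$: the relative price from (\ref{foc:c2}) as $p=u_{c_2}(c_1,c_1;y_2)/\lambda$, the co-state $\mu=u_h/(\phi+\rho)$, the bond stock from $\dot b=0$ in (\ref{cmrktclear}) as $b=(c_1-y_1)/r$, the labour allocations $\ell_1=\xi\bar\ell$ and $\ell_2=(1-\xi)\bar\ell$ directly, and the wages $w_i=\alpha p_i\ell_i^{\alpha-1}$ from (\ref{Ffoc1}). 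Hence the proposition follows once I show that $g(c_1)=\lambda$ admits exactly one solution $c_1=c_1(\lambda)$.

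The core of the argument is therefore to prove that $g$ is strictly monotone. Differentiating along the diagonal $h=c_1$,
\begin{equation*}
g'(c_1)=u_{c_1c_1}+\Big(1+\tfrac{\phi}{\phi+\rho}\Big)u_{c_1h}+\tfrac{\phi}{\phi+\rho}\,u_{hh}.
\end{equation*}
The first and last terms are negative by $u_{c_1c_1}<0$ and $u_{hh}<0$; the obstacle is the cross term, whose sign is unrestricted. The plan is to bound it using the joint-concavity condition $u_{c_1c_1}u_{hh}-u_{c_1h}^2>0$: writing $a=-u_{c_1c_1}>0$, $d=-u_{hh}>0$, $\beta=\phi/(\phi+\rho)\in(0,1)$ and $|u_{c_1h}|<\sqrt{ad}$, I would establish $g'<0$ in the admissible region. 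Whenever $u_{c_1h}\le 0$ all three terms are nonpositive and $g'<0$ is immediate. I expect the case $u_{c_1h}>0$ to be the main obstacle, because near the boundary $u_{c_1h}^2\uparrow u_{c_1c_1}u_{hh}$ the inequality $(1+\beta)u_{c_1h}<a+\beta d$ can fail, so strict monotonicity genuinely uses the strength of the concavity hypotheses---negative definiteness of the $(c_1,h)$-block of $D^2u$---rather than concavity in each variable separately.

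Given $g'<0$, uniqueness is immediate, and existence follows from continuity of $g$ together with the intermediate value theorem on the admissible range of $c_1$. Since no Inada condition is imposed, $g$ need not be surjective onto $(0,\infty)$, so I would identify the image of $g$ explicitly and restrict attention to those $\lambda$ for which the candidate steady state is interior ($c_1,h>0$); this is exactly the range generated by admissible initial bond holdings $b_0$ through the intertemporal budget constraint and the transversality conditions (\ref{tvc:1})--(\ref{tvc:2}), which ultimately pin down $\lambda$. Assembling the map $\lambda\mapsto(c_1,h,c_2,b,\mu,p,\ell_1,\ell_2,w_1,w_2)$ from the displayed formulas then expresses every stationary variable as a function of $\lambda$, completing the proof.
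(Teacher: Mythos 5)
Your reduction is sound and is in fact the analytic counterpart of the paper's argument: the paper intersects the curve $h\mapsto\nabla u(h,y_2,h)$ with the plane $\Phi_{\lambda_0}(\mu,p)$, and eliminating $\mu$ and $p$ from that intersection gives exactly your scalar equation $u_{c_1}(c_1,c_1;y_2)+\frac{\phi}{\phi+\rho}u_h(c_1,c_1;y_2)=\lambda$, with $p$ then read off from (\ref{foc:c2}) and $b$ from the market clearing condition. Your existence step (continuity plus IVT on the image of $g$, with explicit attention to the fact that no Inada conditions are imposed) matches the paper's sufficient condition that $\nabla u(h,h,y_2)$ lie inside $S_\lambda$ for some $h$ and outside for some $\tilde h$.

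The gap is in the uniqueness step: $g'<0$ does \emph{not} follow from the stated concavity hypotheses, even from negative definiteness of the $(c_1,h)$-block of $D^2u$. Writing $a=-u_{c_1c_1}$, $d=-u_{hh}$, $\beta=\phi/(\phi+\rho)$, you need $(1+\beta)u_{c_1h}<a+\beta d$, while concavity only gives $u_{c_1h}<\sqrt{ad}$. Since $(1+\beta)\sqrt{ad}>a+\beta d$ precisely when $\beta d<a<d$, the implication fails on an open set of admissible Hessians: e.g.\ $\beta=1/2$, $a=0.7$, $d=1$, $u_{c_1h}=0.83$ satisfies $ad-u_{c_1h}^2=0.0111>0$ yet gives $g'=0.045>0$ (and the full $3\times 3$ concavity can be preserved by taking $u_{c_2c_2}$ sufficiently negative). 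Note that $(1+\beta)u_{c_1h}<a+\beta d$ is algebraically identical to the condition $u^*_{c_1h}<\bar u^*_{c_1h}$ of Lemma \ref{Lemma:eigenvalues}, i.e.\ to $Det(\mathbf A)<0$: strict monotonicity of $g$ is equivalent to saddle-path regularity, not to concavity. So to close the argument you must either impose $u_{c_1h}<-\frac{(\phi+\rho)u_{c_1c_1}+\phi u_{hh}}{\rho+2\phi}$ globally along the diagonal (an assumption the paper only introduces later, for the local dynamics), or retreat to a single-crossing hypothesis, which is effectively what the paper's geometric proof does when it assumes the curve passes from one side of the plane to the other exactly once.
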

%\begin{proof}	\red{To be done}. See Appendix. \end{proof}

\medskip

It is worth noting that the existence and uniqueness of the steady state can be proved with and without the Inada conditions under some reasonable mild conditions on the parameters.\footnote{The interested reader may look at the proof in Appendix for further details.} We can now linearize (\ref{foc:c1})-(\ref{foc:h}), the habits equation (\ref{habits_state1}) and the final good 1 market clearing condition (\ref{cmrktclear}) around the steady state and eventually get the following result in the variables expressed as deviation from their steady state value, i.e. $\tilde x=x-x^*$.
\begin{proposition}\label{Prop:LocalDynamics}
	The local dynamics of the economy around its steady state is described by the following system of equations in the variables $(\tilde\mu,\tilde h, \tilde c_1,\tilde p, \tilde b,\lambda)$:
	\begin{eqnarray}
		&&\dot{\tilde\mu}=\left[\left(1+\frac{u^*_{c_1h}}{u^*_{c_1c_1}}\right)\phi+\rho\right]\tilde\mu+\frac{(u^*_{c_1h})^2-u^*_{c_1c_1}u^*_{hh}}{u^*_{c_1c_1}}\tilde h \label{eqeq:mu1}\\
		&& \dot{\tilde h}= -\frac{\phi^2}{u^*_{c_1c_1}} \tilde\mu -\phi\left(1+\frac{u^*_{c_1h}}{u^*_{c_1c_1}}\right)\tilde h \label{eqeq:h1}\\
		&& \dot{\tilde b}=r\tilde b- \tilde c_1 \label{eqeq:b1}\\
		&& \tilde c_1=-\frac{u^*_{c_1h}\tilde h +\phi \tilde \mu}{u^*_{c_1c_1}} \label{eqeq:c1}\\
		&& \tilde p=\frac{u^*_{c_2c_1}\tilde c_1+u^*_{c_2 h}\tilde h}\lambda \label{eqeq:p1}
	\end{eqnarray}
plus the transversality conditions.\footnote{Notice that, $\lambda$ will be determined using a TVC.}
\end{proposition}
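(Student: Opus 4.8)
The plan is to obtain each displayed equation by a first-order Taylor expansion of the relevant equilibrium condition around the steady state $(\mu^*,h^*,c_1^*,p^*,b^*)$ guaranteed by Proposition~\ref{Prop:SS}, exploiting two facts already established at the decentralized equilibrium. First, since good 2 is non-tradable we have $c_2=y_2$ fixed, so every marginal utility reduces to a function of $(c_1,h)$ with $y_2$ a constant; hence $\tilde c_2=0$. Second, the assumption $r=\rho$ makes $\lambda$ constant, so $\tilde\lambda=0$ throughout. These two observations are what allow the full system to collapse onto the variables listed in the statement.

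First I would linearize the static first-order condition \eqref{foc:c1}. Writing $u_{c_1}=u_{c_1}(c_1,h;y_2)$ and differentiating gives $u^*_{c_1c_1}\tilde c_1+u^*_{c_1h}\tilde h+\phi\tilde\mu=0$ (the $\tilde\lambda$ term drops out), and solving for $\tilde c_1$ yields \eqref{eqeq:c1}. This algebraic relation is the workhorse of the proof: it expresses $\tilde c_1$ in terms of the states $(\tilde\mu,\tilde h)$ and lets me eliminate $\tilde c_1$ wherever it appears dynamically.

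Next I linearize the co-state law \eqref{foc:h}. Using $u_h=u_h(c_1,h;y_2)$ and the symmetry $u_{hc_1}=u_{c_1h}$, I get $\dot{\tilde\mu}=(\phi+\rho)\tilde\mu-u^*_{c_1h}\tilde c_1-u^*_{hh}\tilde h$; substituting \eqref{eqeq:c1} and collecting the $\tilde\mu$ and $\tilde h$ coefficients produces \eqref{eqeq:mu1}, where the $\tilde h$-coefficient assembles into $\tfrac{(u^*_{c_1h})^2-u^*_{c_1c_1}u^*_{hh}}{u^*_{c_1c_1}}$. I then linearize the habit law \eqref{habits_state1} as $\dot{\tilde h}=\phi(\tilde c_1-\tilde h)$ and again insert \eqref{eqeq:c1} to reach \eqref{eqeq:h1}. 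For the bond I linearize the good-1 clearing condition \eqref{cmrktclear} with $y_1$ constant, giving $\dot{\tilde b}=r\tilde b-\tilde c_1$, i.e.\ \eqref{eqeq:b1}. Finally, linearizing \eqref{foc:c2} with $\tilde c_2=0$ and $\tilde\lambda=0$ gives $u^*_{c_2c_1}\tilde c_1+u^*_{c_2h}\tilde h-\lambda\tilde p=0$, which rearranges into the price relation \eqref{eqeq:p1}.

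No step poses a genuine analytical obstacle; the computation is a routine linearization. The only care required is organizational: recognizing that $\tilde c_1$ and $\tilde p$ are \emph{static} outputs of the first-order conditions rather than independent dynamic states, so the true dynamical core lives in $(\tilde\mu,\tilde h,\tilde b)$ with $\tilde c_1$ feeding back through \eqref{eqeq:c1}, while $\lambda$ is not a dynamic variable but a constant pinned down separately by the transversality condition \eqref{tvc:2}.
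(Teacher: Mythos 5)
Your proposal is correct and follows essentially the same route as the paper's own proof: linearize \eqref{foc:c1} to obtain \eqref{eqeq:c1}, substitute it into the linearizations of \eqref{foc:h} and \eqref{habits_state1} to get \eqref{eqeq:mu1}--\eqref{eqeq:h1}, and linearize the good-1 market clearing condition and \eqref{foc:c2} for \eqref{eqeq:b1} and \eqref{eqeq:p1}. Your explicit remarks that $\tilde c_2=0$ (non-tradability of good 2) and $\tilde\lambda=0$ (from $r=\rho$) make the collapse onto the stated variables slightly more transparent than the paper's terse presentation, but the argument is the same.
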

% \begin{proof}  See Appendix.  \end{proof}

\medskip

Equation (\ref{eqeq:p1}) is of particular importance in our framerwork as it will be used later to understand whether and under which conditions a lockdown may affect the prices so that the final good 2 could become more or less profitable to be produced  after the lockdown. At this stage of the analysis we can notice two things. First, that the two goods are substitutes if $$\frac{\partial p}{\partial c_1}=\frac{u^*_{c_2c_1}}{\lambda}>0 \qquad \Leftrightarrow \qquad u_{c_2c_1}>0$$ and complements otherwise. Second, and most importantly, the presence of the habits may reduce the price of commodity 2 $$\frac{\partial p}{\partial h}=\frac{u^*_{c_2h}}{\lambda}<0 \qquad \Leftrightarrow \qquad u_{c_2h}<0 $$ since $\lambda>0$. This condition means that the price of the final good 2 may decrease if the marginal utility of consuming that good decreases as the habits accumulate. In this case, although the habits are formed over the final good 1 consumption, they induce satiation in the consumption of good 2. The following result on the price change can be easily derived.

\begin{remark}
Assume that $u_{c_2c_1}>0$ and $u_{c_2h}<0$. Then
\begin{equation}dp<0 \qquad \Leftrightarrow \qquad \underbrace{u^*_{c_2c_1} d c_1}_{Substitutability \ Effect} <\underbrace{-u^*_{c_2h} d h}_{Satiation  \ Effect}. \label{KEYINSIGHT} \end{equation}
\end{remark}

\medskip

An example may help to understand this condition. Suppose that during a lockdown, the agents have reinforced a habits of binge-watching television programs. At the end of the lockdown, they may be so satiated of watching movies that they will accept to go to the cinema only if the ticket price is sufficiently low.

Remark 1 gives a first insight about the mechanism which may lead to an expansion or contraction of the demand of good 2 after a lockdown. This will be discussed later in Section 5 together with the role played by the length of a lockdown. Before doing that, it is useful to finish our analysis of the decentralized equilibrium for an economy with two active sectors and then with one active sector.

Let us proceed with our analysis and observe that a nice feature of our model is that (\ref{eqeq:mu1}) and (\ref{eqeq:h1}) is a linear system of ODEs in the variables $(\tilde \mu,\tilde h)$ which we can solve analytically.

\begin{proposition}\label{Prop:solsystem}
	Assume that  $u^*_{c_1h}<\bar u^*_{c_1h}$. Then the solution of the system of linear ODEs (\ref{eqeq:mu1}) and (\ref{eqeq:h1}) together with TVC (\ref{tvc:1}) exists and has the following form:
	\begin{eqnarray}
		&& \tilde h=\tilde h_0 e^{\psi_1 t} \label{eqeq:h}\\
	&& \tilde \mu=-\frac{\phi u^*_{c_1h}+(\phi+\psi_1)u^*_{c_1c_1}}{\phi^2}\tilde h \label{eqeq:mu}
	\end{eqnarray}
	where $\psi_1$ is the real and negative eigenvalue whose value depends on $\lambda$ (see Appendix A - Lemma 1).
\end{proposition}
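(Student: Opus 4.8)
The plan is to treat (\ref{eqeq:mu1})--(\ref{eqeq:h1}) as an autonomous planar linear system $\dot z = A z$ in $z=(\tilde\mu,\tilde h)^\top$, read off its eigenstructure, and then use the transversality condition (\ref{tvc:1}) together with the initial datum on $h$ to single out the unique admissible trajectory. First I would record that the coefficient matrix has trace equal to $\rho$, since the two $\phi$-terms on the diagonal of (\ref{eqeq:mu1})--(\ref{eqeq:h1}) cancel, and a determinant that, after simplification, is an \emph{affine} function of $u^*_{c_1h}$ with positive slope (the $(u^*_{c_1h})^2$ contributions cancel, and the surviving linear coefficient carries a factor $-1/u^*_{c_1c_1}>0$). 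The threshold $\bar u^*_{c_1h}$ in the hypothesis is precisely the value at which this determinant vanishes, so $u^*_{c_1h}<\bar u^*_{c_1h}$ forces $\det A<0$. Combined with $\operatorname{tr}A=\rho>0$, a negative determinant yields two real eigenvalues of opposite sign, $\psi_1<0<\psi_2$, with $\psi_2=\rho-\psi_1>\rho$. I would cite Lemma 1 of Appendix A for this eigenvalue computation and for the dependence of $\psi_1$ on $\lambda$, which enters through the steady-state second derivatives $u^*_{\cdot\cdot}$.

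Next I would write the general solution as $z(t)=C_1 v_1 e^{\psi_1 t}+C_2 v_2 e^{\psi_2 t}$, with $v_1,v_2$ the eigenvectors of $A$, and argue that (\ref{tvc:1}) selects the stable manifold. Writing $h=h^*+\tilde h$ and $\mu=\mu^*+\tilde\mu$, a nonzero $C_2$ makes both $\tilde\mu$ and $\tilde h$ grow like $e^{\psi_2 t}$, so the leading term of $h\mu\,e^{-\rho t}$ (the cross term $h^*\tilde\mu\,e^{-\rho t}$, with $h^*>0$) behaves like $e^{(\psi_2-\rho)t}$, which diverges because $\psi_2>\rho$; the same holds a fortiori for the quadratic term $\tilde h\tilde\mu\,e^{-\rho t}\sim e^{(2\psi_2-\rho)t}$. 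This violates (\ref{tvc:1}), whereas $C_2=0$ gives $\tilde h,\tilde\mu\sim e^{\psi_1 t}$ and all terms vanish, so the stable solution both exists and is the only one compatible with the transversality condition. Imposing the given initial condition $\tilde h(0)=\tilde h_0$ then fixes $C_1$ and delivers $\tilde h=\tilde h_0 e^{\psi_1 t}$, which is (\ref{eqeq:h}).

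Finally, to obtain (\ref{eqeq:mu}) I would avoid computing the eigenvector abstractly and instead exploit that on the stable path $\tilde h$ satisfies $\dot{\tilde h}=\psi_1\tilde h$. Substituting this into (\ref{eqeq:h1}) gives the algebraic relation $a_{21}\tilde\mu+a_{22}\tilde h=\psi_1\tilde h$, i.e. $\tilde\mu=\dfrac{\psi_1-a_{22}}{a_{21}}\tilde h$, with $a_{21}=-\phi^2/u^*_{c_1c_1}$ and $a_{22}=-\phi\big(1+u^*_{c_1h}/u^*_{c_1c_1}\big)$; expanding this quotient reproduces exactly the coefficient $-\big[\phi u^*_{c_1h}+(\phi+\psi_1)u^*_{c_1c_1}\big]/\phi^2$. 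I expect the only genuinely delicate step to be the sign analysis of $\det A$ guaranteeing a single negative eigenvalue, i.e. the identification of the threshold $\bar u^*_{c_1h}$ and the dependence of $\psi_1$ on $\lambda$; since this is precisely what Lemma 1 establishes, here it suffices to invoke it, after which the remaining argument is the routine saddle-path selection via the transversality condition and an elementary substitution.
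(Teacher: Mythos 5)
Your proposal is correct and follows essentially the same route as the paper: invoke Lemma 1 (trace $\rho$, sign of the determinant governed by the threshold $\bar u^*_{c_1h}$) for the saddle structure, kill the unstable mode via the transversality condition, and then read off the $\tilde\mu$--$\tilde h$ relation. Your derivation of (\ref{eqeq:mu}) by substituting $\dot{\tilde h}=\psi_1\tilde h$ into (\ref{eqeq:h1}) is algebraically identical to the paper's computation of the eigenvector from the second row of $(\mathbf A-\psi_1\mathbf I)\mathbf v_1'=\mathbf 0$, and your expanded divergence argument merely fills in what the paper dismisses as ``immediate.''
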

%\begin{proof} See Appendix. \end{proof}

\medskip

The inequality at the beginning of the proposition, $u^*_{c_1h}<\bar u^*_{c_1h}$, guarantees a positive and a negative eigenvalue and it is a necessary condition for the TVC (\ref{tvc:1}) to hold. The threshold value $\bar u^*_{c_1h}$ can be found in Appendix A - Lemma 1.
Once $\tilde h$ and $\tilde\mu$ have been found, we substitute them into (\ref{eqeq:c1}) to find how the dynamics of good 1 consumption and habits are related:
\begin{equation}\tilde c_1=\frac{\phi+\psi_1}\phi \tilde h. \label{eqeq:c1h} \end{equation}
Therefore, we have that the final good 1 is addictive if $\phi+\psi_1>0$ since its current consumption increases as the habits accumulate. On the other hand, good 1 is satiating when  $\psi_1+\phi<0$. Moreover, substituting $\tilde c_1$ into the relative price equation (\ref{eqeq:p1}) leads to
\begin{equation}\tilde p=\frac{(\phi+\psi_1)u^*_{c_2c_1}+\phi u^*_{c_2 h}}{\phi\lambda}\tilde h \label{eqeq:p}.\end{equation}
Therefore, a positive change of habits reduces the price of good 2 as long as \begin{equation} \underbrace{ u^*_{c_2c_1}\cdot \frac{\phi+\psi_1}\phi}_{Substitutability \ Effect}<\underbrace{-u^*_{c_2h}}_{Satiation \ Effect}. \label{KEY1}\end{equation}
Notice that this is the equilibrium counterpart of expression (\ref{KEYINSIGHT}) in Remark 1. The substitutability effect depends in equilibrium on the (steady state) cross-derivative of utility between good 2 and good 1 consumption, $u^*_{c_2c_1}$, the habits speed of adjustment to a change in good 1 consumption, $\phi$, and the habits speed of convergence to its steady state value, $\psi_1$. On the other hand the satiation effect depends on the (steady state) cross derivative of utility between good 2 consumption and habits, $u^*_{c_2h}$.

However,  the steady state value of our variables still depend on $\lambda$. To find it, we need to determine the solution of $b$, and in doing so we will find the value of $\lambda$ which makes the TVC (\ref{tvc:2}) hold.

\begin{proposition}\label{Prop:solsystem_b}
Assume that  $u^*_{c_1h}<\bar u^*_{c_1h}$. Then the dynamic path of $\tilde b$ is
\begin{equation}
	\tilde b=\frac{\phi+\psi_1}{\phi(r-\psi_1)}\tilde h \label{eqeq:b}
\end{equation}
with $\lambda$ equal to the value which makes the following equality hold
\begin{equation}\tilde b_0=\frac{\phi+\psi_1}{\phi(r-\psi_1)}\tilde h_0 \label{eqeq:lambda}\end{equation}	
\end{proposition}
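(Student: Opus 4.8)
The plan is to treat (\ref{eqeq:b1}) as an inhomogeneous linear ODE for $\tilde b$ once its right-hand side is written purely in terms of $\tilde h$. First I would substitute the consumption--habits relation (\ref{eqeq:c1h}) into (\ref{eqeq:b1}) and then insert the closed form $\tilde h=\tilde h_0 e^{\psi_1 t}$ from (\ref{eqeq:h}), obtaining
\[
\dot{\tilde b}=r\tilde b-\frac{\phi+\psi_1}{\phi}\tilde h_0 e^{\psi_1 t}.
\]
This is a first-order linear ODE with constant coefficient $r$ and exponential forcing at rate $\psi_1$, so everything reduces to a standard solve-plus-selection argument.

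Next I would write the general solution as the sum of the homogeneous part $A e^{rt}$ and a particular solution. Since Proposition \ref{Prop:solsystem} guarantees $\psi_1$ real and negative, we have $\psi_1<0<r$, hence $\psi_1\neq r$ and the undetermined-coefficients ansatz $\tilde b_p=Be^{\psi_1 t}$ is admissible. Matching coefficients gives $(\psi_1-r)B=-\tfrac{\phi+\psi_1}{\phi}\tilde h_0$, i.e. $B=\tfrac{\phi+\psi_1}{\phi(r-\psi_1)}\tilde h_0$, so that $\tilde b_p=\tfrac{\phi+\psi_1}{\phi(r-\psi_1)}\tilde h$ and
\[
\tilde b(t)=A e^{rt}+\frac{\phi+\psi_1}{\phi(r-\psi_1)}\tilde h.
\]

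Then I would impose the transversality condition (\ref{tvc:2}). Because $r=\rho$, $\lambda>0$ is constant, and the stationary level satisfies $b^* e^{-rt}\to 0$, the TVC on $b=b^*+\tilde b$ reduces to $\lim_{t\to\infty}\tilde b\, e^{-rt}=0$. Substituting the general solution yields $\tilde b\,e^{-rt}=A+\tfrac{\phi+\psi_1}{\phi(r-\psi_1)}\tilde h_0 e^{(\psi_1-r)t}$; the exponential term vanishes since $\psi_1-r<0$, so the limit equals $A$, forcing $A=0$. This gives (\ref{eqeq:b}). Evaluating at $t=0$ then produces $\tilde b_0=\tfrac{\phi+\psi_1}{\phi(r-\psi_1)}\tilde h_0$, which is (\ref{eqeq:lambda}); since $\psi_1$ depends on $\lambda$ (Appendix A, Lemma 1), this last equality is read as the implicit equation that pins down $\lambda$.

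The main obstacle is not the computation, which is routine, but the clean reduction of the transversality condition to the single algebraic requirement $A=0$. I would be careful to verify the two facts that make this work: that the steady-state contribution decays, $b^* e^{-rt}\to 0$ (so the \emph{habit-driven} deviation controls the TVC), and that $\psi_1-r$ is strictly negative (so the particular-solution mode decays while the homogeneous mode $A e^{rt}$ is the only explosive term to be killed). A secondary point worth flagging is the interpretation of (\ref{eqeq:lambda}) as an equation for $\lambda$ through the map $\psi_1=\psi_1(\lambda)$, and, if one wants completeness, a brief argument that it selects an admissible root consistent with $\psi_1<0$.
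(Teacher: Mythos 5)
Your proposal is correct and follows essentially the same route as the paper: substitute the consumption--habits relation into the bond equation, solve the resulting linear ODE with exponential forcing, use the transversality condition to eliminate the $e^{rt}$ mode, and read the $t=0$ evaluation as the implicit equation pinning down $\lambda$. The only cosmetic difference is that you parametrize the general solution by a free constant $A$ and set $A=0$, whereas the paper writes the solution of the initial value problem directly and sets the coefficient $\tilde b_0-\frac{\phi+\psi_1}{\phi(r-\psi_1)}\tilde h_0$ of $e^{rt}$ to zero; these are the same step.
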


\medskip

Without specifying an utility function it is not possible to find explicitly the value of $\lambda$. For this reason, we will choose,  in Section \ref{Section:LockdownQ}, a specific functional form  to assess the economic effects of a lockdown. Before doing so, we will briefly describe in the next section an economy where only sector 1 is active.

\section{Economy with one active sector of production}  \label{Sec:1sector}

In this section, we consider the case where the sector producing good 2 is inactive and a share $a\in(0,1)$ of labor, previously allocated in sector 2, is re-allocated in sector 1. This may happen either because the economy is in lockdown or because it is not profitable to keep sector 2 open. The households' optimization problem writes
$$\max_{c_1,h, b}\ \ \int_0^\infty e^{-\rho t} u(c_1,h) dt$$
subject to the following constraints
\begin{eqnarray}
	&& \dot b+c_1=rb+w_1[\xi+a(1-\xi)]\bar\ell + \tau +\pi_1 \label{onesector_bc}\\
	%&& \ell_1+\ell_2=1 \\
	&& \dot {h}=\phi( c_1 - h)  \\
	%&& \dot {h}_2=\rho_2( c_2 - h_2) \label{habits_state2} \\
	&& b(0)=b_0, \ \ \ h(0)=h_{0},  \ \ given
\end{eqnarray}
The inequality constraints, $b,h,c>0$, also hold. The maximum principle leads to the FOCs (\ref{foc:c1}), (\ref{foc:h})-(\ref{tvc:2}).

The profit-maximization problem of the firms producing good 1 is also the same while the labor market clearing condition becomes
\begin{equation}
		\ell_1=[\xi+a(1-\xi)]\bar\ell, \qquad if \ \ \tau\leq (1-\alpha)y_1, \label{lmrktclearLD}
\end{equation}
where $y_1=\{[\xi+a(1-\xi)]\bar\ell\}^\alpha$ with $a\in(0,1)$, meaning that the final good 1 production has expanded since a share $a$ of labor previously allocated in sector 2 is now used in sector 1. The good market clearing condition is instead
\begin{equation}
			\dot b+c_1=rb+ y_1, \qquad  if \ \ \tau\leq (1-\alpha)y_1. \label{cmrktclearLD}\\
\end{equation}

Moreover, given the structure of our model, the functional form of the solution  with one or two active sectors of production is the same. In other words, Lemma \ref{Lemma:eigenvalues}, Proposition \ref{Prop:solsystem}, and Proposition \ref{Prop:solsystem_b} as well as equations (\ref{eqeq:h}),(\ref{eqeq:mu}),(\ref{eqeq:c1h}), (\ref{eqeq:b}), and (\ref{eqeq:lambda}) still hold in the case with only one sector. Of course, the path of the aggregate variables will be different since the absence of sector 2 and the labor reallocation from sector 1 to 2 will change the value of $\lambda$ and, therefore, the steady state values of the main aggregate variables; moreover, it will also affect the transitional dynamics since the eigenvalue $\psi_1$ depends on $\lambda$.  For the same reasons explained before, no further analysis is possible without assuming a specific utility function.

\section{Lockdown and its effects on the economy} \label{Section:LockdownQ}

In this section, we will study the effect of a lockdown on the economy. In particular, we will study its effect on the economy during the lockdown as well as after its end. To do so, we assume a linear-quadratic utility:\footnote{As already previously mentioned, the interested reader can find in Section 6 the case with two-types of workers and see that the same results emerge.}
\begin{equation}\label{eqn:uquadratic}
u(c_1,c_2,h)=a_{c_1}c_1 +a_{c_2}c_2+a_h h+\frac{a_{c_1c_1}}2c^2_1+\frac{a_{c_2c_2}}2c^2_2+\frac{a_{hh}}2h^2+a_{c_1c_2}c_1c_2+a_{c_1h}c_1h+a_{c_2h}c_2h,
\end{equation}
with the parameter conditions for concavity respected.\footnote{In the numerical example we will check that these conditions are indeed respected in the different scenarios we will investigate.} This functional form has been extensively used in the rational addiction literature (see among others Becker and Murphy \cite{Becker}, Dockner and Feichtinger \cite{Dockner}, and Iannaccone \cite{Iannaccone}).
Moreover, this functional form has several advantages. First, it makes the model analytically tractable since it is possible to find the shadow price, $\lambda$. In fact, with this functional form, all the second derivatives of the utility function are  constant and, therefore, it is immediate to see that the eigenvalue, $\psi_1$, will be no more a function of the co-state variable $\lambda$ whose value can be found using Proposition \ref{Prop:solsystem_b}. Second, we will be able to study the global dynamics and not just the local dynamics of the economy since an optimal control problem with linear-quadratic objective and linear states equations leads to a linear system of ODEs describing the dynamics of the economy. Last but not least, it is possible to perform a welfare analysis without the usual approximation issues (e.g. Benigno and Woodford \cite{Benigno}).

Before describing the timing of the shocks to the economy it is useful to find explicitly the steady state of an economy with two active sectors of production. Remember that in this case the output in the two sectors is   $y_1=(\xi\bar\ell)^\alpha$ and $y_2=[(1-\xi)\bar\ell]^\alpha$.

\bigskip

\begin{proposition}[\textbf{Steady state with 2 active sectors}]\label{Prop:prelock}
Assume that $a_{c_1h}<\bar a_{c_1h}$, $a_{c_1}>\underline a_{c_1}$, $a_{c_2}\geq\underline a_{c_2}$, and $b_0\in(\max\{\underline b_0,0\},\bar b_0)$.
%$$a_{c_1h}<\bar a_{c_1h}, \qquad a_{c_1}>\underline a_{c_1}, \qquad a_{c_2}\geq\underline a_{c_2}, \qquad h_0> \max\{\underline h_0,0\} \qquad and \qquad b_0\in(\max\{\underline b_0,0\},\bar b_0).$$
Then, the steady state values of the main aggregate variables are
	\begin{eqnarray}
		&& h^*=  \frac{\phi(\psi_1-r)}{(\phi+r)\psi_1}\left[rb_0+y_1+\frac{r(\phi+\psi_1)}{\phi(\psi_1-r)}h_0\right] \label{h*}\\
		&& c^*_1= h^*\\
		&& b^*= \frac{h^*-y_1}r \label{eq:bstar} \\
		&& p^*= \frac{a_{c_2}+a_{c_2c_2}y_2+a_{c_1c_2}c^*_1+a_{c_2h}h^*}{\lambda} \label{pstar}\\
		&& \lambda=m_0+m_1\left[rb_0+y_1+\frac{r(\phi+\psi_1)}{\phi(\psi_1-r)}h_0\right] \label{lambda_LQ}
	\end{eqnarray}
with $m_0= a_{c_1}-\underline a_{c_1}>0$, $m_1= \frac{\phi(\psi_1-r)}{(\phi+r)\psi_1}\cdot\frac{(\phi+\rho)a_{c_1c_1}+(\rho+2\phi)a_{c_1h}+\phi a_{hh}}{\phi+\rho}<0$
%\begin{eqnarray}
%	&& m_0= a_{c_1}-\underline a_{c_1}>0\\
%	&& m_1= \frac{\phi(\psi_1-r)}{(\phi+r)\psi_1}\cdot\frac{(\phi+\rho)a_{c_1c_1}+(\rho+2\phi)a_{c_1h}+\phi a_{hh}}{\phi+\rho}<0
%\end{eqnarray}
and the threshold values $\bar a_{c_1h}$, $a_{c_1}$, $\underline b_0$, $\bar b_0$, and $\underline a_{c_2}$ reported in Appendix.
\end{proposition}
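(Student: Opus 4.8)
The plan is to exploit the one structural feature that makes the linear-quadratic case tractable: since every second derivative of $u$ equals one of the constants $a_{c_1c_1},a_{c_1h},a_{hh},a_{c_1c_2},a_{c_2h},a_{c_2c_2}$, the coefficient matrix of the $(\tilde\mu,\tilde h)$ subsystem (\ref{eqeq:mu1})--(\ref{eqeq:h1}) is constant, and hence the stable eigenvalue $\psi_1$ of Lemma \ref{Lemma:eigenvalues} is a fixed number independent of $\lambda$. This breaks the fixed-point problem in $\lambda$ that is present in the general model (where the steady state feeds back into $\psi_1$) and lets me pin down the real allocation first and recover $\lambda$ as a residual. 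First I would record the relevant gradients at a candidate steady state, using $c_1^*=h^*$ (from $\dot h=0$ in (\ref{habits_state1})) and the decentralized-equilibrium identity $c_2^*=y_2$: namely $u_{c_1}^*=a_{c_1}+a_{c_1c_2}y_2+(a_{c_1c_1}+a_{c_1h})h^*$, $u_h^*=a_h+a_{c_2h}y_2+(a_{hh}+a_{c_1h})h^*$, and $u_{c_2}^*=a_{c_2}+a_{c_2c_2}y_2+a_{c_1c_2}c_1^*+a_{c_2h}h^*$.

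Next I would impose the remaining stationarity conditions. Setting $\dot b=0$ in the good-1 market clearing (\ref{cmrktclear}) yields $b^*=(c_1^*-y_1)/r=(h^*-y_1)/r$, which is the stated (\ref{eq:bstar}), while $\dot\mu=0$ in (\ref{foc:h}) gives $\mu^*=u_h^*/(\phi+\rho)$. The one genuinely substantive step is solving for $h^*$. Here I substitute $b^*=(h^*-y_1)/r$ into the transversality-pinned relation (\ref{eqeq:lambda}) of Proposition \ref{Prop:solsystem_b}, written as $b_0-b^*=\tfrac{\phi+\psi_1}{\phi(r-\psi_1)}(h_0-h^*)$, which — because $\psi_1$ is now a known constant — is a single linear equation in the single unknown $h^*$. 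Collecting terms, the coefficient of $h^*$ simplifies via
$$1-\frac{r(\phi+\psi_1)}{\phi(r-\psi_1)}=\frac{-\psi_1(\phi+r)}{\phi(r-\psi_1)},$$
and inverting reproduces exactly (\ref{h*}).

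With $h^*$ (hence $c_1^*$ and $b^*$) in hand, $\lambda$ follows statically. Plugging $\mu^*=u_h^*/(\phi+\rho)$ into the steady-state first-order condition (\ref{foc:c1}), $\lambda=u_{c_1}^*+\phi\mu^*$, and separating the $h^*$-free part from the $h^*$-coefficient gives
$$\lambda=\underbrace{a_{c_1}+a_{c_1c_2}y_2+\frac{\phi(a_h+a_{c_2h}y_2)}{\phi+\rho}}_{=:\,m_0}+\frac{(\phi+\rho)a_{c_1c_1}+(\rho+2\phi)a_{c_1h}+\phi a_{hh}}{\phi+\rho}\,h^*.$$
Substituting the closed form (\ref{h*}) for $h^*$ and factoring out its prefactor $\tfrac{\phi(\psi_1-r)}{(\phi+r)\psi_1}$ reproduces (\ref{lambda_LQ}) with precisely the stated $m_1$; the threshold $\underline a_{c_1}$ is then simply the value making $m_0=0$, and the sign $m_1<0$ follows from combining $\psi_1<0$ with the concavity restrictions. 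Finally $p^*$ comes directly from (\ref{foc:c2}) at the steady state, $p^*=u_{c_2}^*/\lambda$, which upon inserting the gradient above is (\ref{pstar}).

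It remains to justify the parameter restrictions, which is where most of the care lies. The condition $a_{c_1h}<\bar a_{c_1h}$ is the linear-quadratic incarnation of $u_{c_1h}^*<\bar u_{c_1h}^*$ in Proposition \ref{Prop:solsystem}, guaranteeing that $\psi_1$ is real and negative so that (\ref{eqeq:b}) and the entire construction are valid; $a_{c_1}>\underline a_{c_1}$ forces $m_0>0$; and I would obtain $\bar b_0$, $\underline b_0$, and $\underline a_{c_2}$ by requiring, respectively, $h^*>0$ (equivalently $c_1^*>0$) together with $b_0>0$, and the numerator of $p^*$ to be nonnegative, reading each threshold off the corresponding closed form. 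Uniqueness is immediate, since every equation solved above is linear in its unknown. The main obstacle is conceptual rather than computational: one must recognize that the $\lambda$-independence of $\psi_1$ decouples the system, so that $h^*$ is determined by the bond/transversality relation alone and $\lambda$ emerges as a residual from the static optimality condition. Once that ordering is seen, the rest is bookkeeping — the algebraic collection of $m_0,m_1$ into the stated form and the reverse-engineering of the four thresholds to guarantee an admissible interior (positive) steady state.
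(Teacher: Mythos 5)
Your derivation of the five displayed formulas is correct and is essentially the paper's own argument: both rest on Lemma \ref{Lemma:eigenvalues}, Propositions \ref{Prop:solsystem}--\ref{Prop:solsystem_b}, the stationarity conditions $c_1^*=h^*$, $rb^*=h^*-y_1$, $\mu^*=u_h^*/(\phi+\rho)$, and the transversality-pinned relation $b_0-b^*=\frac{\phi+\psi_1}{\phi(r-\psi_1)}(h_0-h^*)$, exploiting that $\psi_1$ is $\lambda$-free in the linear-quadratic case. The only difference is ordering: the paper first writes $h^*$ as an affine function of $\lambda$ from the first-order conditions and then solves the transversality relation for $\lambda$, whereas you solve that relation directly for $h^*$ and recover $\lambda$ afterwards from $\lambda=u_{c_1}^*+\phi\mu^*$. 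These are the same linear system solved in a different order (the paper itself performs exactly your computation when it later verifies $h^*>0$), and your $m_0$ and $m_1$ agree with the paper's.

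The gap is in the thresholds. First, you never impose $\lambda>0$, which is where $\bar b_0$ comes from: since $m_0>0$ and $m_1<0$, positivity of $\lambda$ in (\ref{lambda_LQ}) forces $b_0<\underline b_0-\frac{m_0}{rm_1}\equiv\bar b_0$; without this step the upper end of the interval $(\max\{\underline b_0,0\},\bar b_0)$ is unexplained, and $p^*=u_{c_2}^*/\lambda$ could even have the wrong sign. Second, $\underline a_{c_2}$ is not ``the value making the numerator of $p^*$ nonnegative'': it is obtained from the profitability requirement $p^*\geq\tau/\bigl((1-\alpha)y_2\bigr)$ that keeps sector 2 active (the paper's $\underline a_{c_2}$ explicitly contains $\tau$ and $(1-\alpha)y_2$), so your reverse-engineering would produce a different and weaker threshold. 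The remaining restrictions --- $\underline b_0$ from $h^*>0$, $\underline a_{c_1}$ from $m_0>0$, $\bar a_{c_1h}$ from the saddle-path eigenvalue condition --- are handled as in the paper; note only that the sign of the second factor of $m_1$ follows from $a_{c_1h}<\bar a_{c_1h}$ (equivalently, from the eigenvalues having opposite signs), not from the concavity restrictions alone.
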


\medskip

Observe that conditions $a_{c_1h}<\bar a_{c_1h}$, $a_{c_1}>\underline a_{c_1}$, and $b_0\in(\underline b_0,\bar b_0)$ are needed to have both $h^*$ and $\lambda$ strictly positive while  condition $a_{c_2}\geq\underline a_{c_2}$ guarantees that the steady state prices are high enough for the firms in sector 2 to find profitable to produce.

We can now describe how the lockdown is modelled. At $t=0$ an unanticipated temporary lockdown is imposed on sector 2 which would have  found profitable to be active otherwise. The duration of the lockdown, $\tilde t$, is unknown and the lifting of the lockdown is another unanticipated shock from the agents' perspective. Once the lockdown ends, sector 2 will reopen if it is profitable to do so.

We will now describe the dynamics of the economy in the different phases of the lockdown.

%Observe also that if the condition for a positive steady state value of the habits stock holds then $h_0>0$ is sufficient to guarantee that the habits remain positive along their path
%$$h=h^*+(h_0-h^*)e^{\psi_1 t}$$
%
%\red{several other conditions to check for initial consumption and most importantly initial price price to be respectively positive and over the threshold for both sectors to be active.}

\subsection{Arrival of the lockdown}

Suppose that the lockdown on sector 2 is unanticipated, temporary, and implemented at $t=0$. Since it is unanticipated, the agents will re-optimize following the problem setup explained in Section \ref{Sec:1sector} with initial condition $h_0$ and $b_0$, and parameter restrictions such that the economy without the lockdown (NL) would have produced both the final goods. This means that the parameters have been chosen so that  the price dynamics in an economy without a lockdown would have been:
\begin{equation}
	p_{NL}=p^*+ \frac{(\phi+\psi_1)a_{c_2c_1}+\phi a_{c_2 h}}{\phi\left\{m_0+m_1\left[rb_0+y_1+\frac{r(\phi+\psi_1)}{\phi(\psi_1-r)}h_0\right]\right\}}(h_0-h^*_{NL})e^{\psi_1 t}\geq \frac{\tau}{(1-\alpha)y_2}
\end{equation}
with $h^*_{NL}=h^*$ as found in equation (\ref{h*}). We will also assume that $h_0<h^*_{NL}$, although the other cases are also tractable as shown in Appendix A.

Since the duration of the lockdown, $\tilde t$, is unknown and the ending of the lockdown is modelled as another unanticipated shock, then the representative agent solves the same problem described in Section \ref{Sec:1sector} and we can use the results found previously. In particular, we have that the dynamics during the lockdown, i.e. in $t\in[0,\tilde t]$, is described by the following equations:
	\begin{eqnarray}
		&& h_L=h^*_L+(h_{0}-h^*_L)e^{\psi_1 t} \label{eqeq:hL} \\
		&& c_{1,L}=-\frac{\psi_1}{\phi}h^*_L+\frac{\phi+\psi_1}{\phi}h_L \\
		&& b_L= -\frac{\psi_1(\phi+r)}{r\phi(r-\psi_1)}h^*_L-\frac{y_{1,L}}r+\frac{\phi+\psi_1}{\phi(r-\psi_1)}h_L  \label{eqeq:bL}
	\end{eqnarray}
where the lockdown steady state habit stock is
\begin{equation}
	h^*_L=  \frac{\phi(\psi_1-r)}{(\phi+r)\psi_1}\left[rb_0+y_{1,L}+\frac{r(\phi+\psi_1)}{\phi(\psi_1-r)}h_0\right] \label{eqeq:hstarLock}
\end{equation}
with $y_{1,L}=\{[\xi+a(1-\xi)]\bar\ell\}^\alpha$ since a share $a\in(0,1)$ of labor in sector 2 has been re-allocated in sector 1. Notice also that for the same initial conditions, $h^*_L>h^*_{NL}$ since $y_{1,L}>y_{1,NL}$.

\subsection{After the lockdown}

At $t=\tilde t$ the agents re-optimize since the re-opening of the economy is again an unanticipated shock. Notice that the initial conditions for this problem are now $h_{\tilde t}$ and $b_{\tilde t}$ which can be found looking at equations (\ref{eqeq:hL}) and (\ref{eqeq:bL}). In particular, we have that:
\begin{eqnarray}
	&& h_{\tilde t}=h^*_L+(h_{0}-h^*_L)e^{\psi_1 \tilde t} \label{eqeq:inihAL} \\
%	&& c_1=-\frac{\psi_1}{\phi}h^*_L+\frac{\phi+\psi_1}{\phi}h \\
	&& b_{\tilde t}= -\frac{\psi_1(\phi+r)}{r\phi(r-\psi_1)}h^*_L-\frac{y_{1,L}}r+\frac{\phi+\psi_1}{\phi(r-\psi_1)}h_{\tilde t}  \label{eqeq:inibAL}
\end{eqnarray}

In order to understand whether sector 2 will find profitable after the lockdow (AL) to be active or not we proceed as it follows. We assume that it stays open and then we check whether at the prevailing equilibrium prices it is actually optimal to do so.\footnote{This actually means that the representative household in solving her maximization problem does not take into account ex ante the price inequality for sector two to remain active. For example, this may happen when the fixed cost of production is a private information of the firm. Of course this ``asymmetric information" does not lead to a market failure if the equilibrium price, $p$, found by solving the problem in Section \ref{Sec:2sectors} is over the threshold and sector two remains active. However, it may represent a market failure if the equilibrium price is low and sector two firms do not find  profitable to remain active. As a consequence, the shutdown of a sector with the implied unemployment is not necessary Pareto optimal.} Therefore, we start assuming that labor readjusts immediately after the lockdown so that $y_1=(\xi\bar\ell)^\alpha$ and  $y_2=[(1-\xi)\bar\ell]^\alpha$. This assumption will be relaxed later in Section \ref{Sec:LaborChange} where the case of a permanent large change in the composition of the labor forces, as being observed across developed countries, will be studied.

The equilibrium path of the main variables in the case of a full readjustment of the labor force  to its pre-lockdown level can be found by adapting the results of Section \ref{Sec:2sectors}.

In particular, the equilibrium path of the habit stock in $t\in[\tilde t,\infty]$ is
\begin{equation}
	h_{AL}=h^*_{AL}+(h_{\tilde t}-h^*_{AL})e^{\psi_1 (t-\tilde t)} \label{habitpath}
\end{equation}
or equivalently
\begin{equation}
	h_{AL}=h^*_{AL}+\left[\underbrace{h^*_L-h^*_{AL}}_{>0}+\underbrace{(h_{0}-h^*_L)}_{<0}e^{\psi_1 \tilde t}\right]e^{\psi_1 (t-\tilde t)} \label{eqeq:hAL}
\end{equation}
where $$h^*_{AL}=h^*_{NL}=h^*$$ as proved in Appendix B with $h^*$ as found in Proposition \ref{Prop:prelock}.  Then, it is immediate to see that $h^*_L>h^*_{AL}$ because  after the lockdown the labor market readjusts immediately to the pre-pandemic situation, i.e. $y_{1,AL}=(\xi\bar \ell)^{\alpha}=y_{1,NL}$. In addition, we have also that $h_0<h^*_L$ since we assumed before that $h_0<h^*_{NL}=h^*_{AL}$. It can also be proved (see Proposition \ref{Prop:prelock}) that a full readjustment of the labor composition to its pre-lockdown level will imply the same shadow prices  $$\lambda_{NL}=\lambda_{AL}=\lambda.$$ Taking all these considerations into account, we can now write the equilibrium price dynamics of final good 2 in $t\in[\tilde t,\infty]$:
\begin{equation}
	p_{AL}=p^*+ \frac{(\phi+\psi_1)a_{c_2c_1}+\phi a_{c_2 h}}{\phi\left\{m_0+m_1\left[rb_0+y_1+\frac{r(\phi+\psi_1)}{\phi(\psi_1-r)}h_0\right]\right\}}\left[\underbrace{h^*_L-h^*}_{>0}+\underbrace{(h_{0}-h^*_L)}_{<0}e^{\psi_1 \tilde t}\right]e^{\psi_1(t-\tilde t)}
	\end{equation}
We are now ready to prove the main result of this paper. The following proposition describes the price dynamics when we are comparing two economies which are exactly identical at $t=0$. However, the first economy never experiences a lockdown and parameters are chosen so that both sectors remain active, while the other economy experience a lockdwon of $\tilde t$ periods. What will it happen after the lockdown? Will sector 2 firms find always profitable to reopen or not? The next proposition provides a taxonomy of all possible scenarios.

\begin{proposition}\label{Prop:keyresult}

 Consider the price dynamics in an economy without a lockdown (NL) and in an economy with a $\tilde t$-period lockdown (AL):
	$$
	p_{t,NL}=p^*+\frac{(\phi+\psi_1)a_{c_2c_1}+\phi a_{c_2 h}}{\phi \lambda}(h_0-h^*)e^{\psi_1 t} \ \ with \ \ t\in[0,\infty].
	$$
		$$
	p_{t,AL}=p^*+ \frac{(\phi+\psi_1)a_{c_2c_1}+\phi a_{c_2 h}}{\phi\lambda}\left[h^*_L-h^*+(h_{0}-h^*_L)e^{\psi_1 \tilde t}\right]e^{\psi_1(t-\tilde t)} \ \ with \ \ t\in[\tilde t,\infty]
	$$
with $\lambda$, $p^*$, $h^*$ as found in Proposition \ref{Prop:prelock} and $h^*_L$ as found in equation (\ref{eqeq:hstarLock}) and with $h_0<h^*$.
Then the following results hold:
\begin{itemize}
	\item if the satiation dominates the substitutability effect, $a_{c_2h}<\bar a_{c_2h}$, and the lockdown is sufficiently long, $\tilde t>\underline{ \tilde t}$, then
\begin{equation} p_{ t,AL}<p^*<p_{t,NL} \qquad with \ \ t\in[\tilde t,\infty]; \label{shutdown}\end{equation}
\item if the satiation dominates the substitutability effect, $a_{c_2h}<\bar a_{c_2h}$, and the lockdown is sufficiently short, $\tilde t<\underline{\tilde t}$, then
\begin{equation} p^*<p_{ t,AL}<p_{t,NL} \qquad with \ \ t\in[\tilde t,\infty];\label{noshutdown1} \end{equation}
\item if instead the substitutability dominates the satiation effect, $a_{c_2h}>\bar a_{c_2h}$, and the lockdown is sufficiently long, $\tilde t>\underline{ \tilde t}$, then
\begin{equation}p_{ t,AL}>p^*>p_{t,NL} \qquad with \ \ t\in[\tilde t,\infty] \label{noshutdown2};\end{equation}
\item if the substitutability dominates the satiation effect, $a_{c_2h}>\bar a_{c_2h}$, and the lockdown is sufficiently short, $\tilde t<\underline{ \tilde t}$, then
\begin{equation} p^*>p_{ t,AL}>p_{t,NL} \qquad with \ \ t\in[\tilde t,\infty];\end{equation}
\end{itemize}
where
\begin{small}
$$\bar a_{c_2h}\equiv -\frac{\phi+\psi_1}{\phi} a_{c_2c_1}, \qquad and \qquad \underline{\tilde t}=\frac{\ln(h^*_L-h_0)-\ln(h^*_L-h^*)}{|\psi_1|}$$
\end{small}
Similar results can be obtained for $h_0>h^*$, see the proof in Appendix A for further details.
\end{proposition}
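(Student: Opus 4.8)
The plan is to treat both price paths as the explicit closed-form functions they already are and reduce the entire statement to sign analysis. First I would introduce the common prefactor $K\equiv\frac{(\phi+\psi_1)a_{c_2c_1}+\phi a_{c_2 h}}{\phi\lambda}$ shared by the two formulas and record its sign. Since $\phi>0$ and $\lambda>0$, the defining inequality $a_{c_2h}\lessgtr\bar a_{c_2h}=-\frac{\phi+\psi_1}{\phi}a_{c_2c_1}$ is equivalent to $K\lessgtr 0$. Thus ``satiation dominates'' is exactly $K<0$ and ``substitutability dominates'' is exactly $K>0$; this is precisely the equilibrium translation of condition (\ref{KEY1}).

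Second I would isolate the bracket $B(\tilde t)\equiv h^*_L-h^*+(h_0-h^*_L)e^{\psi_1\tilde t}$ appearing inside $p_{t,AL}$. Using $h^*_L>h^*$, $h_0<h^*_L$ and $\psi_1<0$, the map $\tilde t\mapsto B(\tilde t)$ is strictly increasing, with $B(0)=h_0-h^*<0$ and $B(\tilde t)\to h^*_L-h^*>0$ as $\tilde t\to\infty$. Solving $B(\tilde t)=0$ gives $e^{\psi_1\tilde t}=\frac{h^*_L-h^*}{h^*_L-h_0}$, hence the unique threshold $\underline{\tilde t}=\frac{\ln(h^*_L-h_0)-\ln(h^*_L-h^*)}{|\psi_1|}$ stated in the proposition. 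Consequently $B(\tilde t)>0$ for $\tilde t>\underline{\tilde t}$ and $B(\tilde t)<0$ for $\tilde t<\underline{\tilde t}$.

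Third I would read off where each path sits relative to $p^*$. Because $e^{\psi_1 t}>0$ and $h_0<h^*$, we get $\operatorname{sign}(p_{t,NL}-p^*)=-\operatorname{sign}(K)$ for every $t$, while $\operatorname{sign}(p_{t,AL}-p^*)=\operatorname{sign}(K)\cdot\operatorname{sign}(B(\tilde t))$ for every $t\geq\tilde t$ (the factor $e^{\psi_1(t-\tilde t)}$ being positive). These two observations already fix on which side of $p^*$ each price lies in all four scenarios.

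Fourth---and this is the one computation I would actually carry out---I would compare the two paths head to head. Factoring $e^{\psi_1 t}$ out of $p_{t,NL}-p_{t,AL}$ and substituting $e^{-\psi_1\tilde t}B(\tilde t)=(h^*_L-h^*)e^{-\psi_1\tilde t}+(h_0-h^*_L)$ collapses the difference to
$$p_{t,NL}-p_{t,AL}=K\,e^{\psi_1 t}\,(h^*_L-h^*)\bigl(1-e^{-\psi_1\tilde t}\bigr).$$
Here $h^*_L-h^*>0$, and because $\psi_1<0$ forces $e^{-\psi_1\tilde t}>1$, the final factor is negative; hence $\operatorname{sign}(p_{t,NL}-p_{t,AL})=-\operatorname{sign}(K)$ for all $t\geq\tilde t$, \emph{independently} of the lockdown length. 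Combining this ordering of $p_{NL}$ and $p_{AL}$ with the third step's placement of each relative to $p^*$ produces the four displayed chains of inequalities case by case. The only delicate point is the clean cancellation in this fourth step; once it is verified, the taxonomy is immediate, and the symmetric regime $h_0>h^*$ follows by reversing the sign of $h_0-h^*$ throughout.
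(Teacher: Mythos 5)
Your proposal is correct and follows essentially the same route as the paper's proof: both read off the position of each price relative to $p^*$ from the sign of the common prefactor $\frac{(\phi+\psi_1)a_{c_2c_1}+\phi a_{c_2h}}{\phi\lambda}$ times the relevant bracket, derive $\underline{\tilde t}$ as the zero of $h^*_L-h^*+(h_0-h^*_L)e^{\psi_1\tilde t}$, and reduce the head-to-head comparison to $p_{t,AL}-p_{t,NL}=K\,e^{\psi_1(t-\tilde t)}(h^*_L-h^*)(1-e^{\psi_1\tilde t})$, whose sign is that of $K$ alone. Your write-up is if anything slightly more systematic, since the paper states the head-to-head cancellation only for $K>0$ and leaves the symmetric case implicit.
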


\medskip

Several interesting facts emerge from this proposition. First, the result in (\ref{shutdown}) shows how deeply a lockdown may disrupt the economic activity. Consider for example an economy without a lockdown where both sectors are active as the fixed cost $\tau$ has been chosen so that the price threshold for sector two is below $p^*$. Then the same economy but with a lockdown sufficiently long would have implied a price dynamics such that the prices after the lockdown will be low enough for the firms in sector 2 to find no more profitable to remain active.

%First assume that $(\phi+\psi_1)a_{c_2c_1}+\phi a_{c_2 h}<0$ or equivalently $a_{c_2 h}<-\frac{\phi+\psi_1}\phi a_{c_2c_1}$ then it is always the case that
%$$p_{0,NL}-p_{\tilde t,AL}>0$$
%with the difference increasing longer is the lockdown since $\frac{d p_{\tilde t,AL}}{d\tilde t}<0$. Therefore a sufficiently long lockdown may reduce the price of good 2 below the threshold for sector 2 to remain active.

Observe also that the price after the lockdown tends to increase over time and to converge to $p^*$ which is the same as at the pre-pandemic level meaning that under that price sector 2 would find profitable to be active. Therefore, a policy implication of our model is that a government should subsidize firms in sector 2 after a lockdown till when the price has increased enough for them to find profitable to stay open.

On the other hand, the results in (\ref{noshutdown1})-(\ref{noshutdown2}) show that the good 2 price at the end of the lockdown could be permanently higher than at its pre-lockdown level. As it will be explained in details in Section \ref{sec:pent-up}, the increase in price depends on the pent-up demand formed during the lockdown period.

Another consideration is about the role of the habits initial condition, $h_0$. The condition $h_0<h^*$ matters only for the position of $p^*$ with respect to $p_{NL}$ and $p_{AL}$ while it has no role on the relation between these last two prices which is completely driven by the satiation and substitutability effect. The interested reader may find in the proof of Proposition \ref{Prop:keyresult} a more general formulation which consider also the case $h_0>h^*$.

In the next two subsections, we will consider and discuss two scenarios. In the first we will show numerically how negatively the economic activity can be affected by a lockdown when the satiation dominates the substitutability effect. In the second, we will consider an opposite scenario where the substitutability effect dominates the satiation effect and show how the pent-up demand on good 2 may drive a strong economic recovery.

\subsection{Price dynamics and sector 2 shutdown}\label{subsec:numerics}
We want now to illustrate through a numerical example the price dynamics as predicted by relation (\ref{shutdown}) and (\ref{noshutdown1}). This part is also useful because we want to verify at least numerically  that the several inequality constraints introduced throughout our analysis actually hold.

Let us start with the production side of the economy. Assume that before the lockdown the two sectors receive an equal amount of work, $\xi=0.5$, and that the work endowment is normalized to one, $\bar\ell=1$. During the lockdown, thirty percent, $a=0.3$, of labor in sector 2 is re-allocated in sector 1 consistently with the finding of Barrero et al. \cite{Barreroetal}. The labor share is $\alpha=0.7$ implying a total production $y_1=y_2=0.6156$ if both the sectors are active and $y_1=0.7397$ if sector 2 is inactive.

Looking now at the household problem, we set $a_{c_1}=a_{c_2}=1$ and $a_{c_1c_1}=a_{c_2c_2}=-1$. We also consider the case of substitute goods with $a_{c_1c_2}=0.3$, and of harmful habits, $a_h=-0.5$ and $a_{hh}=-1$. A positive change in the habits increases the marginal utility of good 1 consumption as $a_{c_1 h}=0.6$, while it decreases the marginal utility of good 2 consumption as $a_{c_2 h}=-0.1$. The last two parameter choices guarantee respectively a negative eigenvalue, so that there is convergence to the steady state, and satiation stronger than the substitutability effect. Finally we need to set the habits speed of adjustment to change in consumption, $\phi$. The drastic change in habits documented in the previously mentioned literature (see the introduction) seems to suggest habits promptly adapting to the new lifestyle imposed by the lockdown. For this reason we set $\phi=0.15$. This value implies that the half-life with which habits adjust toward a permanent change in $c_1$ is slightly more than one year which is a bit lower than the two years suggested in Carroll et al. \cite{Carrolletal} but basically the same as the value proposed by the literature on the equity premium puzzle (see Abel \cite{Abel} among others).\footnote{Carroll et al. \cite{Carrolletal} at page 345 explain that the choice of this value depends on the context examined. If the emphasis is on economic growth then lower values of $\phi$ can be chosen.}

The remaining parameters are set as it follows $r=\rho=0.01$, $h_0=0.5$, $b_0=1$ and $\tau=0.482$. With this choice of the parameters we have that the final good 1 is addictive since $\frac{dc_1}{dh}=\frac{\phi+\psi_1}\phi=\frac{0.1-0.0868}{0.1}=1.8679>0$.

All inequalities for concavity in Proposition \ref{Prop:prelock} are respected with this choice of the parameters. In particular, the utility function is strictly concave since $a_{c_1c_1}a_{hh}-a_{c_1h}^2=0.64>0$ and $|D^2u(c_1,c_2,h)|=-0.576<0$ and we have that the thresholds are equal to $\bar a_{c_1h}=1$, $\underline a_{c_1}=0.3258$, $\bar a_{c_2h}=-0.0396$, $\underline b_0=-60.8747$, and $\bar b_0=26.20$  meaning that all the parameters have been chosen within their constraints.

This choice of the parameters describes the first two cases, relations (\ref{shutdown}) and (\ref{noshutdown1}) respectively, in Proposition \ref{Prop:keyresult}. First observe that the length of the lockdown above which  sector 2 remains inactive is $\underline{\tilde t}=8.0548$ since for any lockdown longer than that the implied prices after the lockdown would be too low for the sector to find profitable to re-open.

\begin{center}
	\begin{figure}[!h]
		\centering
		%\hspace{-1cm}
		\begin{subfigure}[b]{0.7\textwidth}
			\includegraphics[width=\textwidth]{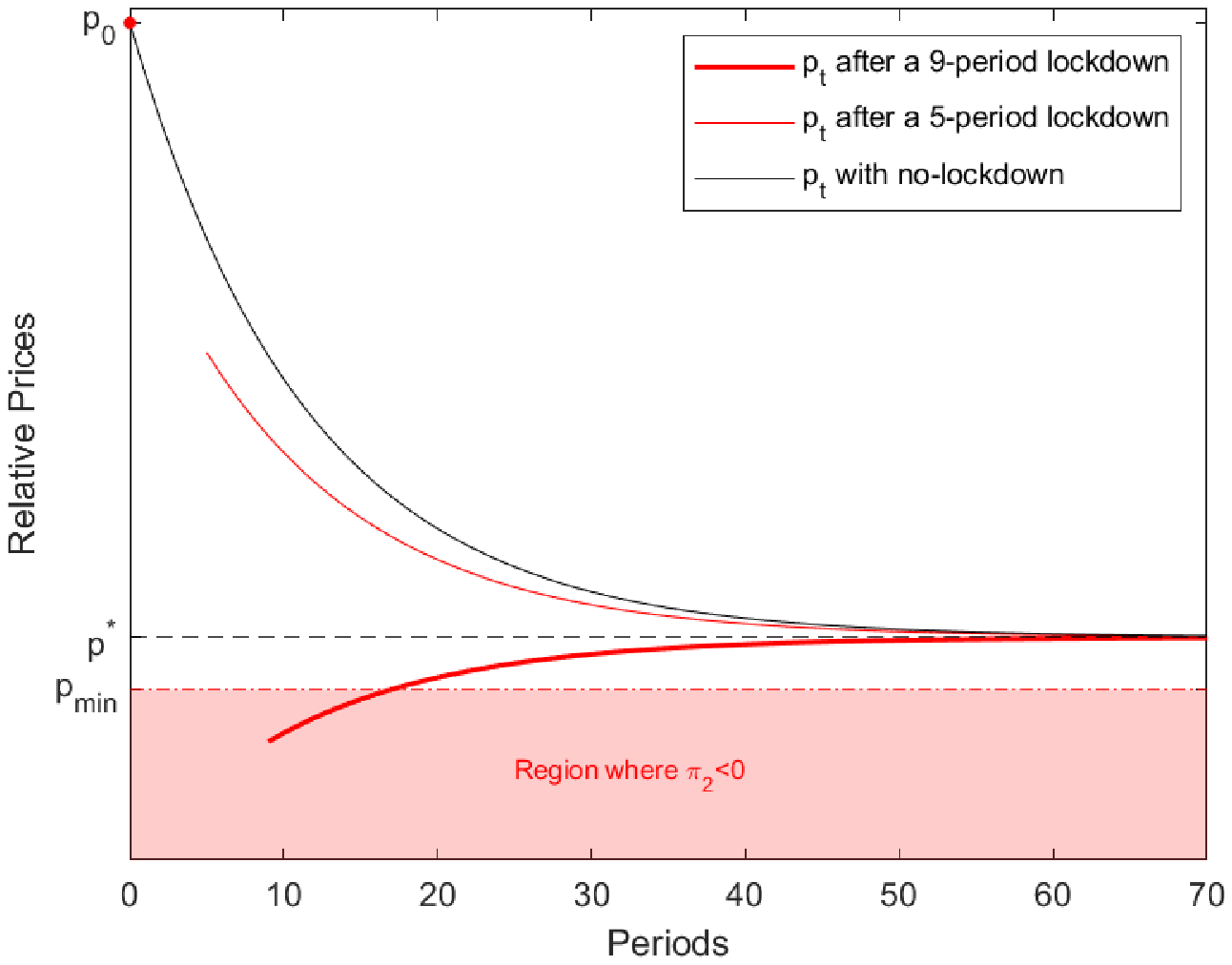}\caption{}
		\end{subfigure}
		\hfill	
		\begin{subfigure}[b]{0.7\textwidth}
			\includegraphics[width=\textwidth]{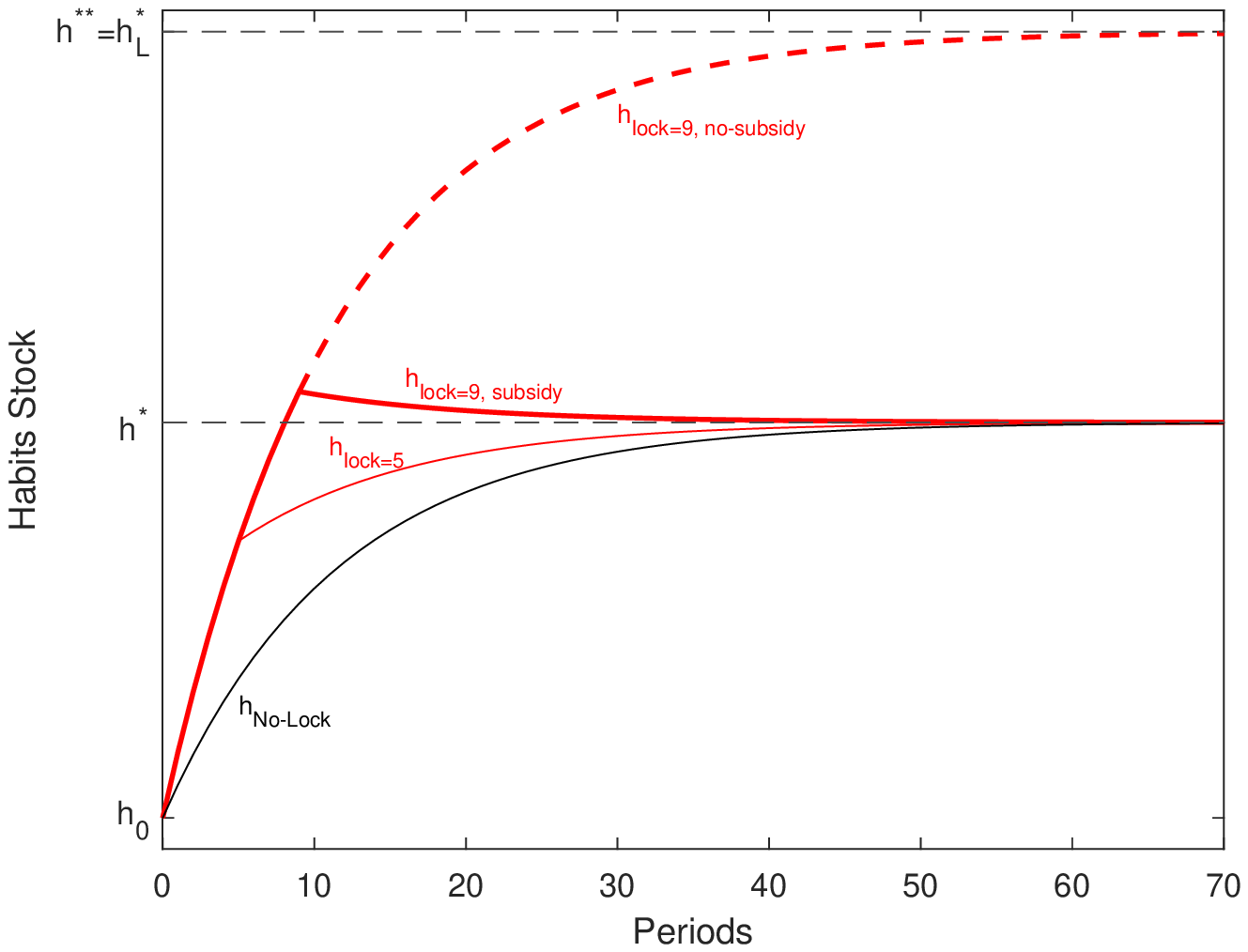}\subcaption{}
		\end{subfigure}
		\caption{Lockdown consequences when satiation dominates substitutability effect, $a_{c_2h}<\bar a_{c_2h}$: (a) Price dynamics. (b) Habits dynamics.}\label{Figure 1}
	\end{figure}
\end{center}

\vspace{-1cm}

This is illustrated in Figure \ref{Figure 1}a where the price dynamics is drawn assuming different lockdown durations. The red bold curve shows the price path when $\tilde t=9$ periods. Notice that the price of final good 2 at the time of the end of the lockdown is below the price threshold $p_{min}$ over which the sector does positive profit. Therefore, sector 2 will remain closed even after the lockdown since it is more profitable to do so. Observe however that if properly subsidize (i.e. a transitory  $\tau$ reduction) during the first 10 periods, then sector 2 would find convenient to remain open since eventually it will become profitable as the price path is no more in the red region. Finally the black and red line show the price path for different lengths of the lockdown both of them above $p^*$.\footnote{Notice that the relative price and habits path in the case of no-lockdown are qualitatively similar to those found empirically in Section \ref{CH:Empirical} for two specific sectors of the economy.}

On the other hand, Figure \ref{Figure 1}b shows the habits dynamics for different lengths of the lockdown. Notice that for the 9-period lockdown there are two paths. The first (red bold curve) shows the equilibrium habits dynamics assuming a government subsidy to sector 2 such as a reduction of the lump-sum tax so that the sector finds now profitable to remain active.\footnote{Alternatively the government could reduce the final good 2's price by introducing  vouchers as it was done by the UK government to subsidize the restaurants after the first lockdown. The scheme was called ``Eat out to help out'' and details can be found at the section ``Support for businesses and self-employed people during coronavirus'' of HM Revenue \& Customs webpage.}

%$\vspace{-0.5cm}

 Coming back to Figure \ref{Figure 1}b, the other path (red dotted bold curve) illustrates the habits dynamics without government intervention; without it sector 2 will remain inactive even after the lockdown because the prevailing equilibrium price depicted in Figure \ref{Figure 1}a implies negative profits. Notice that the government intervention is crucial because without it the consumers will continue to consume only the final good 1 and, therefore, the habits will accumulate more and more and the relative price will become lower and lower.

\subsection{Price dynamics and pent-up demand}\label{sec:pent-up}

Let us now focus on the case where the substitutability effect is stronger than the satiation effect. As shown by relations (\ref{noshutdown1}) and (\ref{noshutdown2}) in Proposition \ref{Prop:keyresult}, this condition implies that the after-lockdown prices, $p_{t,AL}$, are higher than before the lockdown, $p_{t,BL}$. In this section, we will show through a numerical example how this positive change in prices is related to the good 2 pent-up demand formed during the lockdown when consumption expenditures on good 2 were not possible. For this purpose, let us consider an economy which is at its steady state when suddenly a lockdown is imposed.\footnote{The economy is at its steady state when the initial habits condition is $h_0=rb_0+y_1$.} This means that
	$$p_{t,BL}=p^*, \qquad and \qquad p_{t,AL}=p^*+SSE \cdot (h^*_L-h_0)(1-e^{\psi_1\tilde t})e^{\psi_1(t-\tilde t)}$$
	where $SSE\equiv \frac{(\phi+\psi_1)a_{c_2c_1}+\phi a_{c_2 h}}{\phi \lambda}>0$ since the substitutability effect dominates the satiation effect, $a_{c_2h}>\bar a_{c_2h}$.

Let us consider now the following numerical example. Suppose that the parameters describing the production functions are chosen as in the previous exercise. However, we change the initial condition of $b_0$ from $b_0=1$ to $b_0=\frac{h_0-y_1}r$ so that the economy starts at its steady state. Notice that, the resulting value of $b_0$ is negative meaning that the debt will be repaid by running a good 1 trade surplus, $TB^*=y_1-c_1^*>0$.  Moreover, we set $a_{c_2}=0.8$, $a_{c_2c_2}=-1.257$, $a_{c_1c_1}=-0.7$, $a_{c_1c_2}=0.6$, and $a_{c_2h}=-0.005$ in order to have i) a lockdown recession implying a -20\% deviation of GDP from its steady state level, ii) a substitutability effect stronger than the satiation effect (exactly the opposite case of the previous numerical example), iii) all the conditions for concavity satisfied, iv) all the inequalities in Proposition \ref{Prop:prelock} satisfied. In addition, we consider a monthly frequency and, therefore, we assume $r=\rho=0.001$ and a lockdown length of 9 months.

The drastic change in habits documented in the previously mentioned literature (see the introduction) seems to suggest habits promptly adapting to the new lifestyle imposed by the lockdown. For this reason we set $\phi=0.15$ consistently with the previous numerical exercise.

%On the other hand, if  implies $SSE>0$ and the economy will end up in a demand-driven expansion since
%$$GDP_{t,AL}=y_1+p_{t,AL}y_2>y_1+p^*y_2=GDP^*$$

%\red{The mechanism at the base of the formation of the pent-up demand differs from the one proposed by Beraja and Wolf \cite{BerajaWolf} because in our setting the pent-up demand does not depend on the intertemporal substitution between different types of consumption (durables vs services) but instead it   depends on the interaction between the degree of substitutability between goods and on the role played by habits formation on the consumption of good 1 in affecting the consumption of the other good.}

%	\begin{figure}[!h]
%	\centering
%	%\hspace{-1cm}
%	\includegraphics[width=0.7\textwidth]{}\caption{Relative price percentage deviation from its steady state, $p^*$.}\label{pentup}
%\end{figure}

Using this choice of parameters we have computed the price dynamics before and after the lockdown. Figure \ref{pentup} shows the price deviations from its steady state level. The economy is characterized by a surge in $p$ after the lockdown; as it emerges from the figure, the price overshoots its steady state level $p^*$ by $1.6\%$ points at the date $t=9$ when the lockdown is lifted. This is  driven by the pent-up demand built up during the lockdown and the rigidities in production. As previously explained, the positive change in price happens when the substitutability dominates the satiation effect with the magnitude of the price adjustment depending crucially by the lockdown duration, the habits speed of adjustment to changes in consumption, the habits speed of convergence and the cross-derivatives of utility between good 2 and good 1 consumption  and between good 2 consumption and the habits.
	
To determine how large is the pent-up demand formed during the lockdown, we look at how much the good 2 demand has changed before and at the date of the reopening, $t=9$ . This is done in Figure \ref{pentup}(b) where we have drawn the inverse demand functions $p=f(c_2;h^*)$ and $p=f(c_2;h_{\tilde t,AL})$ with $f(p;h)=p=\frac{a_{c_2}+a_{c_2c_2}c_2+SSE\cdot h}\lambda$. Then, the change in good 2 demand at the fixed price $p^*$ is:
$$\Delta c^d_{2}(p^*)\equiv c^d_{2,AL}-c^{d*}_2=\frac{SSE}{-a_{c_2c_2}}(h_{AL}-h^*)$$
which, given our parameters' choice implies a 4.9\% increase in the good 2 demand. To meet this demand, output in sector two need to be expanded by an equal percentage. However this is not possible in our economy as the good 2 supply curve is vertical and good 2 cannot be purchased from abroad. As a consequence the expansion in demand fully translates in a price surge.

\begin{center}
	\begin{figure}[!h]
		\centering %\hspace{-1cm}
		\begin{subfigure}[b]{0.7\textwidth}
			\includegraphics[width=\textwidth]{Figure_3.eps}\caption{}
		\end{subfigure}
		\hfill	
		\begin{subfigure}[b]{0.7\textwidth}
			\includegraphics[width=\textwidth]{}\caption{   }
		\end{subfigure}
		\caption{Pent-up Demand. (a) Relative price percentage deviation from its steady state $p^*$. (b) Good 2's demand change at the fixed price $p^*$.}\label{pentup}
	\end{figure}
\end{center}

\vspace{-1cm}

Interestingly, also the demand and consumption of good 1 is higher at the end of the lockdown. Based on our parameters' choice, we find that good 1 consumption has expanded by 5.45\% at date $t=9$. As a result of good 1 output expansion during the lockdown, also the debt at date $t=9$ is lower than at its steady state value. At the same time the trade balance, $TB_{\tilde t}\equiv y_1-c_{1,\tilde t}<y_1-c^*_{1}=TB^*$. The debt will be repaid by increasing the exportation of good 1 over time till converging to the pre-lockdown steady state. 	

This prediction of the model seems a plausible channel to explain the actual goods' demand and price surge experienced across developed countries after the lockdown.

%
%
%\begin{center}
%	\begin{figure}[h]
%		\vspace{-7cm}
%		\includegraphics[width=\textwidth]{Figure_1.pdf}
%		\vspace{-7cm}
%		\caption{Price dynamics with $a_{c_2h}<\bar a_{c_2h}$.}\label{Figure1}
%	\end{figure}
%	
%\end{center}	
%
%\vspace{-1cm}
%
%\begin{center}
%	\begin{figure}[h]
%		\vspace{-7cm}
%		\includegraphics[width=\textwidth]{Figure_2a.pdf}
%		\vspace{-7cm}
%		\caption{Habits dynamics with $a_{c_2h}<\bar a_{c_2h}$.}\label{Figure2}
%	\end{figure}
%	
%\end{center}	
%
%\vspace{-1cm}
%
\section{Extensions} \label{Sec:Extensions}
\subsection{Permanent change in labor composition}  \label{Sec:LaborChange}

In this section, we depart from the assumption that labor readjusts immediately after the lockdown. This variation of the model takes into account the large changes in the composition of labor forces currently observed across developed countries.
A change in the labor composition with more time now worked in sector 1 firms implies, in our model, a change $d\xi>0$. Since $y_1=(\xi\bar\ell)^\alpha$ and $y_2=[(1-\xi)\bar\ell]^\alpha$ then sector 1 will expand and sector 2 will shrink. A change in the labor composition will also affect the shadow price, $\lambda$, since in equilibrium it depends on the output of both sectors, see equation (\ref{lambda_LQ}).

For convenience, let us rewrite the equilibrium  relative price equation for the model with linear-quadratic utility
\begin{equation} p=\frac{\phi(a_{c_2}+a_{c_2c_2}y_2)+[(\psi_1+\phi)a_{c_1c_2}+\phi a_{c_2 h}]h}{\phi\lambda}.\label{price_equation}
\end{equation} 	
Using this equation we will study how the good 2 market looks like at the date of the re-opening, i.e. $t=\tilde t$. First, we observe that (see Appendix B):
\begin{equation}
dp=\underbrace{\left[\frac1\lambda \left(a_{c_2c_2}\frac{\partial y_2}{\partial\xi}-p\frac{\partial\lambda}{\partial \xi}\right)\right] }_{Labor \ Composition \ Multiplier \ (LCE)}\cdot \ d\xi+SSE\cdot dh \label{interpretation}
\end{equation}
where the first term on the right hand side is the labor composition effect while the last is the substitutability-satiation effect. We have also indicated with $SSE=\frac{(\psi_1+\phi)a_{c_1c_2}+\phi a_{c_2h}}{\phi\lambda}$, the multiplier of this last effect which is exactly the same as in the benchmark model. Differently from the benchmark case, the price dynamics is now also driven by the labor composition effect. Let us try to understand a bit more this new channel. Using the shadow price equation (\ref{lambda_LQ}) and the habits path (\ref{habitpath}), we can rewrite the multiplier of this new effect as follows:
\begin{equation}
	LCE=\frac1\lambda \left[\underbrace{a_{c_2c_2}\frac{\partial y_2}{\partial\xi}}_{>0}-p\left(a_{c_1c_2}+\frac{\phi}{\phi+\rho}a_{c_2h}\right)\underbrace{\frac{\partial y_2}{\partial\xi}}_{<0}\underbrace{-pm_1\frac{\partial y_1}{\partial \xi}}_{>0}\right] \label{interpretation1}
\end{equation}

% + SSE \cdot \frac{\partial h}{\partial \xi}= + SSE (1-e^{\psi_1(t-\tilde t)})\cdot \frac{\phi(\psi_1-r)}{(\phi+r)\psi_1} \frac{\partial y_1}{\partial \xi}

A decrease in production of good 2, $\frac{\partial y_2}{\partial \xi}<0$, have the following effects. First, a shrink in production increases the price of that good by $a_{c_2c_2}\frac{\partial y_2}{\partial \xi}>0$ by shifting the supply curve to the left, from $S_{BL}$ to $S_{AL}$ in Figure \ref{figEXP1}. However, this effect on the equilibrium relative price can be mitigated or enhanced depending on the level of substitutability between the two goods and the role played by the habits.

In particular, the higher the level of substitutatibility between the two goods, term $pa_{c_1c_2}$ in the multiplier, the higher will be the positive effect on $p$  as an expansion in good 1 consumption tilts the inverse demand curve up. On the other hand, the higher the degree of satiation on good 2 implied by habits accumulation on good 1, term $-\frac{p\phi}{\phi+\rho}a_{c_2h}$ in the multiplier, the lower will be the effect on $p$ of an adjustment in production as now the inverse demand curve shifts down. Interestingly, the interaction between satiation and substitutability enters also in the labor composition multiplier and, in particular, we have that the former dominates the latter when $a_{c_2,h}<-\frac{\phi+\rho}{\phi}a_{c_2c_1}<-\frac{\psi_1+\phi}{\phi}a_{c_2c_1}\equiv\bar a_{c_2h}$ with $a_{c_2c_1}>0$. A strong satiation effect once compared with the substitutability, mitigates the positive change in the relative price due to the labor reallocation across sectors.

Moreover, an increase in production of good 1, $\frac{\partial y_1}{\partial \xi}>0$, has a positive effect, $-pm_1>0$, on $dp$ as $p$ is the inverse of its relative price. As a consequence it shifts the good 2's inverse demand curve up.
	
Clearly, the overall sign of the labor composition effect depends on the magnitude of these components. Interestingly, a sufficiently strong satiation effect, $a_{c_2,h}<<-\frac{\phi+\rho}{\phi}a_{c_2c_1}$, will imply not only $SSE<0$ but also $LCE<0$, and similarly to Remark 1 the relative price $p_{\tilde t}$ shrinks; for the same parameters' choice we can also prove that $p^*$ shrinks due the labor composition change and that $p$ may converges monotonically from above or from below to its new steady state value, $p^*_{AL}<p^*_{BL}$ (see also Appendix B). Figure \ref{figEXP1}(a) shows how the inverse demand curve, $D$, and the supply curve, $S$, adjusts. In particular, we have drawn these curves before the lockdown, i.e. $t=0$, once the lockdown is over, i.e. $t=\tilde t$, as well as at their final position when the economy has reached its steady state, i.e. $t\rightarrow\infty$. In Figure \ref{figEXP1}(a), the demand curve shifts down from $D_{BL}$ to $D_{\tilde t,AL}$ because the satiation effect is sufficiently strong. Then, assuming that $p$ converges from above to its steady state level, the inverse demand curve will shift down even further to its position $D_{\infty,AL}$. Note that it could be instead that the convergence is from below and that the new steady state $p^*_{AL}$ would lie between $p_{AL}$ and $p^*_{BL}$.

\begin{center}
	\begin{figure}[!h]
		\centering %\hspace{-1cm}
		\begin{subfigure}[b]{0.7\textwidth}
			\includegraphics[width=\textwidth]{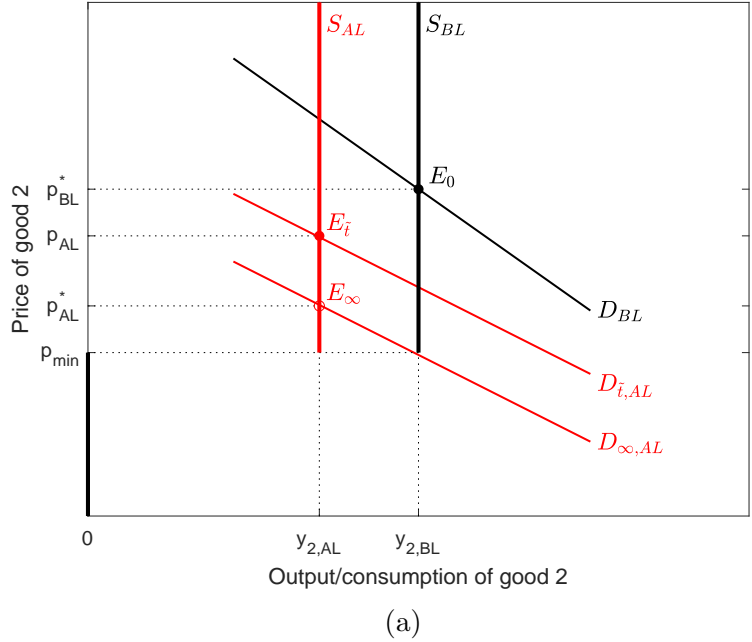}\caption{}
		\end{subfigure}
		\hfill	
		\begin{subfigure}[b]{0.7\textwidth}
			\includegraphics[width=\textwidth]{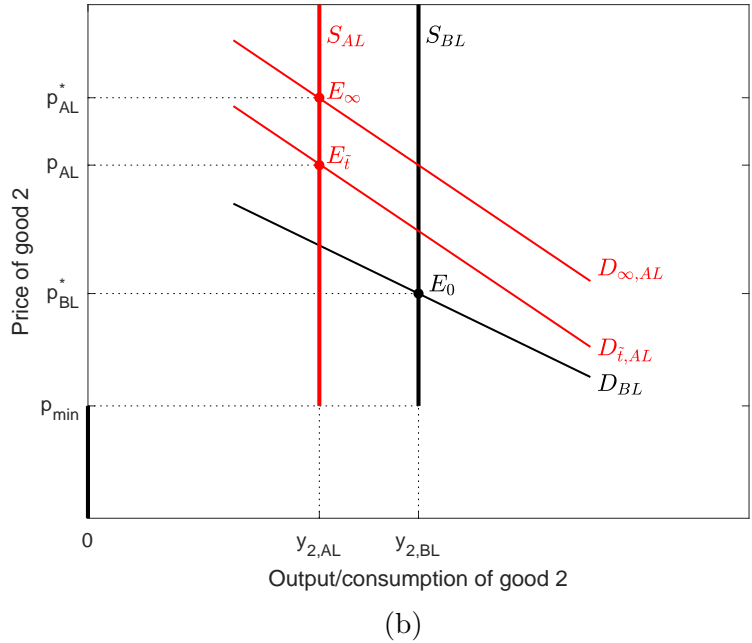}\caption{   }
		\end{subfigure}
		\caption{Lockdown effect on sector 2 in two cases. (a) Sufficiently Strong Satiation, $a_{c_2h}<<-\frac{\phi+\rho}{\phi}a_{c_2c_1}$. (b) Substitutability $>$ Satiation}\label{figEXP1}
	\end{figure}
\end{center}

\vspace{-1cm}
	
 On the other hand, if the substitutability dominates the satiation effect, $a_{c_2h}>\bar a_{c_2h}$, then, $LCE>0$,  $SSE>0$, and the demand curve shifts/tilts up from $D_{BL}$ to $D_{\tilde t, AL}$ while the supply curve shifts to the left from $S_{BL}$ to $S_{AL}$. As a result, $E_{\tilde t}$ will be the new equilibrium in the good 2 market once the lockdown is over. Such an equilibrium will be characterized by a higher relative price, $p_{\tilde t}$, than before the lockdown. In addition, also $p^*$ adjusts positively and the relative price converge to its new steady state value $p^*_{AL}>p^*_{BL}$. Observe that this is a sufficient condition meaning that the relative price may adjust positively even when the satiation effect weakly dominates the substitutability effect. Therefore, a readjustment in the labor composition reduces the range of parameters consistent with a permanent depression of sector 2. These considerations and conclusions are illustrated in Figure \ref{figEXP1}(b) where we shows the inverse demand curve and the supply curve adjustments before, and at the end of the lockdown $t=\tilde t$ as well as their final position when the economy has reached its steady state assumed to be above $p_{AL}$. The position of the steady state depends on several factors including the magnitude of the labor composition change.

In this last case, a labour composition adjustment affect the steady state prices, and in particular, we have that  $p^*_{AL}>p^*_{BL}$. From Figure 2(b), it is also clear that the raise in the price after the lockdown is due to a combination of a pent-up demand formed during the lockdown together with a shift of the supply curve to the left.

Moreover, it is interesting to have an insight about what happens to the price dynamics when we are not in these two extreme cases, for example, when we choose the parameters so that we have a mild satiation effect. As shown in Proposition \ref{confrpnps} in Appendix B, the economy may face the following price dynamics: first, at the end of the lockdown, the relative price could be lower than at its pre-lockdown level, $p_{AL}<p^*_{BL}$, however, the price may increase after the lockdown and  eventually converges to its steady state which is higher than before the lockdown,  $p^*_{AL}>p^*_{BL}$.

\subsection{Anticipated lockdown duration}\label{subsec:twosteps}
We are now interested in  studying an economy with a lockdown lasting $T>0$ periods with $T$ known by the consumers. To this purpose, the representative agent solves the following infinite horizon problem
\[
v(0,b,h)=\max_{c_1, c_2}\int_0^{\infty}e^{-\rho s}u(c_1, h, c_2) ds
\]
where $u$ has the quadratic form  \eqref{eqn:uquadratic}. As in the benchmark model, good 1 is always available while  good 2 is not available and therefore cannot be consumed during a lockdown. By the Dynamic Programming principle, we have that
\begin{equation}\label{DPP_0}
  v(0,b,h)=\max_{c_1, c_2}\left \{\int_0^{ T}e^{-\rho s}u(c_1, h, c_2)\, ds+v(T,b(T), h(T))\right\}.
\end{equation}

Moreover, it can be proved that, for any strictly positive values of $b$ and $h$, we have $v(T,b, h)=\tilde{v}(T,b,h),$
where $\tilde{v}(T,b,h)$ is the value function associated to the case of a strictly positive consumption of both goods. Therefore, we can write that
\begin{equation}\label{DPP_1}
  v(0,b,h)=\max_{c_1, c_2}\left \{\int_0^{ T}e^{-\rho s}u(c_1, h, c_2)\, ds+\tilde{v}(T,b(T), h(T))\right\}.
\end{equation}

This formulation of the problem allows us to split the original optimization problem into two sub-problems (the so called two-stage problem, see e.g. \cite{Tomiyama85}). We can then solve the problem as it follows. First we solve the infinite horizon optimization problem whose value function is $\tilde v(T,b(T), h(T))$, i.e.
\begin{equation}\label{eq:first_step_problem}
\tilde{v}(T,b(T), h(T))=\max_{c_1, c_2}\int_T^{\infty}e^{-\rho s}u(c_1, h, c_2) ds
\end{equation}
Crucially, the couple $(b(T), h(T))$ is treated here as given and it will be determined through the second sub-problem.

Once the first sub-problem has been solved, and the value function has been explicitly calculated, we can solve the second sub-problem \eqref{DPP_1}.
%The analysis of this problem in full details  will be subject of future work.
In Appendix B, we give a sketch of the solution of the two-stage problem.

We now introduce the index $TS,AL$, to indicate all the variables related to the infinite horizon maximization problem \eqref{eq:first_step_problem}, describing the economy after the lockdown; on the other hand, we use the index $TS,L$, the variables related to the finite horizon problem \eqref{DPP_1}.

In the following Proposition, we find the price after the lockdown, $ p_{t, TS, AL}$, by solving the two stage problem. We omit the details of the proof, since it is done exactly as the proof of Proposition \ref{Prop:keyresult}.

%The assumptions under which the following proposition holds are given in Appendix B.

\begin{proposition}\label{Prop:keyresult2}
Under  the assumptions of Section \ref{Section:LockdownQ}, we have
\begin{equation}\label{eq:pTSAL}
p^*_{TS,AL}=\frac{a_{c_2}+a_{c_2c_2}y_2+(a_{c_1c_2}+a_{c_2h})h^*_{TS,AL}}{m_0+m_1\left[rb_{TS,L}(T)+y_1+\frac{r(\phi+\psi_1)}{\phi(\psi_1-r)}h_{TS,L}(T)\right]},
\end{equation}
and
	\begin{equation}\label{eqn:barp}
	p_{t, TS, AL}=p^*_{TS, AL}+ \frac{(\phi+\psi_1)a_{c_2c_1}+\phi a_{c_2 h}}{\phi\left\{m_0+m_1\left[rb_{TS,L}(T)+y_1+\frac{r(\phi+\psi_1)}{\phi(\psi_1-r)}h_{TS,L}(T)\right]\right\}}
	(h_{TS,L}(T)-h^*_{TS, AL})e^{\psi_1(t- T)}
	\end{equation}
	\end{proposition}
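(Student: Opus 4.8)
The plan is to replicate the two-sector solution of Section \ref{Sec:2sectors} for the post-lockdown phase, but starting the clock at $t=T$ with the terminal state $(b_{TS,L}(T),h_{TS,L}(T))$ inherited from the finite-horizon lockdown sub-problem. Since the excerpt already tells us that $v(T,b,h)=\tilde v(T,b,h)$ for strictly positive $b,h$, the after-lockdown problem \eqref{eq:first_step_problem} is exactly the two-active-sector problem studied earlier, merely with relabelled initial conditions. First I would invoke Proposition \ref{Prop:prelock} and Proposition \ref{Prop:solsystem_b}, replacing $(b_0,h_0)$ by $(b_{TS,L}(T),h_{TS,L}(T))$ and shifting the time origin from $0$ to $T$ via the substitution $t\mapsto t-T$. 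This immediately yields the steady-state relative price in the form $p^*_{TS,AL}=\frac{a_{c_2}+a_{c_2c_2}y_2+(a_{c_1c_2}+a_{c_2h})h^*_{TS,AL}}{\lambda_{TS,AL}}$ with $\lambda_{TS,AL}$ given by the analogue of \eqref{lambda_LQ}, producing \eqref{eq:pTSAL}.

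Next I would write the transitional dynamics. From \eqref{habitpath} applied to the after-lockdown phase, the habit stock obeys $h_{TS,AL}(t)=h^*_{TS,AL}+(h_{TS,L}(T)-h^*_{TS,AL})e^{\psi_1(t-T)}$, using the same negative eigenvalue $\psi_1$ because the structural coefficients of the linear-quadratic problem (and hence the $(\tilde\mu,\tilde h)$ subsystem \eqref{eqeq:mu1}--\eqref{eqeq:h1}) are unchanged across phases. Substituting this habit path into the relative-price equation \eqref{eqeq:p} — equivalently \eqref{price_equation} — and using the multiplier $\frac{(\phi+\psi_1)a_{c_2c_1}+\phi a_{c_2 h}}{\phi\lambda_{TS,AL}}$ gives the stated price path \eqref{eqn:barp}. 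The denominator in both displayed equations is precisely $\lambda_{TS,AL}=m_0+m_1[rb_{TS,L}(T)+y_1+\frac{r(\phi+\psi_1)}{\phi(\psi_1-r)}h_{TS,L}(T)]$, so closing the argument amounts to confirming that $\lambda$ is pinned down by the transversality condition \eqref{tvc:2} exactly as in Proposition \ref{Prop:solsystem_b}, now with the inherited $(b_{TS,L}(T),h_{TS,L}(T))$.

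The one genuinely new ingredient, and the main obstacle, is justifying the value-matching step: that the terminal pair $(b_{TS,L}(T),h_{TS,L}(T))$ generated by the lockdown sub-problem \eqref{DPP_1} is in fact the correct initial condition feeding the after-lockdown steady state and transition. This requires the two-stage optimality conditions — continuity of the state variables $b,h$ at $t=T$ together with the smooth-pasting/costate-matching condition that the marginal values $v_b,v_h$ evaluated at $T^-$ from the finite-horizon problem equal $\tilde v_b,\tilde v_h$ from the infinite-horizon problem (as in the two-stage framework of \cite{Tomiyama85}). Since the excerpt defers the detailed solution of the two-stage problem to Appendix B and explicitly states that the proof is carried out exactly as that of Proposition \ref{Prop:keyresult}, I would treat the values $b_{TS,L}(T)$ and $h_{TS,L}(T)$ as delivered by that appendix computation and simply verify that feeding them into the two-active-sector formulas reproduces \eqref{eq:pTSAL} and \eqref{eqn:barp}; the remaining algebra is the same constant-coefficient linear-ODE manipulation already performed for the benchmark model, so no additional difficulty arises beyond bookkeeping of the shifted time origin.
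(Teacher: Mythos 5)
Your proposal is correct and follows essentially the same route as the paper: the authors also prove Proposition \ref{Prop:keyresult2} by observing that the after-lockdown phase is the same forward--backward system as in Propositions \ref{Prop:prelock} and \ref{Prop:keyresult}, merely with $(b_0,h_0)$ replaced by $(b_{TS,L}(T),h_{TS,L}(T))$ and the time origin shifted to $T$, while the determination of that terminal pair via state continuity and the costate-matching conditions $\lambda_{TS,L}=v_b$, $\mu_{TS,L}=v_h$ is deferred to the two-stage analysis in Appendix B (Lemmas \ref{lem:ex1}--\ref{lem:ex2} and Proposition \ref{prop:solutiontwostage}), exactly as you treat it. Your remark that $\psi_1$ is unchanged across phases because the linear-quadratic structure makes it independent of $\lambda$ is the correct justification for reusing the same eigenvalue.
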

	
Comparing these price equations with their corresponding in Proposition \ref{Prop:keyresult} of Section \ref{Section:LockdownQ}, we notice that the only differences are that instead of $b_0, h_0$ one has now $b_{TS, L}(T), h_{TS, L}(T)$ and that the expression in rounded brackets multiplied by $e^{\psi_1 (t-T)}$ is replaced by $(h_{TS,AL}(T)-h_{TS,AL}^*)$. This observation follows from the fact that both $p_{t, TS, AL}$ and $p_{t,NL}$ come from the resolution of the same forward-backward system, but with different terminal conditions.

Differently from the case of an unanticipated lockdown length, the analytical derivation of the components inside these price equations is extremely cumbersome and any comparison between this case and the previous one with an unanticipated lockdown length needs to be done numerically.
	
Intuitively, we expect that the price in the case of an anticipated lockdown length should be higher than the price in the case the length is unanticipated, that is
	\begin{equation}\label{eqn:compprices}
	p_{t,AL}< p_{t,TS, AL}.
	\end{equation}
	where $p_{t, AL}$ is defined in Proposition \ref{Prop:keyresult}.
In fact, if the end of the lockdown is not anticipated, the households will behave as if the lockdown will last forever and, therefore, the adjustment of good 1 consumption will be strong as they want to compensate the reduction in utility due to the  unavailability of good 2 in $t\in(0,\infty)$. On the other hand, if the shock is anticipated, the households will still increase during the lockdown the consumption of good 1 but much less as they now need to compensate the reduction in utility due to the absence of good 2 in a shorter interval of time, $t\in[0,T]$. If consumption of good 1 increases less, then the habits will accumulate more slowly and, therefore, assuming that the satiation dominates the substitution effect, the price reduction will be lower than in the case of an unanticipated lockdown length.

Indeed, using the same numerical values of subsection \ref{subsec:numerics}, we find that  at $T=9$ it holds
$$
 p_{T,TS, AL}= 2.6330 >p_{\mbox{\footnotesize{min}}}=2.6100> 2.6070=p_{T, AL},
$$
and then sector $2$ will not shut down when the lockdown is anticipated and its length is $T=9$, differently from what happens when the lockdown is not anticipated (see subsection \ref{subsec:numerics}, Figure \ref{Figure 1} a)).

\subsection{Random lockdown duration}
We consider an economy where a lockdown has a random duration, $\tau$. More specifically, we assume that $\tau$ is an  exponentially distributed random variable, $\tau\sim \delta e^{-\delta s}$ with $\delta\geq 0$ and $s\geq 0$. The representative agent solves now the following intertemporal maximization problem

\[v(0,b,h)=\max_{c_1,c_2}\mathbb{E}\left[\int_0^\infty e^{-\rho s}u(c_1,h,c_2)ds\right]\]
the consumption of the first good, $c_1$ is always strictly positive, while the consumption of $c_2$ is zero during a lockdown, namely $c_2(t)\equiv 0$ for $t \in [0,\tau]$. By the Dynamic Programming principle, we have that
\begin{equation}\label{eqn:dynrandomprob}
v(0,b,h)=\max_{c_1,c_2}\mathbb{E}\left[\int_0^\tau e^{-\rho s}u(c_1,h,c_2)ds+v(\tau,b(\tau,h(\tau)))\right].
\end{equation}

\begin{lemma}\label{lem:DPP_1_random}
If $u$ is the quadratic function defined in \eqref{eqn:uquadratic}, and for any $b_0,h_0\in \mathbb{R}^+$ then
\begin{equation}\label{eqn:dynrandomproblem}
v(0,b, h)=\max_{c_1,c_2}\int_0^\infty e^{-(\rho +\delta)s}\left[u(c_1,h(s),c_2)ds+\delta \tilde v(0,b(s),h(s))\right]ds,
\end{equation}
where $\tilde{v}(0,b,h)$ is the value function associated to the case of strictly positive consumption of both goods.
\end{lemma}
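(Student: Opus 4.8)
The plan is to exploit the memorylessness of the exponential distribution in order to convert the random-horizon problem into a standard \emph{autonomous} infinite-horizon problem with the enlarged discount rate $\rho+\delta$. Starting from the Dynamic Programming identity \eqref{eqn:dynrandomprob}, I would first argue, exactly as in the anticipated-lockdown case of Section \ref{subsec:twosteps}, that for strictly positive $(b,h)$ the value at the random re-opening time coincides with the value function of the economy in which both goods are consumed, i.e. $v(\tau,b(\tau),h(\tau))=\tilde v(\tau,b(\tau),h(\tau))$; the reason is that once the lockdown is lifted the constraint $c_2\equiv 0$ no longer binds and the continuation problem is precisely the two-sector problem. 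Since the running utility \eqref{eqn:uquadratic} and the post-lockdown state equations \eqref{b_state}--\eqref{habits_state1} are autonomous, time-homogeneity yields the factorization $\tilde v(\tau,b,h)=e^{-\rho\tau}\,\tilde v(0,b,h)$, obtained by the change of variable $s\mapsto s-\tau$ in the integral defining $\tilde v$. Substituting both facts into \eqref{eqn:dynrandomprob} gives, for every admissible lockdown-phase control,
\[
v(0,b,h)=\max_{c_1,c_2}\mathbb{E}\left[\int_0^\tau e^{-\rho s}u(c_1,h,c_2)\,ds+e^{-\rho\tau}\,\tilde v(0,b(\tau),h(\tau))\right],
\]
where $c_2\equiv 0$ on $[0,\tau]$.

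Next I would evaluate the expectation over $\tau\sim\delta e^{-\delta s}$ \emph{term by term, for a fixed control}, so that no interchange of the maximum with the expectation is ever required. Writing $u(s):=u(c_1(s),h(s),0)$, the running-cost term is handled by Fubini together with the survival identity $\mathbb{P}(\tau>s)=e^{-\delta s}$: expressing it as $\int_0^\infty \delta e^{-\delta t}\big(\int_0^t e^{-\rho s}u(s)\,ds\big)\,dt$ and swapping the order of integration gives $\int_0^\infty e^{-\rho s}u(s)\big(\int_s^\infty \delta e^{-\delta t}\,dt\big)\,ds=\int_0^\infty e^{-(\rho+\delta)s}u(s)\,ds$. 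The continuation term is immediate, since $b(\tau),h(\tau)$ are deterministic functions of $\tau$ along a fixed control: $\mathbb{E}\big[e^{-\rho\tau}\tilde v(0,b(\tau),h(\tau))\big]=\int_0^\infty \delta e^{-(\rho+\delta)t}\tilde v(0,b(t),h(t))\,dt$. Adding the two pieces and collecting the common factor $e^{-(\rho+\delta)s}$ under a single integral produces the deterministic functional appearing in \eqref{eqn:dynrandomproblem}; taking the maximum over admissible controls on both sides then yields the claimed identity, with $\delta$ entering simultaneously as the extra discounting and as the weight on the ``bonus'' continuation payoff $\tilde v$.

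The main obstacle is not the algebra but the rigorous justification of the Fubini interchange and of the identity $v=\tilde v$. For Fubini I would verify absolute integrability: because $u$ is the quadratic form \eqref{eqn:uquadratic} and the state dynamics are linear, along any admissible $(c_1,c_2)$ the integrand grows at most polynomially in $s$, so $\int_0^\infty e^{-(\rho+\delta)s}\lvert u(s)\rvert\,ds<\infty$ whenever $\rho+\delta>0$, which holds here; the same polynomial bound, applied to the explicit quadratic form of $\tilde v$ computed in Section \ref{subsec:twosteps}, controls $\tilde v(0,b(t),h(t))$ and legitimizes the second integral. The remaining delicate point is $v(\tau,\cdot)=\tilde v(\tau,\cdot)$, which I would establish under the parameter restrictions of Proposition \ref{Prop:prelock} that guarantee $c_2^*>0$ after re-opening, so that the non-negativity constraint on $c_2$ is slack in the continuation problem; this mirrors precisely the argument already invoked for the anticipated lockdown and is where most of the care is needed.
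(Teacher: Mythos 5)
Your proposal is correct and follows essentially the same route as the paper: time-homogeneity to write $v(\tau,b,h)=e^{-\rho\tau}v(0,b,h)$, evaluation of the expectation for a fixed control, and a Fubini swap using $\int_s^\infty\delta e^{-\delta t}\,dt=e^{-\delta s}$ to absorb the survival probability into the discount factor $e^{-(\rho+\delta)s}$. You are somewhat more careful than the paper on two points it leaves implicit --- the integrability justifying Fubini and the identification $v(\tau,\cdot)=\tilde v(\tau,\cdot)$ after re-opening --- but the argument is the same.
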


By \eqref{eqn:dynrandomprob} and using similar arguments as in subsection \ref{subsec:twosteps}, Proposition \ref{Prop:keyresult2}, we can prove that for any  realization $\tau(\omega)$  of the random variable $\tau$, the price at the  reopening time $\tau(\omega)$ is a random variable and satisfies
	\begin{equation}\label{eqn:barp}
	p_{\tau(\omega)}=p^*(\tau(\omega))+ \frac{(\phi+\psi_1)a_{c_2c_1}+\phi a_{c_2 h}}{\phi\left\{m_0+m_1\left[rb_{TS,L}(\tau(\omega))+y_1+\frac{r(\phi+\psi_1)}{\phi(\psi_1-r)}h_{TS,L}(\tau(\omega))\right]\right\}}
	(h_{TS,L}(\tau(\omega))-h^*_{TS, AL})
	\end{equation}
	where
	$$
	p^*(\tau(\omega))=\frac{a_{c_2}+a_{c_2c_2}y_2+(a_{c_1c_2}+a_{c_2h})h^*_{TS,AL}}{m_0+m_1\left[rb_{TS,L}(\tau(\omega))+y_1+\frac{r(\phi+\psi_1)}{\phi(\psi_1-r)}h_{TS,L}(\tau(\omega))\right]},
$$
Observe that the price distribution can be derived since we know the distribution of $\tau$.
Moreover, a simple computation showing the  dependence of the relative price on the parameter $\delta$ can be done as follows. Given that $\tau$ is exponentially distributed, we have that
$$
\mathbb{E}[p(\tau)]=\int_0^{+\infty} \delta e^{-\delta s} p(s) ds.
$$
Therefore it follows immediately that
$$
\lim_{\delta \to 0} \mathbb{E}[p(\tau)]=0,
$$
meaning that for $\delta \to 0$ the problem converges  to the case of one active sectors of production in the benchmark model. On the other hand, we can also prove that
$$
\lim_{\delta \to +\infty} \mathbb{E}[p(\tau)]=\int_0^{+\infty} \delta_0 p(s)\, ds=p_0=p_{NL},
$$
where $\delta_0$ is the Dirac in zero. The last expression means that for $\delta \to + \infty$ the problem converges to the case of two active sectors of production in the benchmark model.

\subsection{Mimimum good provision under a lockdown}\label{Sec:provision}

Consider an economy very similar to the one discussed in Sections \ref{Sec:2sectors} and \ref{Sec:1sector} but with the following differences. First, there isn't a fixed cost of production and, therefore, the firms in both sectors of production find always profitable to produce. Second, a government intervention guarantees that during the lockdown a minimum provision of good 2 is produced and distributed to the households. The idea is that the consumers need an (exogenously given) subsistence level of good 2, $\underline{c}_2$, and the government acts to guarantee that during a lockdown.

More specifically, the government allows the firms in sector 2 to produce during the lockdown a quantity $y_2=\underline{c}_2$ using a share $\chi$ of labor that would be, otherwise, unused, $(1-a)(1-\xi)\bar\ell$. The workers will receive the same pre-lockdown wage, $w_{PL}$, and the government will buy the goods so produced at the market price charged just before the lockdown, $p_{PL}$, and distribute them to the households. The government expenditure will be financed by levying a lump sum tax, $T$, to the consumers. The government balanced budget constraint will be $$p_{PL}\underbrace{[\chi (1-a)(1-\xi)\bar\ell]^\alpha}_{\equiv \underline{c}_2}=T$$

The representative household budget constraint (\ref{onesector_bc}) will now become $$\dot b+c_1=rb+w_1[\xi+a(1-\xi)]\bar\ell + w_2\chi (1-a)(1-\xi)\bar\ell +\pi_1+\pi_2-T$$

while the instantaneous utility function becomes $u(c_1,\underline{c}_2,h)$ during a lockdown. Observe that the Inada conditions on the marginal utility of consumption can be now assumed both in the economy under and without a lockdown since we do not need to impose zero consumption any more.\footnote{A possible example of an utility function respecting the Inada condition is $$u(x)=\frac{x^{1-\theta}-1}{1-\theta} \ \ with \ \ \theta\neq 1$$
	where
	$$x=\left(\frac{c_1}{h^{\gamma}}\right)^{\eta_1}(c_2)^{\eta_2}.$$}

This variation of the model does not lead to any relevant difference with respect to the benchmark model as described in Section \ref{Sec:2sectors} and \ref{Sec:1sector}. In fact, all the results in these sections continue to hold with the only exception that firms in sector 2 will always remain active. This is because at the equilibrium the good 1 market clearing condition is actually the same in both versions; in fact, the lump-sum tax necessary to cover the government expenditure to purchase $\underline{c}_2$ is equal to the good 2 firms' revenues which are equal to the profits plus the production costs.

However, it is no more possible to derive analytically the results in Section \ref{Section:LockdownQ} because of the different specification of the instantaneous utility function. As previously explained, it is not possible to find the shadow price, $\lambda$, without a linear-quadratic utility specification. A quantitative investigation of this variation of the model is left for future research.

\subsection{Economy with two types of workers}

The economy is populated by a unit mass of  atomistic workers $n\in(0,\bar\ell)$ with $\bar \ell=1$. Assume also that the share of type-1 workers is $\xi$ and of type 2 workers is $1-\xi$, with $\xi\in(0,1)$ exogenously given. The two types of workers are inherently identical but the following labor allocation holds. Type-1 workers always work in sector 1 while type-2 workers work in sector 2 if this sector is active. If it is not then a share $a\in(0,1)$ of them moves to work in sector 1. Moreover type-$i$ workers have the ownership of sector-$i$ firms. Lockdown affects only sector 2 and therefore type-2 workers. Also $c_{ij}$ indicates $i$-type workers consumption of good $j$. Also both the agents form habits over good $1$ but not over good $2$.

The objective of this section is to show that the equilibrium price equation found in the model presented in Section 5, is actually the same in the case of two types of workers when the utility function is linear-quadratic. To this purpose, we just focus on the case of two active sectors of production and we leave further investigation for future research.

The optimization problem of a type-1 worker is
 $$\max_{c_{11},c_{12},b_{1}}\ \ \int_0^\infty e^{-\rho t} u(c_{11},c_{12},h_{1}) dt$$
 subject to the following constraints
 \begin{eqnarray}
 	&& \dot b_{1}+c_{11}+p c_{12}=rb_1+ \  w_1 +\frac{\tau+\pi_1}\xi\\
 	%&& \ell_1+\ell_2=1 \\
 	&& \dot {h}_1=\phi( c_{11} - h_1) \label{habits_state1} \\
 	%&& \dot {h}_2=\rho_2( c_2 - h_2) \label{habits_state2} \\
 	&& b_1(0)=b_{10}, \ \ \ h_{1}(0)=h_{10},  \ \ given
 \end{eqnarray}
 where exactly as done before, we have normalized the price of good 1 to one and $p$ is the relative price of good 2, and we have indexed the bonds to good 1 consumption.

 \bigskip

 On the other hand, the optimization problem of a type-2 worker is
 $$\max_{c_{21},c_{22},b_{2}}\ \ \int_0^\infty e^{-\rho t} u(c_{21},c_{22},h_{2}) dt$$
 subject to the following constraints
 \begin{eqnarray}
 	&& \dot b_{2}+c_{21}+ p c_{22}=rb_2+ w_2+\frac{\tau+\pi_2}{1-\xi}\\
 	%&& \ell_1+\ell_2=1 \\
 	&& \dot {h}_2=\phi( c_{21} - h_2) \label{habits_state1} \\
 	%&& \dot {h}_2=\rho_2( c_2 - h_2) \label{habits_state2} \\
 	&& b_2(0)=b_{20}, \ \ \ h_{2}(0)=h_{20},  \ \ given
 \end{eqnarray}

Observe that the first order conditions of the two agents are the same beside the index. In particular, a given state-control quadruple $(c_{i1},c_{i2},h_i,b_i)$ is optimal if there exists absolutely continuous costate functions $\mu_i$ and $\lambda_i$ such that

\begin{eqnarray}
	&& u_{i1}+\mu_i\phi-\lambda_i=0 \\
	&& u_{c_{i2}}-p\lambda_i=0 \label{2typeskey}\\
	&& \dot\mu_i =(\phi+\rho)\mu_i-u_{h_i} \\
	&& \dot \lambda_i =(\rho-r)\lambda_i \\
	&& \lim_{t\rightarrow\infty}h_i\mu_i e^{-\rho t}=0 \\
	&& \lim_{t\rightarrow\infty} b_i \lambda_i e^{-\rho t}=0
\end{eqnarray}

To find the decentralized equilibrium, we need now to aggregate. Observe that we can easily find the aggregate version of equation (\ref{2typeskey}). Since all the workers are inherently identical and the first order conditions are the same across agents we have that:
$$\int_0^\xi u_{c_{12}} -p\lambda_1 dn + \int_\xi^1 u_{c_{22}} -p\lambda_2 dn=0$$
which taking into account that both types of agents have the same linear-quadratic utility function
\begin{small}
$$
\int_0^\xi a_{c_2}+a_{c_2c_2}c_{12}+a_{c_1c_2}c_{11}+a_{c_2h}h_1 dn+ \int_\xi^1 a_{c_2}+a_{c_2c_2}c_{22}+a_{c_1c_2}c_{21}+a_{c_2h}h_2 dn   -p(\lambda_1+\lambda_2)=0
$$
\end{small}
and therefore
$$
a_{c_2}+a_{c_2c_2}\underbrace{(\xi c_{12}+(1-\xi)c_{22})}_{\equiv c_2}+a_{c_1c_2}\underbrace{(\xi c_{11}+(1-\xi)c_{21})}_{\equiv c_1}+a_{c_2h}\underbrace{(\xi h_1+(1-\xi) h_2)}_{\equiv h}-p\underbrace{(\lambda_1+\lambda_2)}_{\equiv\lambda}=0
$$
where $c_1$, $c_2$, $h$ and $\lambda$ indicate the aggregate variables of the corresponding variables. Solving now for $p$ we get
\begin{equation}
	p=\frac{a_{c_2}+a_{c_2c_2}c_2+a_{c_1c_2}c_1+a_{c_2 h}h}\lambda=\frac{\phi(a_{c_2}+a_{c_2c_2}y_2)+[(\psi_1+\phi)a_{c_1c_2}+\phi a_{c_2 h}]h}{\phi\lambda} \label{price_equation1}
\end{equation}
where the last espression has been obtained imposing the goods market clearing conditions $c_2=y_2$ and
$\dot b+c_1=rb+ y_1.$
Therefore, the same price equation as in the model with one agent holds.

\section{Empirical evidence}\label{CH:Empirical}

Our theory predicts that habits formed on good 1 can alter the relative price of the other good, see for example equations (\ref{eqeq:p}) and (\ref{price_equation1}). To the best of our knowledge, the existing empirical contributions on habits formation have not yet analysed this relation as the focus has always been, since the seminal works of Deaton \cite{Deaton} and Muellbauer \cite{muel}, on the effect of habits formed on one good on the consumption of the same good and consequently on its price. In other words, a common assumption in the literature is that the utility function is separable in the different goods and habits.\footnote{Ravn et al. \cite{morten} in a context of deep habits show that the demand function is affected by the relative prices however they do not estimate the demand function as we do next.} Therefore, the purpose of this section is to provide supporting empirical evidence to our theory.

To this purpose, we consider the CPI of admission to movies, theatres, and concerts and the CPI of cable and satellite television service in U.S. city average for all urban consumers in the period 2003-2019 with index 2003=100. These two time-series are available from the U.S. Bureau of Labor Statistics. The ratio of these two variables is used as a proxy of the relative price for these two different activities. 
$$RPI_t \ = \ \frac{CPI_t (Movies, \ theatres, \ concerts)}{CPI_t (cable \  and \ satellite \ TV \ services)}$$
where $RPI_t$ stands for relative price index.
               
\begin{center}
	\begin{figure}[!h]
		\centering %\hspace{-1cm}
		\begin{subfigure}[b]{0.7\textwidth}
			\includegraphics[width=\textwidth]{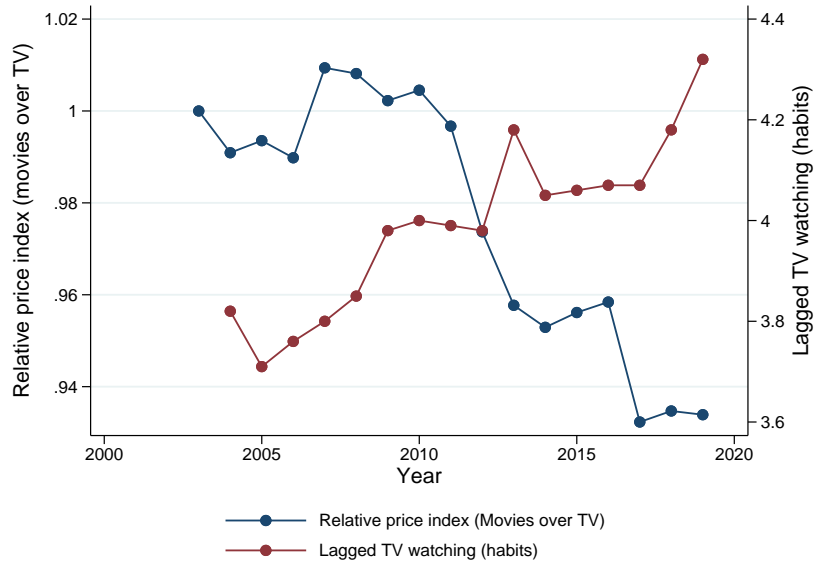}\caption{}
		\end{subfigure}
		\hfill	
		\begin{subfigure}[b]{0.7\textwidth}
			\includegraphics[width=\textwidth]{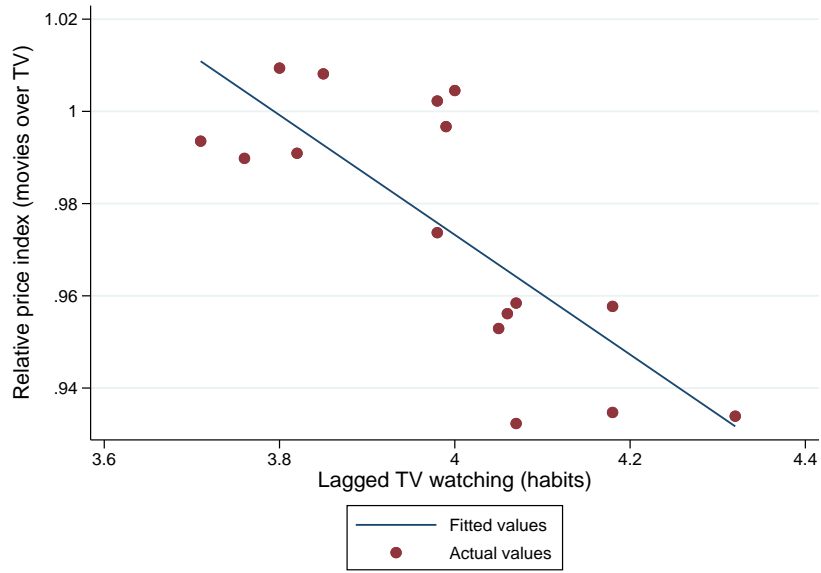}\caption{   }
		\end{subfigure}
		\caption{Relative price index and lagged average TV watching  in hours per day (habits) during weekends  (a) Time series. (b) scatter plot and correlation}\label{figEVIDENCE}
	\end{figure}
\end{center}

\vspace{-1cm}

Then we have used the American Time Use Survey (ATUS) to find the average hours per day in year $t$ of TV watching across persons who engaged in the activity (from now on $TV \ Watching \ _t $). This value is available for every year in the period $2003-2019$. We have also focused on the time spent on this activity during the weekend and holidays because the distribution of the time spent in consuming the other goods (movies etc) suggests that more than  40\% of it is concentrated in the weekend.\footnote{See, for example, the distribution of cinema attendance in the UK in 2013, by weekday reported by the UK Cinema Association. The data is available at the following webpage https://www.statista.com/statistics/296245/cinema-attendance-distribution-uk-by-weekday/. } Following Deaton \cite{Deaton}, the habits are defined as the lagged value of the consumption of an activity. Therefore in our case, $TV \ Watching \ _{t-1}$ captures the habits of watching TV.\footnote{In particular, Deaton and many contributions  on the empirical literature consider the case $h_{t+1}=c_t$ or equivalently $\Delta h_{t+1}=c_t-\delta_t h_t$ with $\delta=1$. This is a quite extreme case as the habits fully depreciate every period.}   

Figure \ref{figEVIDENCE} a) shows the $RPI_t$ and $TV \ Watching \ _{t-1}$ time series while Figure \ref{figEVIDENCE} b) is a scatter plot of the two series. As it emerges quite clearly there is a negative correlation which does not seem driven by any particular outlier.   

Then we have run some regressions to confirm our preliminary result and also to control for other variables: 
$$RPI_t=\beta_0+\beta_1 \cdot Watching \ TV \ _{t-1} + \beta_2 X_t + \varepsilon_t$$
In particular, we have controlled for the following other two variables: average hours per day in each year 2013-2019 spent for arts and entertainments (other than sports) during weekend and holidays across persons engaged in the activity, and average hours per day in each year 2016-2019  spent for watching movies in theatres during weekend and holidays across persons engaged in the activity. The last two variables are also lagged to capture the habits in such activities.   

The results of all the regressions are reported in table 1. The variable Watching TV (lagged) capturing the habits formed in doing this activity is always significant. Interestingly the second regression shows that Watching TV, which is a proxy of  consumption, is not significant while the habits are although less than in the other regressions. Arts and entertainments (lagged) and Movies watching at the theatre (lagged) are both not significant suggesting that the habits formed in Watching TV are the main driver of the relative price index. 
Similar results can be obtained if we consider i) the same time series but on working days instead of weekends; ii) the same time series but looking at the whole population and not just the persons engaged in the activity; iii) the same controls but in level and not in their lagged value.

\begin{table}[htbp]\centering
	\caption{Dependent Variable: Relative Price Index}
	\begin{tabular}{c c c c c c c c}
		\toprule
& (1)	&  & (2) &  & (3) &  & (4) \\ 
		\midrule
		
		Watching TV (lagged)  &-0.13***&  &        -0.091*       &  &       -0.127***   &  &       -0.16***  \\ 
		& (0.02)&&       (0.05)        &  &      (0.032)   &  &      (0.04)   \\ \\
		Watching TV        &         &  & -0.046      &  &     \\
		&         &  &   (0.05)      &  &         \\ \\
		Arts and entertainments (lagged)   &&  &               &  &             -0.005  &  &      \\
		&    &&           &  &        (0.03)       &  &         \\ \\
		Movies (lagged)  &&   &               &  &               &  &       -0.04 \\
		&   &&            &  &               &  &      (0.06)   \\ \\
		Constant      &1.49***& &       1.52***&  &       1.49***&  &       1.74***\\
		& (0.11)&&     (0.03)   &  &      (0.12)   &  &      (0.16)   \\ \\
		\midrule
		No. of observations    &16&       &     16   &  &     16   &  &     13   \\          
		Adj R-squared          &0.57&  &       0.56   &  &       0.54   &  &       0.63   \\
		\bottomrule
		\addlinespace[1ex]
		\multicolumn{8}{l}{\small{\textit{Notes:} Standard errors are reported in parenthesis. \textsuperscript{***}, 			\textsuperscript{**}, and \textsuperscript{*} indicate significance }} \\ \multicolumn{8}{l}{\small{at the 1\%, 5\%, and 10\% respectively.}}
	\end{tabular}
\end{table}

Overall, these findings are consistent with and support the theoretical relation between relative price and habits predicted by our model. This example is also interesting because the insignificance of the habits on art and entertainment, and movies is coherent with our assumption of considering habits formation on just one good.

\section{Conclusion} \label{Sec:Conclusion}

In this paper, we have shown how habits and goods' demand can change not just during but also after a lockdown. A substitutability and a satiation effect as well as the length of the lockdown are key to understand the direction and the intensity of this change. Given a sufficiently long lockdown, we find that a sector of the economy could shut down forever if the satiation effect dominates the substitutability effect. Government policies oriented to prevent this outcome have been discussed. On the other hand, a pent-up demand can be formed during the lockdown which may lead to a strong demand and higher prices once the lockdown is over if the substitutability effect dominates the satiation effect. Finally extensions of our model have been considered to investigate the effects of a permanent change in labor composition, the goodness of announcing a lockdown duration, as well as the consequences of keeping it uncertain. An extension with two types of agents and another with a minimum provision of the good produced by the firms affected by the lockdown lead to similar results.

\newpage
 %\appendix
\begin{appendices}

\section*{APPENDIX (For Online Pubblication)}

\section*{Appendix A: Proofs and other theoretical results}
\begin{proof}[Proof of Proposition \ref{Prop:SS}]

Given the costate variable $\lambda$, we define
\[S_\lambda=\begin{cases}
	S^{ben}_\lambda\,\,\,\,\,\textrm{if habits are beneficial},\\
	S^{harm}_\lambda\,\,\textrm{if habits are harmful}.\\
\end{cases}\]
where $S^{ben}_\lambda$, $S^{harm}_\lambda$ are respectively the convex envelope of the points
\[\begin{bmatrix}
0\\
y\\
\frac{\phi+\rho}{\phi}\lambda_0
\end{bmatrix},\,\,
\begin{bmatrix}
0\\
y\\
0
\end{bmatrix},\,\,
\begin{bmatrix}
\lambda_0\\
y\\
0
\end{bmatrix}\]
with $y\geq 0$, and
\[\begin{bmatrix}
x\\
0\\
(\lambda_0-x)\frac{\phi+\rho}{\phi}
\end{bmatrix},\,\,
\begin{bmatrix}
x\\
0\\
0
\end{bmatrix},\,\,
\begin{bmatrix}
\lambda_0\\
0\\
0
\end{bmatrix}\]
with $x\geq \lambda_0$.

We will now prove that if there exists some $h>0$ such that $\nabla u (h,h,y_2)\in S_\lambda$ and some other $\tilde{h}$ such that
\begin{equation}\label{prop1_sufficient_condition}
\nabla u (\tilde{h},\tilde{h},y_2)\notin S_\lambda ,\end{equation}
 then a steady state exists with all the stationary variables function of the costate variable $\lambda$.
	
We look for a constant solution of the system \eqref{b_state}-\eqref{tvc:2}, when the market clearing conditions are also considered. The variables $(c_1,c_2,h,b,p,\lambda,\mu)$ satisfy the following conditions
\begin{eqnarray}
	&& c_1=rb+y_1 \label{b_constant}\\
	&& c_2=y_2 \label{c_2_constant}\\
	&& c_1=h \label{c_1_constant}\\
	&& u_{c_1}=-\mu\phi+\lambda\label{u_grad1}\\
	&& u_{c_2}=p\lambda \label{u_grad2}\\
	&& u_h=(\phi+\rho)\mu \label{u_grad3}\\
	&& \lambda=\lambda_0.\label{lambda_constant}
\end{eqnarray}
where $y_1,y_2$ are constants and $c_1$ and $c_2$ are related to them through the goods market clearing conditions  \eqref{lmrktclear},\eqref{cmrktclear}. In particular, we have that in equilibrium
\[c_2=y_2, \qquad and \qquad  b=\frac{h-y_1}{r}.\]
Then, the steady state value of $c_1$ and $b$ can be found by determining the steady state value of $h$.
In order to solve the second part of the system, \eqref{u_grad1}-\eqref{u_grad3} we will follow a geometric approach. We start by observing that the function $(c_1,c_2,h)\to \nabla u(c_1,c_2,h)$ is a curve in $\mathbb{R}^3$.
Since $c_1=h$, we can think the utility function as a function of $h$, where $c_2=y_2$ is a parameter. Thus, we call $\Psi_{y_2}$, the curve defined as
\begin{eqnarray*}
&\Psi_{y_2}:&\mathbb{R}^+\longmapsto \mathbb{R}^3\\
&&h\longmapsto \Psi_{y_2}(h):=\nabla u(h,y_2,h).\\
\end{eqnarray*}
Moreover, the variable $\lambda\equiv \lambda_0$ is constant as given in \eqref{lambda_constant}.
When the variables $\mu,p$ vary in $\mathbb{R}^2$,
the right hand side of \eqref{u_grad1}-\eqref{u_grad3} identifies a plane whose parametric equation is:
\begin{eqnarray*}
&\Phi_{\lambda_0}:&\mathbb{R}^2\longmapsto \mathbb{R}^3\\
&&(\mu, p)\longmapsto \Phi_{\lambda_0}(\mu,p):=\begin{pmatrix}
-\mu\phi+\lambda_0\\
p\lambda_0\\
(\phi+\rho)\mu
\end{pmatrix}.\\
\end{eqnarray*}
Since $u_{c_i}>0$, then we can reformulate \eqref{u_grad1}-\eqref{u_grad3} as
\[\Psi_{y_2}(h)=\Phi_{\lambda_0}(\mu,p).\]
If the curve $\Psi_{y_2}$ and the plane $\Phi_{\lambda_0}$ intersect in a unique point, then a unique triplet $(h,\mu, p)$ solving  \eqref{u_grad1}-\eqref{u_grad3} exists. Consequently, the system \eqref{b_constant}-\eqref{lambda_constant} is uniquely solved.
If the habits are beneficial the curve stays in first orthant of $\mathbb{R}^3$ , otherwise, the curve belongs to the fifth orthant.
The set $S_{\lambda_0}$ is the area of the orthant delimited by the plane $\Phi_{\lambda_0}(\mu,p)$: in the beneficial case, it is the area of the first orthant , below the plane $\Phi_{\lambda_0}(\mu,p)$ (see figure \ref{fig:plane}); in the harmful case, it is the area of the fifth orthant above the plane $\Phi_{\lambda_0}(\mu,p)$ (see figure \ref{fig:plane2}).
Since the curve is continuous and there exists $h$ and $\tilde{h}$ such that $\nabla u(h,h,y_2)\in S_\lambda$, and $\nabla u(\tilde{h},\tilde{h},y_2)\notin S_\lambda$, then the curve must intercept the plane in, at least, one point.
From the geometrical interpretation, the proof of uniqueness is straightforward. Indeed, if for $h>h^*$ the curve runs below (resp. above) the plane and for $h>h^*$ the curve runs above (resp. below) the plane, then it is obvious to observe that the curve intercepts the plane in a unique point, and therefore it exists a unique stationary solution.
\begin{figure}
\centering
\begin{tikzpicture}
\draw[thick,-stealth] (0,0,0)--(0,0,6);
\draw[thick,-stealth] (0,0,0)--(0,6,0);
\draw[thick,-stealth] (0,0,0)--(6,0,0);
\coordinate[] (O) at (0,0,0);
\coordinate[] (A) at (0,0,4);
\coordinate[] (B) at (6,0,4);
\coordinate[] (C) at (6,3.5,0);
\coordinate[] (D) at (0,3.5,0);
\coordinate[label=$P$] (P) at (2,3.5-1.5*3.5/4,1.5);
\coordinate[label=$Q$] (Q) at (3,0,0);
%\coordinate[label=$PQ$] (PQ) at (1+3*0.5,0.5*2.1875,0.5*1.5);
\filldraw[black] (P) circle (2pt);
\coordinate[label=$x$] (x) at (0,0,6);
\coordinate[label=$y$] (x) at (6,0,0);
\coordinate[label=$z$] (z) at (0,6,0);
\coordinate[label=$\Psi_{\lambda_0}$] (psi) at (6.5,3.5,0);
\coordinate[label=$\Phi_{y_2}$] (psi) at (3,1,0);
\filldraw[black] (A) circle (2pt) node[anchor=east] {$(\lambda_0,0,0)$};
\filldraw[black] (D) circle (2pt) node[anchor=east] {$(0,0,\frac{\phi+\rho}{\phi}\lambda_0)$};
\draw[color=red, dashed] (Q) to [bend left=-30] (P);
\draw[color=red] (P) to [bend left=-30] (2,2.5,5);
\draw[fill=brown,opacity=0.3] (A)--(B)--(C)--(D);
\draw[thick] (A) -- (D);
\draw[thick] (A) -- (B);
\draw[thick] (C) -- (D);
\draw[dashed] (B) -- (8,0,4);
\draw[dashed] (C) -- (8,3.5,0);
\end{tikzpicture}
\caption{Illustration of the proof of Proposition 1 in the beneficial case}\label{fig:plane}
    \end{figure}
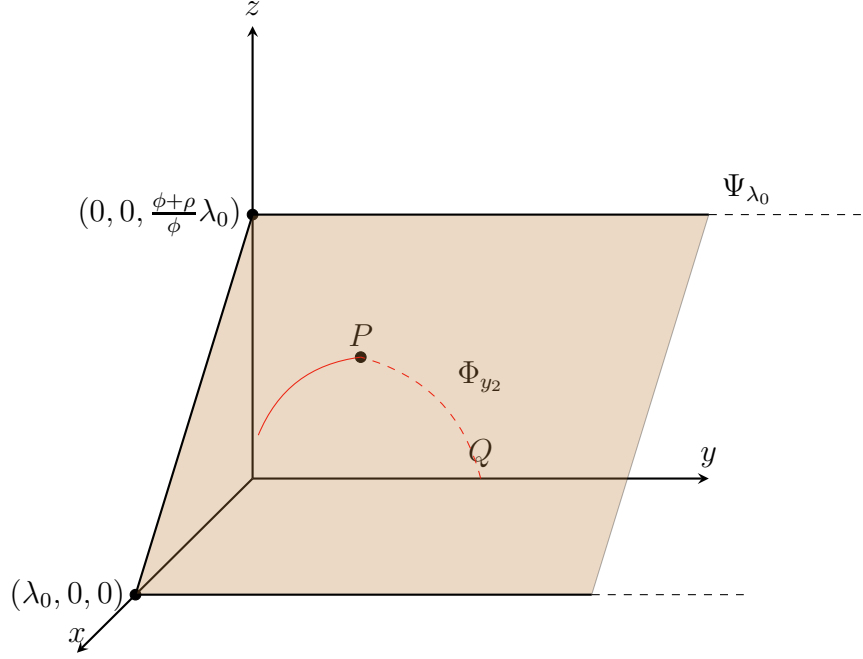

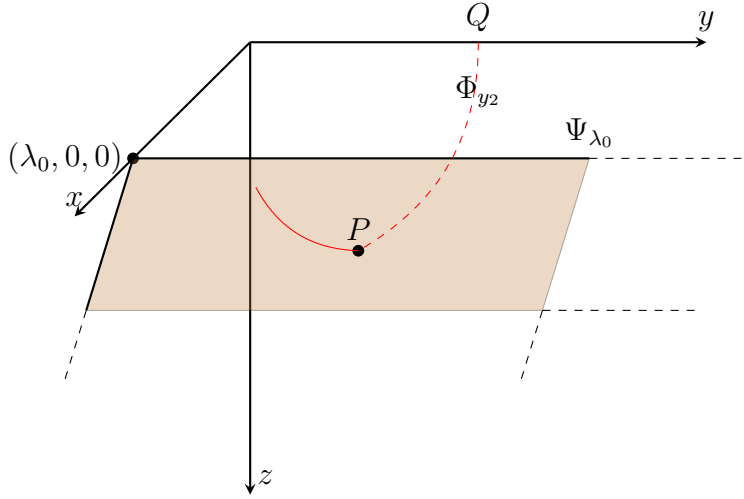
\begin{figure}
\centering
\begin{tikzpicture}
\draw[thick,-stealth] (0,0,0)--(0,0,6);
\draw[thick,-stealth] (0,0,0)--(0,-6,0);
\draw[thick,-stealth] (0,0,0)--(6,0,0);
\coordinate[] (O) at (0,0,0);
\coordinate[] (A) at (0,0,4);
\coordinate[] (E) at (0,-0.4*3.5,1.4*4);
\coordinate[] (B) at (6,0,4);
\coordinate[] (F) at (6,-0.4*3.5,1.4*4);
\coordinate[label=$x$] (x) at (0,0,6);
\coordinate[label=$y$] (x) at (6,0,0);
\coordinate[label=$z$] (z) at (0.2,-6,0);
\filldraw[black] (A) circle (2pt) node[anchor=east] {$(\lambda_0,0,0)$};
\draw[fill=brown,opacity=0.3] (A)--(B)--(F)--(E);
\draw[thick] (A) -- (E);
\draw[thick] (A) -- (B);
\draw[dashed] (B) -- (8,0,4);
\draw[dashed] (F) -- (8,-0.4*3.5,1.4*4);
\draw[dashed] (F) -- (6,-0.6*3.5,1.6*4);
\draw[dashed] (E) -- (0,-0.6*3.5,1.6*4);
\coordinate[label=$P$] (P) at (2,-3.5+1.5*3.5/4,1.5);
\coordinate[label=$Q$] (Q) at (3,0,0);
\filldraw[black] (P) circle (2pt);
\coordinate[label=$\Psi_{\lambda_0}$] (psi) at (6,0,4);
\coordinate[label=$\Phi_{y_2}$] (psi) at (3,-1,0);
\draw[color=red, dashed] (Q) to [bend left=30] (P);
\draw[color=red] (P) to [bend left=30] (2,0,5);
\end{tikzpicture}
\caption{Illustration of the proof of Proposition 1 in the harmful case}\label{fig:plane2}
    \end{figure}
\end{proof}
\begin{remark}
The sufficient condition presented in Proposition \ref{Prop:SS}, namely condition \eqref{prop1_sufficient_condition}, is trivially verified when one of the asymptotic behaviour listed below occurs. If the habits are beneficial, then the sufficient condition is verified if one the following conditions holds:
\begin{itemize}
    \item $\lim_{c_1\to +\infty}u_{c_1}(x)=\lim_{h\to +\infty}u_h(x)=0$ and $\lim_{h\to 0} u_h(x)=+\infty$;
    \item $\lim_{c_1\to +\infty}u_{c_1}(x)=\lim_{h\to +\infty}u_h(x)=0$ and $\lim_{c_1\to 0} u_{c_1}(x)=+\infty$;
    \item $\lim_{c_1\to 0}u_{c_1}(x)=\lim_{h\to 0}u_h(x)=0$ and $\lim_{h\to 0} u_h(x)=+\infty$;
    \item $\lim_{c_1\to 0}u_{c_1}(x)=\lim_{h\to 0}u_h(x)=0$ and $\lim_{c_1\to +\infty} u_{c_1}(x)=+\infty$.
\end{itemize}
Notice that the first two conditions are actually Inada conditions, while the other conditions refer to cases where the Inada conditions do not hold.
If the habits are harmful, then the sufficient condition is verified if one the following holds,
\begin{itemize}
      \item $\lim_{c_1\to 0}u_{c_1}(x)=+\infty,\lim_{h\to 0}u_h(x)=0$ and $\lim_{c_1\to +\infty} u_{c_1}(x)=0$;
\item $\lim_{c_1\to +\infty}u_{c_1}(x)=+\infty,\lim_{h\to +\infty}u_h(x)=0$ and $\lim_{c_1\to 0} u_{c_1}(x)=0$;
\end{itemize}
Again, the Inada conditons hold in the first case but not in the second.
\end{remark}

\medskip

\begin{proof}[Proof of Proposition \ref{Prop:LocalDynamics}]
	Linearization of (\ref{foc:c1}) around the steady state leads to $$u^*_{c_1c_1}\tilde c_1+u^*_{c_1 h} \tilde h +\phi\tilde\mu=0.$$
	Rearrenging the terms leads to (\ref{eqeq:c1}). Linearizing now (\ref{foc:h}) we have that
	$$\dot{\tilde\mu}=(\phi+\rho)\tilde\mu - u^*_{hc_1}\tilde c_1-u^*_{hh}\tilde h;$$
	substituting the value of $\tilde c_1$ previously found, i.e.  (\ref{eqeq:c1}), leads to (\ref{eqeq:mu1}). On the other hand, linearizing the habit equation and substituting (\ref{eqeq:c1}) leads to  (\ref{eqeq:h1}). (\ref{eqeq:b1}) has been obtained by linearizing the market clearing condition of good 1 around the steady state while equation (\ref{eqeq:p1}) by linearizing (\ref{foc:c2}) around the steady state and solving for $\tilde p$.
\end{proof}

\medskip

\begin{lemma}\label{Lemma:eigenvalues}
Let us write the system (\ref{eqeq:mu1}) and (\ref{eqeq:h1}) in matrix form:
\begin{equation}\left(\begin{array}{c}
		\dot {\tilde\mu} \\ \\ \dot {\tilde h} \end{array}\right)=\underbrace{\left(\begin{array}{cc} \left[\left(1+\frac{u^*_{c_1h}}{u^*_{c_1c_1}}\right)\phi+\rho\right] &  \frac{(u^*_{c_1h})^2-u^*_{c_1c_1}u^*_{hh}}{u^*_{c_1c_1}} \\
			\\	-\frac{\phi^2}{u^*_{c_1c_1}} & -\phi\left(1+\frac{u^*_{c_1h}}{u^*_{c_1c_1}}\right) \end{array}\right)}_{=\mathbf A} \left(\begin{array}{c}
		\tilde\mu \\ \\ \tilde h \end{array}\right) \label{system:hmu}
\end{equation}
The eigenvalues of matrix $\mathbf A$ are
	\begin{equation}\label{eq:eigenvalues}
		\psi_i=\frac{\rho\pm\sqrt{\rho^2+\frac{4\phi}{u^*_{c_1c_1}}[(\phi+\rho)u^*_{c_1c_1}+(\rho+2\phi)u^*_{c_1h}+\phi u^*_{hh}]}}2.	
	\end{equation}
	
	and they have the following properties:
	\begin{itemize}
		\item they are real and have opposite sign if $u^*_{c_1h}<-\frac{(\phi+\rho)u^*_{c_1c_1}+\phi u^*_{hh}}{\rho+2\phi}\equiv \bar u^*_{c_1h} $;
		\item they are real and have positive sign if $-\frac{(\phi+\rho)u^*_{c_1c_1}+\phi u^*_{hh}}{\rho+2\phi}<u^*_{c_1h}<-\frac{\rho^2u^*_{c_1c_1}}{4\phi(\rho+2\phi)}-\frac{(\phi+\rho)u^*_{c_1c_1}+\phi u^*_{hh}}{\rho+2\phi}$;
		\item they are conjugate-complex with positive real part if $u^*_{c_1h}>-\frac{\rho^2u^*_{c_1c_1}}{4\phi(\rho+2\phi)}-\frac{(\phi+\rho)u^*_{c_1c_1}+\phi u^*_{hh}}{\rho+2\phi}$.
	\end{itemize}
\end{lemma}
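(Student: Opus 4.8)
The plan is to compute the characteristic polynomial of $\mathbf A$ directly and to read off both the eigenvalue formula and the trichotomy from its coefficients. Since $\mathbf A$ is $2\times2$, its eigenvalues solve $\psi^2-\operatorname{tr}(\mathbf A)\,\psi+\det(\mathbf A)=0$, so the first step is to evaluate the trace and the determinant. Writing $A\equiv 1+\frac{u^*_{c_1h}}{u^*_{c_1c_1}}$, the diagonal entries of $\mathbf A$ are $A\phi+\rho$ and $-A\phi$, whose sum is simply $\rho$; hence $\operatorname{tr}(\mathbf A)=\rho$, which immediately explains why the two eigenvalues in (\ref{eq:eigenvalues}) are symmetric about $\rho/2$.

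Next I would compute $\det(\mathbf A)=(A\phi+\rho)(-A\phi)-\frac{(u^*_{c_1h})^2-u^*_{c_1c_1}u^*_{hh}}{u^*_{c_1c_1}}\cdot\left(-\frac{\phi^2}{u^*_{c_1c_1}}\right)$. Expanding $A^2$ and collecting terms, the $(u^*_{c_1h})^2/(u^*_{c_1c_1})^2$ contributions cancel, and after factoring out $-\phi/u^*_{c_1c_1}$ one is left with
$$\det(\mathbf A)=-\frac{\phi}{u^*_{c_1c_1}}\Big[(\phi+\rho)u^*_{c_1c_1}+(\rho+2\phi)u^*_{c_1h}+\phi u^*_{hh}\Big].$$
Substituting $\operatorname{tr}(\mathbf A)=\rho$ and this determinant into the quadratic formula $\psi=\tfrac12\big(\rho\pm\sqrt{\rho^2-4\det(\mathbf A)}\big)$ reproduces (\ref{eq:eigenvalues}) once we note that $-4\det(\mathbf A)=\frac{4\phi}{u^*_{c_1c_1}}\big[(\phi+\rho)u^*_{c_1c_1}+(\rho+2\phi)u^*_{c_1h}+\phi u^*_{hh}\big]$. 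This algebraic reduction — in particular getting the quadratic-in-$u^*_{c_1h}$ terms to cancel so that the determinant ends up \emph{affine} in $u^*_{c_1h}$ — is the one place where a sign slip is easy, and is the main thing to carry out carefully.

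For the trichotomy I would exploit that $\operatorname{tr}(\mathbf A)=\rho>0$ is fixed, so the classification is governed entirely by the sign of $\det(\mathbf A)$ and by the discriminant $\Delta\equiv\rho^2-4\det(\mathbf A)$, using the concavity fact $u^*_{c_1c_1}<0$ together with $\phi>0$ and $\rho+2\phi>0$. When $\det(\mathbf A)<0$ the roots are real of opposite sign; since $-\phi/u^*_{c_1c_1}>0$, this is equivalent to $(\phi+\rho)u^*_{c_1c_1}+(\rho+2\phi)u^*_{c_1h}+\phi u^*_{hh}<0$, i.e. $u^*_{c_1h}<\bar u^*_{c_1h}$, which is the first bullet. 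For $u^*_{c_1h}>\bar u^*_{c_1h}$ we have $\det(\mathbf A)>0$, so with positive trace the real roots (when they exist) are both positive; reality requires $\Delta\ge0$, and because the factor $4\phi/u^*_{c_1c_1}<0$ this inequality \emph{reverses} when solved for $u^*_{c_1h}$, yielding the upper threshold $-\frac{\rho^2u^*_{c_1c_1}}{4\phi(\rho+2\phi)}-\frac{(\phi+\rho)u^*_{c_1c_1}+\phi u^*_{hh}}{\rho+2\phi}$ of the second bullet. Finally, above that threshold $\Delta<0$, so the roots are conjugate-complex with real part $\operatorname{tr}(\mathbf A)/2=\rho/2>0$, giving the third bullet. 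I expect no genuine difficulty here beyond tracking the inequality reversal induced by the negative factor $u^*_{c_1c_1}$.
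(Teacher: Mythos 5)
Your proposal is correct and follows essentially the same route as the paper's proof: compute $\operatorname{tr}(\mathbf A)=\rho$ and $\det(\mathbf A)=-\frac{\phi}{u^*_{c_1c_1}}\bigl[(\phi+\rho)u^*_{c_1c_1}+(\rho+2\phi)u^*_{c_1h}+\phi u^*_{hh}\bigr]$, read the eigenvalues off the characteristic polynomial, and classify via the signs of the determinant and the discriminant, minding the inequality reversal from $u^*_{c_1c_1}<0$. The cancellation of the $(u^*_{c_1h})^2$ terms that you flag as the delicate step does indeed go through, so nothing further is needed.
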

\begin{proof}[Proof of Lemma \ref{Lemma:eigenvalues}]
	The determinant and trace of $\mathbf A$ are $$Det (\mathbf A)=-\frac{\phi}{u^*_{c_1c_1}}[(\phi+\rho)u^*_{c_1c_1}+(\rho+2\phi)u^*_{c_1h}+\phi u^*_{hh}] \qquad and \qquad Tr(\mathbf A)=\rho $$ and therefore the discriminant is $$\Delta(\mathbf A)=\rho^2+\frac{4\phi}{u^*_{c_1c_1}}[(\phi+\rho)u^*_{c_1c_1}+(\rho+2\phi)u^*_{c_1h}+\phi u^*_{hh}]$$
Observe that the sign of the determinant depends on $u_{ch}$ to be lower or higher than $$-\frac{(\phi+\rho)u^*_{c_1c_1}+\phi u^*_{hh}}{\rho+2\phi},$$ while the discriminant sign depends on $u_{ch}$ to be lower or higher than $$-\frac{\rho^2u*_{c_1c_1}}{4\phi(\rho+2\phi)}-\frac{(\phi+\rho)u^*_{c_1c_1}+\phi u^*_{hh}}{\rho+2\phi}.$$

Finally, the result follows immediately taking into account that $Det(\mathbf A)=\psi_1\psi_2$, $Tr(\mathbf A)=\psi_1+\psi_2$ and that $\Delta(\mathbf A)>0$.
\end{proof}

\medskip

\begin{proof}[Proof of Proposition \ref{Prop:solsystem}]
	First the general solution of the homogenous linear ODEs system is
	\begin{eqnarray*}
		\tilde \mu=d_1 v_{11} e^{\psi_1 t}+ d_2 v_{21} e^{\psi_2 t} \\
		\tilde h=d_1 v_{12} e^{\psi_1 t}+ d_2 v_{22} e^{\psi_2 t}
	\end{eqnarray*}
	where $\mathbf v_i =[v_{i1} \ v_{i2}]$ is the eigenvector associated to the eigenvalue $\psi_i$. Assume without loss of generality that $\psi_1$ is the negative eigenvalue and $\psi_2$ is the positive eigenvalue. It is immediate to see that the TVC (\ref{tvc:1}) holds only if $d_2=0$. Therefore we have that
	\begin{eqnarray*}
		\tilde h=\tilde h_0 e^{\psi_1 t} \\
		\tilde \mu= \frac{v_{11}}{v_{12}} \tilde h
	\end{eqnarray*}
	We need now to find the eigenvalue $\mathbf v_1 =[v_{11} \ v_{12}]$. As usual we have that
	$$(\mathbf A-\psi_1 \mathbf I) \mathbf v'_1=\mathbf 0$$
	from which we find that
	$$-\frac{\phi^2}{u^*_{c_1c_1}}v_{11}-\left[\phi\left(\frac{u^*_{c_1h}}{u^*_{c_1c_1}}+1\right)+\psi_1\right]v_{12}=0$$
	and therefore
	$$\frac{v_{11}}{v_{12}}=-\frac{\phi u^*_{c_1h}+(\phi+\psi_1)u^*_{c_1c_1}}{\phi^2}$$
\end{proof}

\medskip

\begin{proof}[Proof of Proposition \ref{Prop:solsystem_b}]
	First substitute (\ref{eqeq:c1h}) into (\ref{eqeq:b1}) and solve the initial value problem
	$$\dot{\tilde b}=r\tilde b- \frac{\phi+\psi_1}\phi \tilde h_0 e^{\psi_1 t}$$
	with $\tilde b(0)=\tilde b_0$ given. Its solution is
	$$\tilde b=\left(\tilde b_0-\frac{\phi+\psi_1}{\phi(r-\psi_1)}\tilde h_0\right)e^{r t}+\frac{\phi+\psi_1}{\phi(r-\psi_1)} \tilde h_0 e^{\psi_1 t}$$
	Moreover, the TVC (\ref{tvc:2}) is respected as long as $\lambda$ is chosen so that the term in parenthesis is equal to zero. In fact, TVC (\ref{tvc:2}) rewrites
	$$\lim_{t\rightarrow\infty} b \lambda e^{-r t}=\lambda b^* e^{-\rho t} +  \lambda\left(\tilde b_0-\frac{\phi+\psi_1}{\phi(r-\psi_1)}\tilde h_0\right)+\lambda\frac{\phi+\psi_1}{\phi(r-\psi_1)} \tilde h_0 e^{(\psi_1-r) t}                  =0$$
	which holds if and only if
	$$\tilde b_0=\frac{\phi+\psi_1}{\phi(r-\psi_1)}\tilde h_0$$. 	
\end{proof}

\medskip

\begin{proof}[Proof of Proposition \ref{Prop:prelock}]
	At the steady state we have that $\dot h=\dot b=\dot\mu=0$. Therefore, from the habits formation equation we immediately see that $c^*_1=h^*$. Using now equation (\ref{foc:c1}) and (\ref{foc:c2}) we find respectively that
	\begin{eqnarray}\label{mu_*}
	&& \mu^*=\frac{\lambda-a_{c_1}-(a_{c_1c_1}+a_{c_1h})h^*-a_{c_1c_2}y_2}\phi, \\
&& p^*=\frac{a_{c_2}+a_{c_2c_2}y_2+(a_{c_1c_2}+a_{c_2h})h^*}\lambda\end{eqnarray}
	Substituting the first equation into equation (\ref{foc:h}) we get
	\begin{equation}h^*= \frac{(\phi+\rho)\left[\lambda-\left(a_{c_1c_2}+\frac{\phi}{\phi+\rho}a_{c_2h}\right)y_2-\frac\phi{\phi+\rho} a_h-a_{c_1}\right]}{(\phi+\rho)a_{c_1c_1}+(2\phi+\rho)a_{c_1h}+\phi a_{hh}}.\end{equation}
	In addition, the good 1 market clearing condition (\ref{cmrktclear}) implies that
\begin{equation} rb^*=h^*-y_1. \label{bstarhstar} \end{equation}
	By Lemma \ref{Lemma:eigenvalues} and Proposition \ref{Prop:solsystem} we know that if $a_{c_1h}<-\frac{(\phi+\rho)a_{c_1c_1}+\phi a_{hh}}{\rho+2\phi}$ then the eigenvalues are real and have opposite sign.\footnote{In the case of a linear-quadratic utility function we do not need to linearize the system to get (\ref{eqeq:mu1})-(\ref{eqeq:p1}) since the FOCs lead to a system of linear no-homogeneous ODEs.} In addition, we can find the solution of the habits stock and the costate variable $\mu$ as deviation from their steady state value. Using then Proposition \ref{Prop:solsystem_b} we can find the value of $\lambda$ by solving the equation
	\begin{equation} b_0-b^*=\frac{\phi+\psi_1}{\phi(r-\psi_1)}(h_0-h^*) \label{b0hstar}\end{equation}
	substituting the value of $h^*$ and $b^*$ found previously we find that
	\begin{equation}
		\lambda=m_0+m_1\left[rb_0+y_1+\frac{r(\phi+\psi_1)}{\phi(\psi_1-r)}h_0\right] \label{lambdainproof}
	\end{equation}
	where
	\begin{eqnarray*}
		&& m_0= \frac{(\phi+\rho)a_{c_1}+\phi a_h+[(\phi+\rho)a_{c_1c_2}+\phi a_{c_2h}]y_2}{\phi+\rho}\\
		&& m_1= \frac{\phi(\psi_1-r)}{(\phi+r)\psi_1}\cdot\frac{(\phi+\rho)a_{c_1c_1}+(\rho+2\phi)a_{c_1h}+\phi a_{hh}}{\phi+\rho}<0
	\end{eqnarray*}
	Substituting back the value of $\lambda$ into the steady state equation of $h^*$, $p^*$, $b^*$ leads to our result. Notice that the steady state equation of $p^*$ can be found using equation (\ref{foc:c2}).
	
	Before proceeding let us define the following thresholds which will turn out usefull in the rest of the proof:
		\begin{small}
		$$\bar a_{c_1h}\equiv -\frac{(\phi+\rho)a_{c_1c_1}+\phi a_{hh}}{\rho+2\phi}, \qquad \underline a_{c_1}\equiv -\frac{\phi a_h+[(\phi+\rho)a_{c_1c_2}+\phi a_{c_2h}]y_2}{\phi+\rho},$$ $$ \underline b_0\equiv -\frac{y_1}r+\frac{\phi+\psi_1}{\phi(r-\psi_1)}h_0, \qquad \bar b_0\equiv \underline b_0-\frac{m_0}{rm_1}, \qquad \underline h_0\equiv\frac{\phi(r-\psi_1)}{r(\phi+\psi_1)}\left[y_1+\frac{m_0}{m_1}\right],\ \ and $$ $$ \underline a_{c_2}\equiv -a_{c_2c_2}y_2+\frac{\tau m_0}{(1-\alpha)y_2}+\left[\frac{\tau m_1}{(1-\alpha)y_2}-\frac{\phi(\psi_1-r)(a_{c_1c_2}+a_{c_2h})}{(\phi+r)\psi_1}\right]\left[rb_0+y_1+\frac{r(\phi+\psi_1)}{\phi(\psi_1-r)}h_0\right].$$
	\end{small}
	
	We can now proceed and find the condition which guarantees that $h^*>0$, $\lambda>0$ and $p^*\geq \frac\tau{(1-\alpha)y_2}$, the latter guarantees that also sector 2 is active.
	
	Let us start with $h^*>0$. Combining (\ref{bstarhstar}) with (\ref{b0hstar}) we get that
	$$h^*=\underbrace{\frac{r\phi(r-\psi_1)}{\psi_1(r+\phi)}}_{<0}\left[-b_0-\frac{y_1}r+\frac{\phi+\psi_1}{\phi(r-\psi_1)}h_0\right]$$
	It is then immediate to see that $h^*>0$ as long as $b_0>\underline b_0$. Notice also that $\underline b_0>0$ as long as $h_0>\underline h_0$. On the other hand, we can see from (\ref{lambdainproof}) that
	$$\lambda>0 \qquad \Leftrightarrow \qquad rb_0+y_1+\frac{r(\phi+\psi_1)}{\phi(\psi_1-r)}h_0<-\frac{m_0}{m_1}$$ since $m_0>0$ and $m_1<0$ when $a_{c_1}>\underline a_{c_1}$. Solving for $b_0$ leads to the condition $b_0<\underline b_0-\frac{m_0}{r m_1}$. Finally using the expression for $p^*$, substituting it into $p^*\geq \frac\tau{(1-\alpha)y_2}$ and solving for $a_{c_2}$ leads to the condition $a_{c_2}\geq \underline a_{c_2}$.
\end{proof}

\medskip

\begin{proof}[Proof of Proposition \ref{Prop:keyresult}]
	First, $h_0<h^*_{NL}$ guarantees that in the economy without the lockdown the price $p$ converges from above (below) to $p^*$ if $(\phi+\psi_1)a_{c_2c_1}+\phi a_{c_2 h}<0$ $(>0)$. We need now to show under which condition $p_{ t,AL}<p^*$. Notice that $$p_{ t,AL}-p^*=\frac{(\phi+\psi_1)a_{c_2c_1}+\phi a_{c_2 h}}{\phi\lambda}\left[\underbrace{h^*_L-h^*_{AL}}_{>0}+(\underbrace{h_0-h^*_L}_{<0})e^{\psi_1 \tilde t}\right]e^{\psi_1(t-\tilde t)}$$
	Assume now that $(\phi+\psi_1)a_{c_2c_1}+\phi a_{c_2 h}<0$ then
	$$p_{t,AL}<p^* \ \ \Leftrightarrow \ \ h^*_L-h^*_{AL}+(h_0-h^*_L)e^{\psi_1 \tilde t}>0 \ \ \Leftrightarrow \ \ \tilde t>\frac{\ln(h^*_L-h_0)-\ln(h^*_L-h^*_{AL})}{|\psi_1|}$$
	
	On the other hand, if  $(\phi+\psi_1)a_{c_2c_1}+\phi a_{c_2 h}>0$ then $p_{ t,AL}>p_{ t,NL}$ since	
	$$p_{ t,AL}-p_{ t,NL}=\frac{(\phi+\psi_1)a_{c_2c_1}+\phi a_{c_2 h}}{\phi\lambda}\left[(\underbrace{h^*_L-h^*_{AL}}_{>0})(\underbrace{1-e^{\psi_1 \tilde t}}_{>0})\right]e^{\psi_1(t-\tilde t)}$$
	
	  The result can be extended to the case $h_0>h^*_{NL}$. In particular, if $h_0>h^*_L$, the sign of $p_{t,AL}-p^*$ depends just on $(\phi+\psi_1)a_{c_2c_1}+\phi a_{c_2 h}$. Now we focus on the case $h^*_{NL}<h_0<h^*_L$. Since it can be easily proven that $\tilde t<0$ then
	\[\left[\underbrace{h^*_L-h^*_{AL}}_{>0}+(\underbrace{h_0-h^*_L}_{<0})e^{\psi_1 \tilde t}\right]>0\]
	for each $\tilde t>0$.
	In summary, we conclude that
	
	\begin{itemize}

 \item if the satiation dominates the substitutability effect, $a_{c_2h}<\bar a_{c_2h}$, $h_0>h^*_{NL}$ then
    \begin{equation}
        p_{t,AL}<p_{t,NL}<p^*;
    \end{equation}

%        \item if the satiation dominates the substitutability effect, $a_{c_2h}<\bar a_{c_2h}$, $h^*_{NL}<h_0<h^*_{L}$ then
%
%    \begin{equation}
%        p_{ t,AL}<p_{ t, NL}<p^*\qquad with \ \ t\in[\tilde t,\infty];
%    \end{equation}

%        \item if the satiation dominates the substitutability effect, $a_{c_2h}<\bar a_{c_2h}$, $h_0<h^*_{NL}$ and the lockdown is sufficiently short, $\tilde t<\underline{ \tilde t}$
%    \begin{equation}
%    p^*<p_{ t,AL}<p_{ t,NL}\qquad with \ \ t\in[\tilde t,\infty];
%    \end{equation}
%
%
% \item if the satiation dominates the substitutability effect, $a_{c_2h}<\bar a_{c_2h}$, $h_0<h^*_{NL}$ and the lockdown is sufficiently long, $\tilde t>\underline{ \tilde t}$
%    \begin{equation}
%    p_{ t,AL}<p^*<p_{ t,NL}\qquad with \ \ t\in[\tilde t,\infty];
%    \end{equation}
%    ì
%    \item if instead the substitutability dominates the satiation effect, $a_{c_2h}>\bar a_{c_2h}$, for all $t>0$
%\begin{equation}p_{t,AL}>p_{t,NL};\end{equation}

\item if instead the substitutability dominates the satiation effect, $a_{c_2h}>\bar a_{c_2h}$, $h_0>h^*_{NL}$ then
    \begin{equation}p^*<p_{t,NL}<p_{t,AL};\end{equation}

%    \item if instead the substitutability dominates the satiation effect, $a_{c_2h}>\bar a_{c_2h}$, $h^*_{NL}<h_0<h^*_{L}$ then
%    \begin{equation}p^*<p_{t,NL}<p_{\tilde t,AL};\end{equation}
%
%    \item if instead the substitutability dominates the satiation effect, $a_{c_2h}>\bar a_{c_2h}$, $h_0<h^*_{NL}$ and the lockdown is sufficiently short, $\tilde t<\underline{ \tilde t}$
%    \begin{equation}p_{t,NL}<p_{t,AL}<p^*\qquad with \ \ t\in[\tilde t,\infty];\end{equation}
%
%        \item if instead the substitutability dominates the satiation effect, $a_{c_2h}>\bar a_{c_2h}$, $h_0<h^*_{NL}$ and the lockdown is sufficiently long, $\tilde t>\underline{ \tilde t}$
%    \begin{equation}
%      p_{t,NL}<p^*<p_{t,AL}\qquad with \ \ t\in[\tilde t,\infty];\end{equation}
\end{itemize}

\end{proof}
\begin{proposition}\label{prop:huguali} 	
		$h^*_{AL}=h^*_{NL}$ and $\lambda_{AL}=\lambda_{NL}$.	\end{proposition}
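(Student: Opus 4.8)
The plan is to exploit the fact that, in Proposition~\ref{Prop:prelock}, both $h^*$ and $\lambda$ depend on the initial data $(b_0,h_0)$ and on technology only through the single scalar
$$
B \;=\; rb_0+y_1+\frac{r(\phi+\psi_1)}{\phi(\psi_1-r)}h_0,
$$
the remaining coefficients $\frac{\phi(\psi_1-r)}{(\phi+r)\psi_1}$, $m_0$, $m_1$ involving only $\psi_1$, $y_2$ and the (constant) second derivatives of the utility. Since labor readjusts fully once the lockdown is over, the $AL$ economy has the same outputs $y_1=y_{1,NL}$ and $y_2=y_{2,NL}$ as the $NL$ economy; moreover, by Lemma~\ref{Lemma:eigenvalues} the eigenvalue $\psi_1$ depends only on the utility parameters and is therefore common to all regimes. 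Hence it suffices to show that $B$ takes the same value for the $NL$ problem (initial data $b_0,h_0$) and for the $AL$ problem (initial data $b_{\tilde t},h_{\tilde t}$); once this is established, $h^*_{AL}=h^*_{NL}$ and $\lambda_{AL}=\lambda_{NL}$ follow by direct substitution into (\ref{h*}) and (\ref{lambda_LQ}).

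To compare the two values, I would set $\nu:=\frac{\phi+\psi_1}{\phi(r-\psi_1)}$, so that $B=r(b_0-\nu h_0)+y_1$, and then prove that the combination $b-\nu h$ is \emph{conserved} along the lockdown trajectory. This is precisely where Proposition~\ref{Prop:solsystem_b} does the work: during the unanticipated (hence effectively infinite-horizon) lockdown the household stays on the stable manifold, so the transversality condition forces $b_L-b^*_L=\nu\,(h_L-h^*_L)$ at every instant, i.e. $b_L-\nu h_L=b^*_L-\nu h^*_L$, a constant. Evaluating this constant at $t=0$ gives $b^*_L-\nu h^*_L=b_0-\nu h_0$, which is exactly condition (\ref{b0hstar}) applied to the lockdown steady state, while evaluating it at $t=\tilde t$ gives $b_{\tilde t}-\nu h_{\tilde t}=b^*_L-\nu h^*_L$. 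Chaining these two identities yields $b_{\tilde t}-\nu h_{\tilde t}=b_0-\nu h_0$.

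Assembling the pieces: since $y_{1,AL}=y_{1,NL}=y_1$ and $b_{\tilde t}-\nu h_{\tilde t}=b_0-\nu h_0$, we get $B_{AL}=r(b_{\tilde t}-\nu h_{\tilde t})+y_1=r(b_0-\nu h_0)+y_1=B_{NL}$, and plugging this common value (together with the common $\psi_1$, $y_2$, utility parameters) into the formulas of Proposition~\ref{Prop:prelock} delivers $h^*_{AL}=h^*_{NL}$ and $\lambda_{AL}=\lambda_{NL}$. The conceptual crux, and the only step that is not bookkeeping, is recognizing that $b-\nu h$ is an invariant of the lockdown phase: once the trajectory is pinned to the stable manifold, the higher lockdown output $y_{1,L}$ moves $b^*_L$ and $h^*_L$ in lockstep and leaves their combination, and thus the post-lockdown steady state, unchanged. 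A minor point I would verify explicitly is that $\psi_1$ is genuinely identical in the one- and two-sector regimes, which holds here because the linear-quadratic specification makes $\psi_1$ independent of $\lambda$.
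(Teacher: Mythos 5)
Your proof is correct and follows essentially the same route as the paper: the paper substitutes the explicit expressions (\ref{eqeq:inihAL})--(\ref{eqeq:inibAL}) for $(b_{\tilde t},h_{\tilde t})$ into the steady-state formulas and simplifies, which is exactly the computation your conserved-quantity $b-\nu h$ repackages. Your framing makes the mechanism more transparent (the stable-manifold relation (\ref{eqeq:b}) pins $b-\nu h$ to its $t=0$ value throughout the lockdown, and $y_{1,AL}=y_{1,NL}$ does the rest), but the underlying algebra is identical.
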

\begin{proof}
	$$h^*_{AL}= \frac{\phi(\psi_1-r)}{(\phi+r)\psi_1}\left[rb_{\tilde t}+y_1+\frac{r(\phi+\psi_1)}{\phi(\psi_1-r)}h_{\tilde t}\right]$$
Substituting the value of $b_{\tilde t}$ and $h_{\tilde t}$ from (\ref{eqeq:inibAL}) and (\ref{eqeq:inihAL}) we get
$$h^*_{AL}= \frac{\phi(\psi_1-r)}{(\phi+r)\psi_1}\left[-\frac{\psi_1(\phi+r)}{\phi(r-\psi_1)}h^*_L-y_{1,L}+y_1\right]$$
Substituting now the value of $h^*_L$ from equation (\ref{eqeq:hstarLock}) leads to
\begin{eqnarray*} h^*_{AL}&=&  \frac{\phi(\psi_1-r)}{(\phi+r)\psi_1}\left[\frac{\psi_1(\phi+r)}{\phi(\psi_1-r)}\left(\frac{\phi(\psi_1-r)}{(\phi+r)\psi_1}\left[rb_0+y_{1,L}+\frac{r(\phi+\psi_1)}{\phi(\psi_1-r)}h_0\right]\right)-y_{1,L}+y_1\right] \\ &=& \frac{\phi(\psi_1-r)}{(\phi+r)\psi_1}\left[rb_0+y_1+\frac{r(\phi+\psi_1)}{\phi(\psi_1-r)}h_0\right]=h^*_{NL}.
\end{eqnarray*}
Now we focus on the parameter $\lambda$. By Proposition \ref{Prop:prelock}, we know that
\[\lambda_{NL}=m_0+m_1\left[rb_0+y_1+\frac{r(\phi+\psi_1)}{\phi(\psi_1-r)}h_0\right]. \]
To calculate $\lambda_{AL}$, we just need to replace $b_0,h_0$ with $b_{\tilde t},h_{\tilde t}$, then
\[\lambda_{AL}=m_0+m_1\left[rb_{\tilde t}+y_1+\frac{r(\phi+\psi_1)}{\phi(\psi_1-r)}h_{\tilde t}\right]. \]
Since the economy was in lockdown when $t\in[0,\tilde t]$, then $h_{\tilde t},b_{\tilde t}$ satisfies respectively identities \eqref{eqeq:inihAL}-\eqref{eqeq:inibAL}.
Then, from \eqref{eqeq:inihAL}-\eqref{eqeq:inibAL} and using the definition of $h^*_{L}$ in \eqref{eqeq:hstarLock} we get that
\[
 rb_{\tilde t}+y_{1,L}+r\frac{\phi+\psi_1}{\phi(\psi_1-r)}h_{\tilde t}= -\frac{\psi_1(\phi+r)}{\phi(r-\psi_1)}h^*_L
 =rb_0+y_{1,L}+\frac{r(\phi+\psi_1)}{\phi(\psi_1-r)}h_0.
\]
Thus, $\lambda_{NL}=\lambda_{AL}$.
\end{proof}

\medskip

\begin{proof}
By standard arguments, one can prove that $v(\tau,b(\tau),h(\tau))=e^{-\rho \tau}v(0,b(\tau),h(\tau))$. Since, the law of $\tau$ is known, we rewrite the value function in the following way,
\begin{align*}
v(0,b,h)&=\max_{c_1,c_2}\int_0^\infty \lambda e^{-\delta t}\int_0^{t}e^{-\rho s}u(c_1,h,c_2)dsdt+\\
&\hspace{5cm}+\int_0^\infty \delta e^{-\delta t} e^{-\rho t}v(0,b(t),h(t))dt\\
&=\max_{c_1,c_2}\int_0^\infty e^{-\rho s}u(c_1,h_1,c_2)\left(\int_s^\infty \delta e^{-\delta t} dt\right)ds+\\
&\hspace{5cm}+\int_0^\infty \delta e^{-(\rho+\delta) t} v(0,b(t),h(t))dt\\
&=\max_{c_1,c_2}\int_0^\infty e^{-(\rho +\delta)s}u(c_1,h_1,c_2)ds+\\
&\hspace{5cm}+\int_0^\infty \delta e^{-(\rho+\delta) t} v(0,b(t),h(t))dt.\\
&=\max_{c_1,c_2}\int_0^\infty e^{-(\rho +\delta)s}\left[u(c_1,h(s),c_2)ds+\delta v(0,b(s),h(s))\right]ds.\\
\end{align*}
\end{proof}

\section*{Appendix B: Derivations and Other Results}

\subsection*{Extension: Permanent change in labor composition}

\textbf{Derivation of equation (\ref{interpretation1})}

The equilibrium  relative price equation for the model with linear-quadratic utility is
$$p=\frac{\phi(a_{c_2}+a_{c_2c_2}y_2)+[(\psi_1+\phi)a_{c_1c_2}+\phi a_{c_2 h}]h}{\phi\lambda},$$ 	
which is now a function $p=f(y_2(\xi(t)),\lambda(\xi(t)),h(\xi(t),t))$. Differentiating it with respect to $t$ and setting $dt=1$ leads to
\begin{equation}
	dp=\underbrace{\left[\frac1\lambda \left(a_{c_2c_2}\frac{\partial y_2}{\partial\xi}-p\frac{\partial\lambda}{\partial \xi}\right)+ SSE \cdot \frac{\partial h}{\partial \xi}\right] d\xi}_{Labor \ Composition \ Effect \ (LCE)}+SSE\cdot dh \label{interpretation_prelim}
\end{equation}
where taking into account the shadow price equation (\ref{lambda_LQ}) and the habits path (\ref{habitpath}) and its steady state (\ref{h*}):
\begin{equation}
\footnotesize
	LCE= \frac1\lambda \left[a_{c_2c_2}\frac{\partial y_2}{\partial\xi}-p\left(a_{c_1c_2}+\frac{\phi}{\phi+\rho}a_{c_2h}\right)\frac{\partial y_2}{\partial\xi}-pm_1\frac{\partial y_1}{\partial \xi}\right] + SSE (1-e^{\psi_1(t-\tilde t)})\cdot \frac{\phi(\psi_1-r)}{(\phi+r)\psi_1} \frac{\partial y_1}{\partial \xi} \label{interpretation1_prelim}
\end{equation}
	
Therefore, at $t=\tilde t$ when Sector 2 reopens, the last term is zero and we get equation (\ref{interpretation1}).
Moreover, taking into account the expression of the relative price steady state (\ref{pstar}) and differentiating it with respect to $\xi$ leads to
$$\frac{dp^*}{d\xi}=\frac1\lambda \left(a_{c_2c_2}\frac{\partial y_2}{\partial\xi}-p\frac{\partial\lambda}{\partial \xi}\right)+SSE\cdot\frac{dh^*}{d\xi}$$
or equivalently
$$\frac{dp^*}{d\xi}=\frac1\lambda \left[\underbrace{a_{c_2c_2}\frac{\partial y_2}{\partial\xi}}_{>0}-p\left(a_{c_1c_2}+\frac{\phi}{\phi+\rho}a_{c_2h}\right)\frac{\partial y_2}{\partial\xi}\underbrace{-pm_1\frac{\partial y_1}{\partial \xi}}_{>0}\right]+SSE\cdot\underbrace{\frac{dh^*}{d\xi}}_{>0}.$$
Therefore, it follows immediately that
$$SSE>0 \qquad \Leftrightarrow \qquad a_{c_2h}>\bar a_{c_2h} \qquad \Rightarrow \qquad \frac{dp^*}{d\xi}>0$$
since $-\frac{\phi+\rho}\phi a_{c_1c_2}<-\frac{\phi+\psi_1}\phi a_{c_1c_2}$ when $a_{c_1c_2}>0$ (case of substitute goods). Observe that, a substitutability effect stronger than  a satitation effect is only a sufficient but not necessary condition for a positive change in the relative prices due to a readjustment in the labor composition. Observe also that $p$ may converge either from below or from above to its steady state level depending on the size of the labor composition change.

Exactly the opposite happens when
$$a_{c_2h}<<-\frac{\phi+\rho}\phi a_{c_1c_2} \qquad \Rightarrow \qquad \frac{dp^*}{d\xi}<0 $$
since now both terms in the right hand side of the expression of $\frac{dp^*}{d\xi}$ are negative.

\medskip

We now want to understand what happens to the economy when we are not in these two extreme cases. To do that, we begin assuming that the change in the labor composition during the lockdown with a fraction of work $a\in(0,1)$ allocated from sector 2 to sector 1 is permanent. Therefore, after lockdown we will have that
$$
\ell_{1,AL}=\xi \bar \ell+a(1-\xi)\bar \ell, \ \ and \ \   \ell_{2,AL}=(1-a)(1-\xi)\bar \ell.
$$
implying the following productions
$$
y_{1, AL}=\ell_{1, AL}^\alpha=(\xi \bar \ell+a(1-\xi)\bar \ell)^\alpha, \ \ and  \ \   y_{2,AL}=\ell_{2,AL}^\alpha=(1-a)^\alpha(1-\xi)^\alpha \bar \ell^\alpha.
$$
Note that
\begin{equation}\label{yinALAL}
y_{1, AL}>y_{1}, \quad y_{2,AL}<y_{2},
\end{equation}
where we recall that $
 y_1=(\xi \bar \ell)^\alpha, y_2=(1-\xi)^\alpha\bar \ell^\alpha,
$
are the productions before the lockdown.
Observe that this change in the pattern of production affect only the equations for $b$ and $c_2$ where at the place of $y_1$ and  $y_2$ we have now $y_{1,AL}$ and $y_{2,AL}$, respectively. Then, it follows immediately that the results of the sections \ref{Sec:2sectors} and \ref{Sec:1sector}  still hold as well as the results of Proposition \ref{Prop:prelock}.
\begin{proposition}\label{confrpnps}
Consider the price dynamics in an economy without the lockdown (NL) and in an economy with a $\tilde t$-period lockdown (AL),:
$$
	p_{t,NL}=p^*+\frac{(\phi+\psi_1)a_{c_2c_1}+\phi a_{c_2 h}}{\phi \lambda_{NL}}(h_0-h^*_{NL})e^{\psi_1 t} \ \ with \ \ t\in[0,\infty].
	$$
		$$
	p_{t,AL}=p^*_{AL}+ \frac{(\phi+\psi_1)a_{c_2c_1}+\phi a_{c_2 h}}{\phi\lambda_{AL}}\left[h^*_L-h^*_{AL}+(h_{0}-h^*_L)e^{\psi_1 \tilde t}\right]e^{\psi_1(t-\tilde t)} \ \ with \ \ t\in[\tilde t,\infty]
	$$
with $\lambda_{NL}, p^*, h^*_{NL}(=h^*)$ are as in Proposition \ref{Prop:prelock}, $\lambda_{AL}$, $p^*_{AL}$, $h^*_{AL}$ as in the anologous of Proposition \ref{Prop:prelock} with $y_{1,AL}, y_{2,AL}$ instead of $y_1, y_2$, and $h^*_L$ as  in equation (\ref{eqeq:hstarLock}).
 Under the same assumptions of Proposition \ref{Prop:prelock},  and assuming in addition that either
\begin{equation}\label{assconfrpstarnps1}
a_{c_1c_2}>0, \quad a_{c_2h}>-a_{c_1c_2}, \quad a_{c_2}>-a_{c_2c_2}y_2,
\end{equation}
or
\begin{equation}\label{assconfrpstarnps2}
a_{c_1c_2}<0, \quad a_{c_2h}>-\frac{\phi+\rho}{\phi}a_{c_1c_2}, \quad a_{c_2}>-a_{c_2c_2}y_2,
\end{equation}
then we have that
\begin{equation}\label{confrpstarnps}
p^{*}_{AL}>p^*.
\end{equation}
Moreover, denote
\begin{equation}\label{eq:I}
I=(h_0-h^*_{L})\lambda_{NL}-(h_0-h^*_{NL})\lambda_{AL}.
\end{equation}
Then we have that, if the satiation dominates the substitutability effect, $a_{c_2h}<\bar a_{c_2h}$, $h_0>h^*_{L}$, and $I>0$, and the lockdown is sufficiently short $\tilde t< \underline{\tilde t}$ then
\begin{equation}\label{eq:dpd2}
p_{ \tilde t,AL}<p_{\tilde t,NL},
\end{equation}
where $\bar a_{c_2h}\equiv -\frac{\phi+\psi_1}{\phi} a_{c_2c_1},  \underline{\tilde t}=\frac{\ln((p^*_{AL}-p^*)\phi\lambda_{NL}\lambda_{AL})-\ln(-((\phi+\psi_1)a_{c_2c_1}+\phi a_{c_2h})I)}{\psi_1}.$
\end{proposition}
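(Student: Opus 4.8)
The plan is to prove the two claims of Proposition~\ref{confrpnps} separately, leaning on the closed-form steady states of Proposition~\ref{Prop:prelock}. The single observation that makes everything tractable is that, in the permanent-labour-change scenario, the sector-1 labour during the lockdown already coincides with the sector-1 labour afterwards, $\ell_{1,L}=[\xi+a(1-\xi)]\bar\ell=\ell_{1,AL}$, so $y_{1,L}=y_{1,AL}$; reading off (\ref{eqeq:hstarLock}) and the analogue of (\ref{h*}) with $y_{1,AL}$, this yields the identity $h^*_L=h^*_{AL}$. This is precisely what distinguishes the present case from the benchmark of Proposition~\ref{prop:huguali}, where labour fully readjusts and instead $h^*_L>h^*_{AL}=h^*_{NL}$. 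I would record this identity first, since it drives the second claim.

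For the inequality (\ref{confrpstarnps}) I would avoid the marginal $dp^*/d\xi$ route (which only handles $SSE>0$) and compare the two steady states directly. Writing $p^*=N/\lambda$ with $N=a_{c_2}+a_{c_2c_2}y_2+D\,h^*$ and $D=a_{c_1c_2}+a_{c_2h}$, and decomposing $\lambda=A+B\,y_2+C\,h^*$ — which follows by setting $Z$ equal to the bracket common to (\ref{h*}) and (\ref{lambda_LQ}), noting $h^*=KZ$ and $m_1=KC$ with $K=\frac{\phi(\psi_1-r)}{(\phi+r)\psi_1}>0$ so that $m_1Z=Ch^*$, and reading off $m_0=A+B\,y_2$ with $B=a_{c_1c_2}+\frac{\phi}{\phi+\rho}a_{c_2h}$ and $C<0$ — a one-line manipulation gives the exact identity
$$ p^*_{AL}-p^* = \frac{(a_{c_2c_2}-p^*B)\,\Delta y_2 + (D-p^*C)\,\Delta h}{\lambda_{AL}},\qquad \Delta y_2=y_{2,AL}-y_2<0,\ \ \Delta h=h^*_{AL}-h^*>0. $$
The entire content then reduces to signing the numerator. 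Under either (\ref{assconfrpstarnps1}) or (\ref{assconfrpstarnps2}) one checks $B>0$ and $D>0$ (this is exactly where the thresholds $-a_{c_1c_2}$ and $-\frac{\phi+\rho}{\phi}a_{c_1c_2}$ are calibrated to bite), while $a_{c_2c_2}<0$, $C<0$, and $p^*>0$ (the last guaranteed by $a_{c_2}>-a_{c_2c_2}y_2$, which forces $N>0$). Hence $(a_{c_2c_2}-p^*B)\Delta y_2>0$ and $(D-p^*C)\Delta h>0$, and $\lambda_{AL}>0$ by the analogue of Proposition~\ref{Prop:prelock}, so $p^*_{AL}>p^*$.

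For (\ref{eq:dpd2}) I would evaluate the two price paths at $t=\tilde t$, abbreviate $S=(\phi+\psi_1)a_{c_2c_1}+\phi a_{c_2h}$ (so $S<0$ is precisely ``satiation dominates''), and use $h^*_L=h^*_{AL}$ to annihilate the constant habit-gap term $h^*_L-h^*_{AL}$. The difference collapses to
$$ p_{\tilde t,AL}-p_{\tilde t,NL} = (p^*_{AL}-p^*) + \frac{S}{\phi}\,e^{\psi_1\tilde t}\,\frac{(h_0-h^*_L)\lambda_{NL}-(h_0-h^*_{NL})\lambda_{AL}}{\lambda_{AL}\lambda_{NL}} = (p^*_{AL}-p^*) + \frac{S\,I}{\phi\lambda_{AL}\lambda_{NL}}\,e^{\psi_1\tilde t}, $$
which is exactly where $I$ as defined in (\ref{eq:I}) is designed to surface. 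Since the first claim gives $p^*_{AL}-p^*>0$, and $S<0$, $I>0$, the coefficient of $e^{\psi_1\tilde t}$ is negative; imposing that the sum be negative and taking logarithms (recalling $\psi_1<0$, so dividing by $\psi_1$ reverses the inequality) returns precisely $\tilde t<\underline{\tilde t}$ with $\underline{\tilde t}$ as stated, where $I>0$ makes $\ln(-SI)$ well defined. The assumption $h_0>h^*_L$ (together with $h^*_L=h^*_{AL}>h^*_{NL}$) is what fixes the signs of the individual habit-deviation terms and keeps $\underline{\tilde t}$ positive, so the statement is non-vacuous.

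The main obstacle is not any single deep step but the sign bookkeeping in the first claim: the quantities $B$, $C$, $D$ and the price $p^*$ must each be signed from two alternative hypothesis sets, and the decomposition $\lambda=A+B\,y_2+C\,h^*$ has to be used consistently — it is exactly what lets the \emph{pre}-lockdown $p^*$ (not $p^*_{AL}$) appear in the exact difference formula, which is the reason the comparison is clean rather than circular. By contrast, once the cancellation $h^*_L=h^*_{AL}$ is in hand, the second claim is a routine logarithmic inversion; the only care there is matching the direction of every inequality to the signs of $\psi_1$, $S$, and $I$.
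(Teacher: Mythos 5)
Your proposal is correct, and it splits naturally into a part that coincides with the paper and a part that takes a genuinely different route. The second claim, \eqref{eq:dpd2}, is handled exactly as in the paper: both arguments rest on the identity $h^*_L=h^*_{AL}$ (which the paper merely asserts; your justification via $y_{1,L}=y_{1,AL}$, i.e.\ no labour readjustment, is the right one and is a welcome addition), both collapse the difference to $p_{\tilde t,AL}-p_{\tilde t,NL}=(p^*_{AL}-p^*)+\frac{SSE\cdot I}{\phi\lambda_{NL}\lambda_{AL}}e^{\psi_1\tilde t}$, and both invert the logarithm with the sign flip from $\psi_1<0$. For the first claim, $p^*_{AL}>p^*$, the paper instead parametrizes $p^*_{AL}(z_1,z_2)$ over the rectangle $[y_1,y_{1,AL}]\times[y_{2,AL},y_2]$, signs $\partial p^*_{AL}/\partial z_2<0$ and $\partial p^*_{AL}/\partial z_1>0$ separately, and chains the two monotonicities; you instead derive the exact identity
$p^*_{AL}-p^*=\lambda_{AL}^{-1}\left[(a_{c_2c_2}-p^*B)\Delta y_2+(D-p^*C)\Delta h\right]$
from the affine decompositions of $N$ and $\lambda$ in $(y_2,h^*)$, and sign the two summands. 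The underlying sign facts are identical in both treatments ($B>0$ and $D>0$ from either hypothesis set, $a_{c_2c_2}<0$, $C<0$ from $m_1<0$, $p^*>0$ from $a_{c_2}>-a_{c_2c_2}y_2$, and $\Delta h=h^*_{AL}-h^*>0$ from $y_{1,AL}>y_1$), so the two proofs are logically equivalent in their inputs; what your version buys is a one-shot closed form that only requires $\lambda_{AL}>0$ and $\lambda_{NL}>0$ at the two endpoints, whereas the paper's derivative argument needs $\lambda_{AL}(y_{1,AL},z_2)>0$ along the whole segment of intermediate $z_2$ and therefore spends several lines establishing the uniform bound on $b_0$. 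One small inaccuracy: you say you are avoiding the ``marginal $dp^*/d\xi$ route,'' but that route appears only in the heuristic discussion preceding the proposition; the paper's actual proof is the two-variable monotonicity chain just described. You also gloss the point, which the paper makes explicit, that the single assumption $b_0<-\tilde m_0/(rm_1)+\underline{\tilde b}_0$ (stated for the AL thresholds) is what delivers both $\lambda_{AL}>0$ and, via $y_{1,AL}>y_1$, $y_{2,AL}<y_2$, the corresponding bound guaranteeing $\lambda_{NL}>0$; this is bookkeeping rather than a gap, but it should be recorded if the proof is written out in full.
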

\begin{proof}
First we prove \eqref{confrpstarnps}.
We think of $p^*$ as a function in two variables,
\[p^*_{AL}(z_1,z_2)=\frac{a_{c_2}+a_{c_2c_2}z_2+\left(a_{c_1c_2}+a_{c_h h}\right)h^*_{AL}(z_1)}{\lambda_{AL}(z_1,z_2)}, \quad z_1\in[y_1, y_{1, AL}], z_2\in[y_{2,AL}, y_2],
\]
where
$$
h^*_{AL}(z_1)=\frac{\phi(\psi_1-r)}{(\phi+r)\psi_1}\left[rb_0+z_1+\frac{r(\phi+\psi_1)}{\phi (\psi_1-r)}h_0\right]
$$
$$
\lambda_{AL}(z_1,z_2)=m_0(z_2)+m_1\left[rb_0+z_1+\frac{r(\phi+\psi_1)}{\phi(\psi_1-r)}h_0\right],
$$
$$
m_0(z_2)=a_{c_1}+\frac{\phi a_h+[(\phi+\rho)a_{c_1c_2}+\phi a_{c_2h}]z_2}{\phi+\rho}.
$$
In particular,
$p^*_{AL}(y_{1,AL},y_{2,AL})=p^*_{AL},\,\,\textrm{and}\,\,p^*_{AL}(y_{1},y_{2})=p^*.$
Relation \eqref{confrpstarnps} can be rewritten as
$p^*_{AL}(y_{1,AL},y_{2,AL})>p^*_{AL}(y_{1},y_{2}).$
Thus, if we prove
\begin{equation}\label{eq:derivative_z1}
    \frac{\partial p^*_{AL}}{\partial z_2}(y_{1,AL},z_2)<0
\end{equation}
and
\begin{equation}\label{eq:derivative_z2}
    \frac{\partial p^*_{AL}}{\partial z_1}(z_1,y_{2})>0
\end{equation}
then we get
\begin{equation}\label{catenader}
p^*_{AL}(y_{1,AL},y_{2,AL})\underbrace{\geq}_{\eqref{eq:derivative_z1}}p^*_{AL}(y_{1,AL},y_{2})\underbrace{\geq}_{\eqref{eq:derivative_z2}}p^*_{AL}(y_{1},y_{2}).
\end{equation}
Note that
$
\lambda_{AL}(y_{1,AL}, z_2)> 0$ for all $z_2 \in [y_{2,AL}, y_2].$
Indeed
\begin{equation}\label{signlambda}
\lambda_{AL}(y_{1,AL}, z_2)=m_0(z_2)+m_1\left[rb_0+y_{1,AL}+\frac{r(\phi+\psi_1)}{\phi(\psi_1-r)}h_0\right]>0
\end{equation}
if and only if
\begin{equation}\label{b0upper}
b_0<-\frac{m_0(z_2)}{rm_1}-\frac{y_{1,AL}}{r}+\frac{\phi+\psi_1}{\phi(r-\psi_1)}h_0 \quad \forall z_2 \in [y_{2,AL}, y_2].
\end{equation}
Now note that, if $a_{c_1c_2}>0$, we have
$
-\frac{\phi+\rho}{\phi}a_{c_1c_2}<-a_{c_1c_2},
$
hence $a_{c_1c_2}+a_{c_2h}>0$ implies
$
a_{c_2h}>-\frac{\phi+\rho}{\phi}a_{c_1c_2}.
$
Then by either assumption \eqref{assconfrpstarnps1} or \eqref{assconfrpstarnps2}, we have
\begin{equation}\label{eq:signphirho}
(\phi+\rho)a_{c_1c_2}+\phi a_{c_2h}>0,
\end{equation}
and then
$$
m_0(z_2)=a_{c_1}+\frac{\phi a_h+\overbrace{[(\phi+\rho)a_{c_1c_2}+\phi a_{c_2h}]}^{>0}z_2}{\phi+\rho}.
$$
In addition, we have that
$
m_0(y_{2,AL})\leq m_0(z_2)$ for all $z_2 \in [y_{2,AL}, y_2]
$
and since $m_1<0$, we have
$
-\frac{m_0(y_{2,AL})}{rm_1}\leq -\frac{m_0(z_2)}{rm_1}$ for all $z_2\in [y_{2,AL}. y_2]
$
We will denote by $\tilde m_0$ and $\underline{\tilde b}_0$ the constants introduced in Proposition \ref{Prop:prelock} with $y_{1,AL}, y_{2,AL}$ in the place of $y_1, y_2$ respectively.  Since by the definition of $m_0(z_2)$ and $\tilde m_{0}$, it is easy to see that
$
m_0(y_{2,AL})=\tilde m_{0},
$
we conclude that assumption
$
b_0< -\frac{\tilde m_{0}}{rm_1}+\underline{\tilde b}_{0}
$
implies \eqref{b0upper}. Moreover, using the assumption $b_0> \underline{\tilde b}_{0}$ we have
\begin{equation}\label{signhstar}
h^*_{AL}(y_{1,AL})=h^*_{AL}>0.
\end{equation}
Then by \eqref{eq:signphirho} we have
$$
\frac{\partial \lambda_{AL}(y_{1,AL},z_2)}{\partial z_2}=\frac{\partial m_0(z_2)}{\partial z_2}=\frac{(\phi+\rho)a_{c_1c_2}+\phi a_{c_2h}}{\phi+\rho}>0
$$
and
$$
\frac{\partial p^*_{AL}(y_{1,AL},z_2)}{\partial z_2}=\frac{a_{c_2c_2}\lambda_{AL}(y_{1,AL},z_2)-(a_{c_2}+a_{c_2c_2} z_2+(a_{c_1c_2}+a_{c_2h})h^*_{AL})\frac{\partial \lambda_{AL}(y_{1,AL}, z_2)}{\partial z_2}}{\lambda_{AL}(y_{1,AL}, z_2)^2}.
$$
Note that the assumption $a_{c_2}+a_{c_2c_2}y_2>0$ implies
\begin{equation}\label{assconfr}
a_{c_2}+a_{c_2c_2}y>0, \quad \forall y \in [y_{2,AL}, y_2],
\end{equation}
since $a_{c_2c_2}<0$.
Now note that, if $a_{c_1c_2}<0$, we have
$
-\frac{\phi+\rho}{\phi}a_{c_1c_2}>-a_{c_1c_2},
$
hence $a_{c_2h}>-\frac{\phi+\rho}{\phi}a_{c_1c_2}$ implies
$
a_{c_2h}>-a_{c_1c_2}.
$
Then,  by either assumption \eqref{assconfrpstarnps1} or assumption \eqref{assconfrpstarnps2} and by \eqref{assconfr}, \eqref{signlambda} and \eqref{signhstar},  we get \eqref{eq:derivative_z1}.
Now we write
\begin{equation}\label{pstary}
p^*_{AL}(z_1, y_2)=\frac{a_{c_2}+a_{c_2c_2}y_{2}+(a_{c_1c_2}+a_{c_2 h})\frac{\phi (\psi_1-r)}{(\phi +r)\psi_1}F(z_1)}{m_0+m_1 F(z_1)} \quad z_1 \in [y_1, y_{1,AL}],
\end{equation}
where
$
F(z_1)=rb_0+z_1+\frac{r(\phi+\psi_1)}{\phi(\psi_1-r)}h_0.
$
Note that $\frac{\partial F(z_1)}{\partial z_1}=1$. Then we have
$$
\frac{\partial p^*_{AL}(z_1, y_2)}{\partial z_1}=\frac{(a_{c_1c_2}+a_{c_2h})\frac{\phi (\psi_1-r)}{(\phi+r)\psi_1}m_0-m_1( a_{c_2}+a_{c_2c_2}y_{2})}{\left(m_0+m_1F(z_1)\right)^2}.
$$
By assumption \eqref{assconfrpstarnps1} or assumption \eqref{assconfrpstarnps2} and recalling that $m_0>0, m_1<0$ and $\psi_1<0$, we have
$$
\underbrace{(a_{c_1c_2}+a_{c_2h})}_{>0}\underbrace{\frac{\phi (\psi_1-r)}{(\phi+r)\psi_1}}_{>0}\underbrace{m_0}_{>0}\underbrace{-m_1( a_{c_2}+a_{c_2c_2}y_{2})}_{>0}
$$
and we get \eqref{eq:derivative_z2}.
By \eqref{eq:derivative_z1} and \eqref{eq:derivative_z2}, as shown in \eqref{catenader}, we get \eqref{confrpstarnps}.

Now we prove \eqref{eq:dpd2}.
Note that since $h^*_L=h^*_{AL}$, we have
$
h^*_{NL}=h^*<h^*_{L}=h^*_{AL}
$
and
		$$
	p_{\tilde t,AL}=p^*_{AL}+ \frac{(\phi+\psi_1)a_{c_2c_1}+\phi a_{c_2 h}}{\phi\lambda_{AL}}(h_{0}-h^*_{AL})e^{\psi_1 \tilde t}.
	$$
For convenience of notation denote
$
SSE=(\phi+\psi_1)a_{c_2c_1}+\phi a_{c_2 h}.
$
Recall that we denote by $\tilde m_0$ and $\underline{\tilde b}_0$ the constants introduced in Proposition \ref{Prop:prelock} with $y_{1,AL}, y_{2,AL}$ in the place of $y_1, y_2$ respectively. Note that by \eqref{yinALAL} the assumption $
b_0< -\frac{\tilde m_{0}}{rm_1}+\underline{\tilde b}_{0}$ implies $
b_0< -\frac{m_{0}}{rm_1}+\underline{b}_{0}
$, so that by Proposition \ref{Prop:prelock}, we have $\lambda_{NL}>0$. Moreover by the same assumption and  the analogous of Proposition \ref{Prop:prelock},  we have $\lambda_{AL}>0$.
Then if the satiation dominates the substitutability effect, $a_{c_2 h}< \overline a_{c_2h}$, that is $SSE<0$, and  if $h_0 >h^*_{L}$ (which implies $h_0>h^*_{NL}$), we have
\begin{eqnarray*}
p_{\tilde t, AL}-p_{\tilde t, NL}&=&p^*_{AL}-p^*+ \frac{SSE(h_{0}-h^*_{L})e^{\psi_1 \tilde t}}{\phi\lambda_{AL}}-\frac{SSE(h_0-h^*_{NL})e^{\psi_1 \tilde t}}{\phi \lambda_{NL}}\\ &=& \underbrace{p^*_{AL}-p^*}_{>0}+\frac{\overbrace{SSE}^{<0}[\overbrace{(h_0-h^*_{L}}^{>0})\overbrace{\lambda_{NL}}^{>0}-\overbrace{(h_0-h^*_{NL})}^{>0}\overbrace{\lambda_{AL}}^{>0}]e^{\psi_1\tilde t}}{\underbrace{\phi \lambda_{NL} \lambda_{AL}}_{>0}}.
\end{eqnarray*}
We define $I$ as in \eqref{eq:I}, assume that $I>0$ and  conclude that
$$
p_{\tilde t, AL}-p_{\tilde t, NL}=p^*_{AL}-p^*+\frac{SSE \cdot I}{\phi \lambda_{NL}\lambda_{AL}}e^{\psi_1 \tilde t}<0
$$
 if and only if
 \begin{equation}\label{eq:tildetdpd}
 \tilde t<\frac{\ln\left(-\frac{(p^*_{AL}-p^*)\phi \lambda_{NL}\lambda_{AL}}{SSE \cdot I}\right)}{\psi_1}=\frac{\ln((p^*_{AL}-p^*)\phi\lambda_{NL}\lambda_{AL})-\ln(-SSE \cdot I)}{\psi_1}
 \end{equation}
 which entails \eqref{eq:dpd2}.
\end{proof}

\subsection*{Extension: Anticipated lockdown duration}

In Lemma \ref{lem:ex1} we solve, using the maximum principle, the infinite horizon maximization problem \eqref{eq:first_step_problem}, with $t\in[T,\infty)$. Then in Lemma \ref{lem:ex2}, we solve problem \eqref{DPP_1} as a finite horizon optimization problem where the term $v(T,b(T), h(T))$ is the terminal cost.
In the following the same assumptions of Section \ref{Section:LockdownQ} hold and to avoid cumbersome notation, we write
$$
v(0,b,h)=:v(b,h).
$$

\begin{lemma}\label{lem:ex1}
Suppose that both sectors are active with production $
y_{1,NL}=(\xi \bar l)^\alpha$, and $y_{2,NL}=[(1-\xi)\bar l]^\alpha$.
Then, a given state-control quadruple $((c_{1})_{TS,AL}, (c_{2})_{TS,AL}, h_{TS,AL}, b_{TS,AL})$ is optimal for the infinite horizon maximization problem after lockdown \eqref{eq:first_step_problem}  if and only if it is a solution of the following system
\begin{eqnarray}
&& u_{c_{1_{TS,AL}}}+\mu_{TS,AL}\phi-\lambda_{TS,AL}=0 \label{foc:c1e1_1} \\
&&u_{c_{2_{TS,AL}}}-p\lambda_{TS,AL}=0  \label{foc:c2e1_1} \\
&& \dot\mu_{TS,AL} =(\phi+\rho)\mu_{TS,AL}-u_{h_{TS,AL}},\quad t\in[T,\infty)\label{foc:he1_1}\\
&& \dot \lambda_{TS,AL} =0, \quad t\in[T,\infty)\label{foc:be1_1} \\
&& \lim_{t \to + \infty} b_{TS,AL}\cdot\lambda_{TS,AL} e^{-\rho t}=0 \label{tvc:1e1_1} \\
&& \lim_{t \to + \infty} h_{TS,AL}\cdot \mu_{TS,AL} e^{-\rho t}=0\label{tvc:2e1_1}\\
&& \dot {h}_{TS,AL}=\phi(c_{1_{TS,AL}} -  h_{TS,AL}),\quad t\in(T,\infty)  \label{statehe2_1}\\
&&\dot {b}_{TS,AL}+c_{1_{TS,AL}}=r b_{TS,AL}+ y_1,\quad t\in(T,\infty)  \label{statebe2_1}\\
&& c_{2_{TS,AL}}=y_2.\label{mcc:c2_1}\\
&& h_{TS,AL}(T)=h_{TS,L}(T),\label{ic:h_1}\\
&& b_{TS,AL}(T)= b_{TS,L}(T).\label{ic:b_1}
\end{eqnarray}
Note that $(b_{TS,L}(T), h_{TS,L}(T))$ are taken as exogenously given constants at this stage of the analysis.
\end{lemma}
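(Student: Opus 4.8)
The plan is to establish the stated equivalence as an instance of the Pontryagin maximum principle combined with a concavity-based sufficiency argument, mirroring exactly the derivation of the first-order conditions \eqref{foc:c1}--\eqref{tvc:2} in Section \ref{Sec:2sectors} but now posed on the interval $[T,\infty)$ with the inherited initial data $h_{TS,L}(T)$ and $b_{TS,L}(T)$. First I would write the current-value Hamiltonian for the decentralized household problem \eqref{eq:first_step_problem},
$$
\mathcal H = u(c_1,h,c_2) + \mu\,\phi(c_1-h) + \lambda\,(rb + y_1 - p c_2 - c_1),
$$
treating $p$ as parametric from the household's viewpoint and $(h,b)$ as the states with co-states $(\mu,\lambda)$.

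For the necessity direction (optimal $\Rightarrow$ system), I would apply the maximum principle to this Hamiltonian. Stationarity of $\mathcal H$ in $c_1$ and $c_2$ yields \eqref{foc:c1e1_1} and \eqref{foc:c2e1_1}; the co-state equations $\dot\mu=\rho\mu-\mathcal H_h$ and $\dot\lambda=\rho\lambda-\mathcal H_b$ give \eqref{foc:he1_1} and, using the maintained assumption $r=\rho$, the degenerate equation \eqref{foc:be1_1}; and the standard infinite-horizon transversality conditions give \eqref{tvc:1e1_1}--\eqref{tvc:2e1_1}. The state equations \eqref{statehe2_1}--\eqref{statebe2_1} and the goods-market clearing \eqref{mcc:c2_1} are simply the constraints of the problem, while \eqref{ic:h_1}--\eqref{ic:b_1} merely record the matching of the initial data of the after-lockdown phase to the terminal data of the lockdown phase.

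For the sufficiency direction (system $\Rightarrow$ optimal), I would invoke the Arrow--Mangasarian sufficiency theorem. Since $u$ is jointly concave in $(c_1,c_2,h)$ by the maintained assumptions of Section \ref{Sec:ModelSetup} and the state equations are linear in the state and control variables, the maximized Hamiltonian is concave in $(h,b)$. Hence any admissible path satisfying \eqref{foc:c1e1_1}--\eqref{ic:b_1} dominates every competing feasible path, provided the discounted co-state products appearing in the transversality limit vanish, which \eqref{tvc:1e1_1}--\eqref{tvc:2e1_1} guarantee.

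The main obstacle I anticipate is justifying the transversality conditions as genuinely both necessary and sufficient in the infinite-horizon setting, since these are not automatic for unbounded horizons. Here I would lean on the linear-quadratic structure: after imposing market clearing, the canonical system \eqref{foc:he1_1}--\eqref{statebe2_1} reduces to the same linear ODE system studied in Lemma \ref{Lemma:eigenvalues} and Propositions \ref{Prop:solsystem}--\ref{Prop:solsystem_b}, which possesses one stable and one unstable eigenvalue under the condition $u^*_{c_1h}<\bar u^*_{c_1h}$. The transversality condition \eqref{tvc:2e1_1} selects the stable eigenspace and thereby pins down $\mu_{TS,AL}$ in terms of $h_{TS,AL}$ as in Proposition \ref{Prop:solsystem}, while \eqref{tvc:1e1_1} fixes the constant $\lambda_{TS,AL}$ exactly as in Proposition \ref{Prop:solsystem_b}; along this saddle path the discounted co-state products indeed vanish, so the conditions are simultaneously necessary for optimality and sufficient via concavity, which closes both directions of the equivalence.
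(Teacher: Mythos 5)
Your proposal is correct and takes essentially the same route as the paper, which offers no written proof of this lemma and simply invokes the Pontryagin maximum principle for the after-lockdown infinite-horizon problem with the inherited data $(b_{TS,L}(T),h_{TS,L}(T))$, exactly as you do; your added discussion of sufficiency via concavity and of the transversality conditions through the saddle-path structure only makes explicit what the paper leaves implicit. The one small slip is that the budget term in your Hamiltonian should carry the household's full income, $rb+y_1+p\,y_2-c_1-p\,c_2$, so that imposing market clearing $c_2=y_2$ actually delivers \eqref{statebe2_1}; this does not affect any of the stationarity, costate, or transversality conditions.
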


\begin{lemma}\label{lem:ex2}
Consider now an economy with only sector 1 being active and producing $
y_{1, L}=A\{[\xi+a(1-\xi)]\bar\ell\}^\alpha$ with $a\in(0,1).
$ Then, a given state-control quadruple $(c_{1_{TS,L}},0, h_{1_{TS,L}}, b_{1_{TS,L}})$ is optimal for the problem \eqref{DPP_1} if and only if it is solution of the following
system
\begin{eqnarray}
&& u_{c_{1_{TS,L}}}+\mu_{TS,L}\phi-\lambda_{TS,L}=0 \label{foc:c1e1} \\
&& \dot\mu_{TS,L} =(\phi+\rho)\mu-u_{h_{TS,L}} \,\,t\in[0,T)\label{foc:he1}\\
&& \lambda_{TS,L}=v_b(b_{TS,L}(T), h_{TS,L}(T)) \label{tvc:1e1} \\
&& \mu_{TS,L}=v_h(b_{TS,L}(T), h_{TS,L}(T))\label{tvc:2e1}\\
&& \dot {h}_{TS,L}=\phi(c_{1_{TS,L}} - h_{TS,L})\,\,t\in(0,T]\label{statehe1} \\
&&\dot b_{TS,L}+c_{1_{TS,L}}=rb_{TS,L}+ y_{1,L}\,\,t\in(0,T] \label{statebe1}\\
&&b_{TS,L}(0)=b_0\\
&&h_{TS,L}(0)=h_0\label{h0},
\end{eqnarray}
where $v_b(b_{TS,L}(T), h_{TS,L}(T))$,$v_h(b_{TS,L}(T),h_{TS,L}(T))$ are treated as exogenously given constants at this stage of the analysis.
\end{lemma}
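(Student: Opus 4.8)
The plan is to read \eqref{DPP_1} as a classical finite-horizon optimal control problem on $[0,T]$ with a terminal (salvage) payoff, apply Pontryagin's Maximum Principle to obtain the stated canonical system as necessary conditions, and then recover optimality from concavity. First I would fix the data of the control problem. During the lockdown good 2 cannot be consumed, so the only genuine control is $c_1$ (with $c_2\equiv 0$), the states are $(b,h)$ evolving through the linear dynamics \eqref{statehe1}--\eqref{statebe1} from the initial data \eqref{h0}, the running reward is $e^{-\rho s}u(c_1,h,0)$, and the terminal reward is the continuation value $\tilde v(T,b(T),h(T))$. Because the post-lockdown problem \eqref{eq:first_step_problem} is autonomous, its value function factorises as $\tilde v(T,b,h)=e^{-\rho T}v(b,h)$ with $v=v(0,\cdot,\cdot)$, so the salvage term is $e^{-\rho T}v(b(T),h(T))$.

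Next I would write the present-value Hamiltonian
\[
H=e^{-\rho s}u(c_1,h,0)+\tilde\lambda\,(rb+y_{1,L}-c_1)+\tilde\mu\,\phi(c_1-h),
\]
and read off the necessary conditions: the stationarity condition $e^{-\rho s}u_{c_1}-\tilde\lambda+\tilde\mu\phi=0$, the adjoint equations $\dot{\tilde\lambda}=-r\tilde\lambda$ and $\dot{\tilde\mu}=-e^{-\rho s}u_h+\phi\tilde\mu$, and, crucially, the transversality conditions at the switching time $\tilde\lambda(T)=e^{-\rho T}v_b(b(T),h(T))$ and $\tilde\mu(T)=e^{-\rho T}v_h(b(T),h(T))$, which are exactly the matching conditions of the two-stage formalism of \cite{Tomiyama85}. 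Passing to current-value costates $\lambda=e^{\rho s}\tilde\lambda$ and $\mu=e^{\rho s}\tilde\mu$ converts these into \eqref{foc:c1e1}, into \eqref{foc:he1} via $\dot\mu=(\phi+\rho)\mu-u_h$, and into the terminal conditions \eqref{tvc:1e1}--\eqref{tvc:2e1}; since $r=\rho$ the first adjoint equation gives $\dot\lambda=0$, so $\lambda_{TS,L}$ is simply the constant pinned down by \eqref{tvc:1e1}. This delivers the \emph{only if} direction.

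For the \emph{if} direction I would invoke a Mangasarian-type sufficiency argument. The dynamics are linear in $(b,h,c_1)$, so the costate-coupling terms in $H$ are affine (hence concave) in the state-control triple, and $u$ is jointly concave by assumption; consequently $H$ is concave in $(b,h,c_1)$. Provided the terminal reward $v(b,h)$ is itself concave, any admissible quadruple satisfying the canonical system is a global maximiser, giving optimality.

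The main obstacle, and the only point where the specific structure of the model is needed, is the regularity of the second-stage value function: I must know that $v$ is $C^1$ (so that $v_b,v_h$ in \eqref{tvc:1e1}--\eqref{tvc:2e1} are well defined) and concave. Both hold here because, by Lemma \ref{lem:ex1} together with the linear-quadratic specification \eqref{eqn:uquadratic}, the after-lockdown problem is solved in closed form and $v(b,h)$ emerges as an explicit concave quadratic in $(b,h)$ under the concavity parameter restrictions already imposed. With this regularity established the sufficiency conditions are met, and the equivalence stated in the lemma follows.
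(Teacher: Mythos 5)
Your proposal is correct and follows exactly the route the paper intends: the paper states Lemma \ref{lem:ex2} without a written proof, relying on the two-stage optimal control formalism of Tomiyama \cite{Tomiyama85}, and your argument---Pontryagin's principle on the finite-horizon first stage with salvage term $e^{-\rho T}v(b(T),h(T))$, yielding the matching conditions $\lambda_{TS,L}=v_b$ and $\mu_{TS,L}(T)=v_h$, plus Mangasarian sufficiency from joint concavity of $u$, linearity of the dynamics, and concavity of the quadratic continuation value---is precisely the missing derivation. The only detail worth noting is that concavity of $v$ follows already from concavity of the second-stage problem in its initial data, so you do not strictly need the explicit formula of Proposition \ref{prop:valuefunction_explicit}.
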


\begin{proposition}\label{prop:solutiontwostage}
Let $\psi_1, \psi_2$ be defined in \eqref{eq:eigenvalues}. Denote $a=r+\phi(1-a_{c_1h}),  b=1-a_{c_1h}^2, d=\phi(a_{c_1h}-1), c=\phi^2$. The optimal state-control quadruple $({c_1}_{TS,L}, 0, h_{TS,L}, b_{TS,L})$ for the infinite horizon maximization problem after lockdown, solution of \eqref{foc:c1e1}-\eqref{h0},   satisfies
\begin{align}
&c_{1_{TS,L}}(t)=a_{c_1}+a_{c_1h} [h^*_{TS, L}+(h_0-h^*_{TS, L}-D)e^{\psi_1 t}+De^{\psi_2 t}]+\nonumber\\
&\hspace{4cm}+\phi[\mu^*_{TS, L}-(h_0-h^*_{TS, L}-D) \frac{b}{a-\psi_1}e^{\psi_1 t}-D\frac{b}{a-\psi_2}e^{\psi_2 t}]-v_b(b(\tilde t), h(\tilde t))\nonumber\\
&h_{TS,L}(t)=h^*_{TS, L}+(h_0-h^*_{TS, L}-D)e^{\psi_1 t}+De^{\psi_2 t} \label{eqn:solution2sth}\\
 &b_{TS,L}(t)=e^{rt}b_0+A(e^{rt}-1)+B(e^{\psi_1 t}-e^{rt})+C(e^{\psi_2 t}-e^{rt})\label{eqn:solution2stb}
\end{align}
where
$$
   h^*_{TS,L}=\frac{\phi \lambda_{TS,L} -\phi a_{c_1}-c\mu^*_{TS,L}}{d} ,
$$ $\lambda_{TS, L}$ is given by \eqref{tvc:1e1} and
$$
\mu^*_{TS, L}=\frac{a_h d+a_{c_1 h}a_{c_1}d-a_{c_1h}\lambda_{TS,L} d-b\phi \lambda+b\phi a_{c_1}}{ad-bc},
$$
$$
	    D=\left(v_h(b_{TS,L}(\tilde t), h_{TS,L}(\tilde t))-\mu^*_{TS,L}+(h_0-h^*_{TS,L})\frac{b}{a-\psi_1}e^{\psi_1 \tilde t}\right)\frac{(a-\psi_1)(a-\psi_2)}{be^{\psi_1\tilde t}(a-\psi_2)-be^{\psi_2 \tilde t}(a-\psi_1)},
$$
$$
		    A:=\frac{y_{1,L}-a_{c_1}-a_{c_1h} h^*_{TS,L}-\phi \mu^*_{L,TS}+v_b(b_{TS,L}(\tilde t), h_{TS,L}(\tilde t))}{r},
$$
$$
		    B:=\frac{\phi(h_0-h^*_{TS,L}-D)\frac{b}{a-\psi_1}-a_{c_1h}(h_0-h^*_{TS,L}-D)}{\psi_1-r},
$$
$$
		    	C:=\frac{\phi D \frac{b}{a-\psi_2}-a_{c_1h}D}{\psi_2-r},
$$
For $t>T$, the optimal state-control quadruple $({c_1}_{TS,AL}, {c_2}_{TS,AL}, h_{TS,AL}, b_{TS,AL})$, solution to \eqref{foc:c1e1_1}-\eqref{ic:b_1}, satisfies
\begin{align*}
& c_{1_{TS,AL}}(t)=h^*_{TS,AL}+\frac{\phi+\psi_1}{\phi}(h_{TS,L}(T)-h^*_{TS,AL})e^{\psi_1(t-T)}\\
  &h_{TS,AL}(t)= h^*_{TS,AL}+(h_{TS,L}(T)- h^*_{TS,AL})e^{\psi_1 (t-T)}\\
 &b_{TS,AL}(t)=b^*_{TS,AL}+\frac{\psi+\psi_1}{\phi(r-\psi_1)}[h_{TS,AL}(t)-h^*_{TS,AL}]\\
 &c_{2_{TS,AL}}(t)\equiv y_2
\end{align*}
where
$$
    h^*_{TS,AL}=  \frac{\phi(\psi_1-r)}{(\phi+r)\psi_1}\left[rb_{TS,L}(T)+y_1+\frac{r(\phi+\psi_1)}{\phi(\psi_1-r)}h_{TS,L}(T)\right]
$$
and $b^*_{TS,AL}$ is defined in \eqref{eq:bstar}.
\end{proposition}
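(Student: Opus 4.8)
The plan is to solve each stage as a linear boundary-value problem, taking Lemmas \ref{lem:ex1} and \ref{lem:ex2} (the two-stage maximum principle, cf. \cite{Tomiyama85}) as the starting point, so that optimality reduces to the ODE systems \eqref{foc:c1e1_1}--\eqref{ic:b_1} and \eqref{foc:c1e1}--\eqref{h0}. Since $r=\rho$, the equation $\dot\lambda=(\rho-r)\lambda$ makes $\lambda$ constant in each stage, so in both phases I would eliminate $c_1$ from the first-order condition ($c_1=a_{c_1}+a_{c_1h}h+\phi\mu-\lambda$, using the normalisations $a_{c_1c_1}=a_{hh}=-1$ of the numerical section) and substitute into the state equation for $h$ and the adjoint equation for $\mu$, obtaining an autonomous affine $2\times2$ system in $(h,\mu)$. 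The key preliminary point is that the Hessian of the quadratic $u$ is identical in the two stages, the presence or absence of $c_2$ affecting only the constant forcing terms; hence both systems share the characteristic roots $\psi_1,\psi_2$ of \eqref{eq:eigenvalues}. For the lockdown stage I would verify this directly by checking that the matrix $\left(\begin{smallmatrix} d & c \\ b & a\end{smallmatrix}\right)$ has trace $a+d=\rho$ and determinant $ad-bc=-\phi(1-a_{c_1h})(\rho+2\phi)=\psi_1\psi_2$.

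For the after-lockdown infinite-horizon stage I would proceed exactly as in Section \ref{Sec:2sectors}. The stationary point of the reduced $(h,\mu)$ system is computed as in Proposition \ref{Prop:prelock}, but with $(b_{TS,L}(T),h_{TS,L}(T))$ in the role of $(b_0,h_0)$, which yields $h^*_{TS,AL}$. The transversality conditions \eqref{tvc:1e1_1}--\eqref{tvc:2e1_1} kill the coefficient of the unstable mode $e^{\psi_2(t-T)}$, leaving only the stable manifold, so $h_{TS,AL}(t)=h^*_{TS,AL}+(h_{TS,L}(T)-h^*_{TS,AL})e^{\psi_1(t-T)}$; the expressions for $c_{1_{TS,AL}}$ and $b_{TS,AL}$ then follow from the same relations \eqref{eqeq:c1h} and \eqref{eqeq:b} used in the benchmark, while $c_{2_{TS,AL}}\equiv y_2$ is the goods-market clearing condition.

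The lockdown stage is the substantive part, since it is a genuine finite-horizon problem in which both roots are retained. Here I would write the general solution of the reduced system as a stationary part $(h^*_{TS,L},\mu^*_{TS,L})$ plus $K_1e^{\psi_1 t}+K_2e^{\psi_2 t}$, the eigenvector relation forcing the $\mu$-component to carry the factor $-b/(a-\psi_i)$ on the $i$-th mode; solving the stationary equations by Cramer's rule gives the displayed $h^*_{TS,L}$ and $\mu^*_{TS,L}$. The two constants are fixed by the boundary data: the initial condition $h_{TS,L}(0)=h_0$ gives $K_1+K_2=h_0-h^*_{TS,L}$, while the terminal condition \eqref{tvc:2e1}, $\mu_{TS,L}(T)=v_h(b_{TS,L}(T),h_{TS,L}(T))$, determines the second constant, which after rearrangement is precisely $D$ (so $K_2=D$, $K_1=h_0-h^*_{TS,L}-D$). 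Finally, substituting the explicit $c_{1_{TS,L}}(t)$ into $\dot b_{TS,L}=rb_{TS,L}+y_{1,L}-c_{1_{TS,L}}$ and integrating by variation of constants produces the three particular coefficients $A,B,C$, one for the constant forcing and one for each exponential mode, with the homogeneous $e^{rt}$ coefficient fixed by $b_{TS,L}(0)=b_0$; collecting terms yields the stated $b_{TS,L}(t)$.

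The main obstacle is bookkeeping rather than conceptual. The terminal conditions \eqref{tvc:1e1}--\eqref{tvc:2e1} are expressed through the gradient of the continuation value, so one must carry $\lambda_{TS,L}=v_b$ and $\mu_{TS,L}(T)=v_h$ as given terminal data and confirm that the scalar equation for $D$ is solved by the displayed formula, paying attention to the denominator $b e^{\psi_1\tilde t}(a-\psi_2)-b e^{\psi_2\tilde t}(a-\psi_1)$. The remaining difficulty is purely algebraic: propagating $D$ through the variation-of-constants integral for $b_{TS,L}$ without sign errors, for which matching the coefficients $A,B,C$ mode by mode against the forcing $y_{1,L}-c_{1_{TS,L}}(t)$ is the cleanest route.
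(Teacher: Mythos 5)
Your proposal is correct and follows exactly the route the paper intends: the paper states Lemmas \ref{lem:ex1} and \ref{lem:ex2} and then omits the computation, and your reduction to the affine $2\times 2$ system in $(h,\mu)$ with matrix $\left(\begin{smallmatrix} d & c\\ b & a\end{smallmatrix}\right)$, the trace/determinant check identifying $\psi_1,\psi_2$, the two-mode solution pinned down by $h(0)=h_0$ and $\mu(T)=v_h$, and the variation-of-constants integration for $b$ reproduce precisely the stated constants $h^*_{TS,L},\mu^*_{TS,L},D,A,B,C$. You also correctly flag the one implicit hypothesis, namely that the constants $a,b,c,d$ presuppose the normalisation $a_{c_1c_1}=a_{hh}=-1$ used in the numerical sections.
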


It is possible to find the following explicit formula for the value function by some computations done in MATLAB.
\begin{proposition}\label{prop:valuefunction_explicit}
The value function is given by the following formula:
\begin{align}\label{eqn:v}
&v(b,h)\nonumber =(A_0+A_bb+A_h h+A_{2,b}b^2+A_{2,h}h^2+A_{b,h}bh)+\\
&+\frac{1}{\rho-2\psi_1}(\tilde A_0+\tilde A_bb+\tilde A_h h+\tilde A_{2,b}b^2+\tilde A_{2,h}h^2+\tilde A_{b,h}bh)\\
&+\frac{1}{\rho-\psi_1}(A^*_0+ A^*_bb+A^*_hh+A^*_{2,b}b^2+A^*_{2,h}h^2+A^*_{b,h}bh)
\end{align}
where the coefficients $A_0,A_b,A_h,A_{2,b},A_{2,h},A_{b,h},\tilde A_0,\tilde A_b,\tilde A_h,\tilde A_{2,b},\tilde A_{2,h},\tilde A_{b,h},A^*_0,A^*_b,A^*_h,A^*_{2,b}$\\ $A^*_{2,h},A^*_{b,h}$ are calculated using MATLAB.

\end{proposition}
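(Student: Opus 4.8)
The plan is to compute $v$ by substituting the optimal state--control path into the objective and evaluating the resulting convergent integral. By the verification argument underlying Proposition \ref{prop:solutiontwostage}, the candidate quadruple produced by the maximum principle is optimal, so that $v(b,h)=\int_0^\infty e^{-\rho s}\,u\bigl(c_1(s),h(s),c_2(s)\bigr)\,ds$, where $(c_1,c_2,h,b)$ is the optimal (stable) path issuing from the state $(b,h)$. The first thing I would record is the structural fact that drives everything: along the stable trajectory described for $t>T$ in Proposition \ref{prop:solutiontwostage}, each of $h(s)$, $c_1(s)$ and $b(s)$ is an affine function of the single exponential $e^{\psi_1 s}$, while $c_2(s)\equiv y_2$ is constant. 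Concretely, with the steady state $h^*$ of Proposition \ref{Prop:prelock} an affine function of $(b,h)$ through the shadow price $\lambda$, one has $h(s)=h^*+(h-h^*)e^{\psi_1 s}$, and by \eqref{eqeq:c1h} and \eqref{eqeq:b} both $c_1(s)$ and $b(s)$ have the same form $(\text{const})+(\text{const})\,e^{\psi_1 s}$, where the constant term and the coefficient of $e^{\psi_1 s}$ are each affine in $(b,h)$.

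Next I would substitute these expressions into the quadratic utility \eqref{eqn:uquadratic}. Since $u$ is a quadratic form in $(c_1,c_2,h)$ and each argument is affine in the single basis function $e^{\psi_1 s}$, the integrand collapses to $u\bigl(c_1(s),h(s),c_2(s)\bigr)=\alpha+\beta\,e^{\psi_1 s}+\gamma\,e^{2\psi_1 s}$, where $\alpha,\beta,\gamma$ are quadratic polynomials in the initial state $(b,h)$, the squared term in $e^{\psi_1 s}$ producing the $e^{2\psi_1 s}$ contribution. Multiplying by $e^{-\rho s}$ and integrating term by term gives $v(b,h)=\frac{\alpha}{\rho}+\frac{\beta}{\rho-\psi_1}+\frac{\gamma}{\rho-2\psi_1}$, which is exactly the three--block structure of \eqref{eqn:v}: the block with denominator $\rho-2\psi_1$ collects the $\tilde A_\bullet$ coefficients, the block with denominator $\rho-\psi_1$ the $A^*_\bullet$ coefficients, and the remaining block, with the factor $1/\rho$ absorbed into its coefficients, the $A_\bullet$ coefficients. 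Convergence of the two improper integrals is guaranteed because $\psi_1<0<\rho$ by Lemma \ref{Lemma:eigenvalues}, so that $\rho-\psi_1>0$ and $\rho-2\psi_1>0$.

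The remaining step is purely computational: expanding $\alpha,\beta,\gamma$ as quadratics in $(b,h)$ and matching coefficients identifies each of $A_0,\dots,A_{b,h}$, $\tilde A_0,\dots,\tilde A_{b,h}$ and $A^*_0,\dots,A^*_{b,h}$ in terms of the utility parameters, $\phi$, $\rho$, $r$, $y_1$, $y_2$ and $\psi_1$. I expect the \textbf{main obstacle} to be precisely this bookkeeping: the constant and $e^{\psi_1 s}$--coefficients of $c_1(s)$ and $b(s)$ already involve $h^*$, $\mu^*$ and $\lambda$, which are themselves affine in $(b,h)$, so squaring and collecting produces a large but elementary family of quadratic coefficients. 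This is the step the authors delegate to MATLAB, and it carries no conceptual difficulty beyond careful symbolic algebra; the only analytic points requiring attention are the optimality of the substituted path and the convergence condition $\rho-2\psi_1>0$, both of which are already secured by the earlier results.
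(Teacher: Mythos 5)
Your proposal is correct and follows the same route the paper evidently takes (the paper gives no written proof beyond delegating the coefficient computation to MATLAB, but the denominators $\rho-\psi_1$ and $\rho-2\psi_1$ in \eqref{eqn:v} reveal precisely your computation): along the stable after-lockdown path $c_1$, $h$ are affine in $e^{\psi_1 s}$ with coefficients affine in $(b,h)$ and $c_2\equiv y_2$, so the quadratic utility integrates term by term into the three quadratic blocks, with convergence secured by $\psi_1<0<\rho$. The only point worth making explicit is that in the linear--quadratic case $\psi_1$ is a constant independent of $\lambda$, so the denominators are genuine constants and the whole expression is exactly quadratic in $(b,h)$; you use this implicitly and it is guaranteed by the discussion at the start of Section \ref{Section:LockdownQ}.
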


		Finally, note that in Proposition \ref{prop:solutiontwostage} the constants $h^*_{TS,L}, \mu^*_{TS,L}, A, B, C, D$ and $h^*_{TS,AL}$ depend on the ``parameters" $v_b(h_{TS,L}(T), b_{TS,L}(T))$ $v_h(h_{TS,L}(T),b_{TS,L}(T))$ and on $(b_{TS,L}(T), h_{TS,L}(T))$, respectively.
To solve the problem, we need to find their explicit values.  The procedure one could use is the following. The values of $v_h(b_{TS,L}(T), h_{TS,L}(T))$, and $v_b(b_{TS,L}(T), h_{TS,L}(T))$ are given in Proposition \ref{prop:valuefunction_explicit} in terms of $b_{TS, L}(T), h_{TS, L}(T)$. Then, $(b_{TS,L}(T), h_{TS,L}(T))$ can be found by plugging $t=T$ in \eqref{eqn:solution2stb}, coupling it with the equation \eqref{eqn:solution2sth} and plugging in the expressions for the derivatives of the value function. The result is  a linear system in the unknown $(b(T), h(T))$,  easily solvable in MATLAB.

\newpage

\section*{SUPPLEMENTARY MATERIAL (For Online Pubblication)}

The purpose of this section is to show that a lockdown leads to no economic consequences after its end if the economy is the same as the one described in the main part of the paper but without habits.

Under this assumption the first order conditions are
\begin{eqnarray}
	&& u_{c_1}(c_1,c_2)=\lambda \\
	&& u_{c_2}(c_1,c_2)=p\lambda \\
	&&  \lambda = \ constant \qquad as \ we \ assume \ r=\rho \\
	&& \lim_{t\rightarrow\infty}b\lambda e^{-\rho t}=0
\end{eqnarray}
Moreover, at the equilibrium we have that
$$c_2=y_2 \qquad and \qquad \dot b+rb=y_1-c_1$$
Assuming that the marginal utility with respect to $c_1$ is an invertible function, it follows immediately that in equilibrium the consumption of good 1 is constant: $$c_1=u^{-1}_{c_1}(\lambda;y_2)= \ constant.$$
Therefore, the good 1 market equilibrium condition is a linear ODE in the variable $b$ whose solution is
$$b_t=\left(b_0+\frac{y_1-c_1}r\right)e^{rt}-\frac{y_1-c_1}r$$
Using the TVC it follows immediately that the equilibrium path of $b$ is
$$b_t=-\frac{y_1-c_1}r \qquad for \ all \ t\geq 0$$
Based on this finding, we have that good 1 consumption before, i.e. $t=0$, and during the lockdown, i.e. $t\in(0,\tilde t]$, are respectively
\begin{eqnarray*}
	&& c^{BL}_1=rb_0+y^{BL}_1 \\
	&& c^{L}_1=rb_0+y^{L}_1
\end{eqnarray*}
Observe that since $b_0$ is the same in both expressions (this is because it is a predetermined variable), then the increase in good 1 consumption during the lockdown is fully driven by an expansion of good 1 production due to the change in labor composition.
As a consequence, the economy will immediately return after the lockdown to its pre-lockdown levels since labor readjusts to its original levels, i.e.
$$y_1^{AL}=y_1^{BL} \qquad \Rightarrow \qquad c_1^{AL}=c_1^{BL},  \ and  \ \ \lambda^{AL}=\lambda^{BL}.$$
Therefore, there will not be no change in the relative price either.

We conclude that all the effects found in the paper depends on the habits and that their introduction induces also transitional dynamics in the economy.
\end{appendices}
\end{document}